\pgfplotsset{compat=1.5}
\newtheorem{theorem}{Theorem}[section]
\newtheorem{corollary}[theorem]{Corollary}
\newtheorem{lemma}[theorem]{Lemma}
\newtheorem{definition}[theorem]{Definition}
\newtheorem{claim}[theorem]{Claim}
\newtheorem{fact}[theorem]{Fact}
\newtheorem{hypothesis}[theorem]{Hypothesis}
\newenvironment{proofof}[1]{\begin{trivlist} \item {\bf Proof
#1:~~}}
  {\qed\end{trivlist}}
\newcommand{\namedref}[2]{\hyperref[#2]{#1~\ref*{#2}}}
\newcommand{\thmlab}[1]{\label{thm:#1}}
\newcommand{\thmref}[1]{\namedref{Theorem}{thm:#1}}
\newcommand{\lemlab}[1]{\label{lem:#1}}
\newcommand{\lemref}[1]{\namedref{Lemma}{lem:#1}}
\newcommand{\corlab}[1]{\label{cor:#1}}
\newcommand{\corref}[1]{\namedref{Corollary}{cor:#1}}
\newcommand{\seclab}[1]{\label{sec:#1}}
\newcommand{\secref}[1]{\namedref{Section}{sec:#1}}
\newcommand{\factlab}[1]{\label{fact:#1}}
\newcommand{\factref}[1]{\namedref{Fact}{fact:#1}}
\newcommand{\figlab}[1]{\label{fig:#1}}
\newcommand{\figref}[1]{\namedref{Figure}{fig:#1}}
\newcommand{\alglab}[1]{\label{alg:#1}}
\renewcommand{\algref}[1]{\namedref{Algorithm}{alg:#1}}
\newcommand{\deflab}[1]{\label{def:#1}}
\newcommand{\defref}[1]{\namedref{Definition}{def:#1}}
\newcommand{\questref}[1]{\namedref{Question}{quest:#1}}
\newcommand{\questlab}[1]{\label{quest:#1}}
\newtheorem{question}[theorem]{Question}
\newcommand{\hypref}[1]{\namedref{Hypothesis}{hyp:#1}}
\newcommand{\hyplab}[1]{\label{hyp:#1}}
\def \LearnedCenters    {\mdef{\textsc{LearnedCenters}}}
\def \OPT    {\mdef{\mathsf{OPT}}}
\renewcommand{\O}[1]{\ensuremath{\mathcal{O}\left(#1\right)}}
\newcommand{\tO}[1]{\ensuremath{\tilde{\mathcal{O}}\left(#1\right)}}
\newcommand{\eps}{\varepsilon}
\newcommand{\con}{\mathsf{cov}}
\newcommand{\JCH}{$\mathsf{JCH}$\xspace}
\newcommand{\JCHD}{$\mathsf{JCH}^*$\xspace}
\newcommand{\JCHDB}{$\mathsf{Balanced-JCH}^*$\xspace}
\newcommand{\jc}{Johnson Coverage\xspace}
\newcommand{\jcd}{Johnson Coverage$^*$\xspace}
\newcommand{\jcdb}{Balanced Johnson Coverage$^*$\xspace}
\newcommand{\C}{\mathcal{C}}
\newcommand{\mkc}{Max $k$-Coverage\xspace}
\newcommand{\NP}{$\mathsf{NP}$\xspace}
\newcommand{\calm}{\mathcal{M}}
\newcommand{\calp}{\mathcal{P}}
\newcommand{\calh}{\mathcal{H}}
\newcommand{\cald}{\mathcal{D}}
\newcommand{\NN}{\mathbb{N}}
\def \calF    {\mdef{\mathcal{F}}}
\def \calS    {\mdef{\mathcal{S}}}
\newcommand{\mdef}[1]{{\ensuremath{#1}}\xspace}  
\DeclareMathOperator*{\argmin}{argmin}
\DeclareMathOperator*{\polylog}{polylog}
\DeclareMathOperator*{\poly}{poly}
\newcommand{\flr}[1]{\mdef{\left\lfloor#1\right\rfloor}}              
\newcommand{\ceil}[1]{\mdef{\left\lceil#1\right\rceil}}               
\newcommand{\E}[2][]{\mdef{\underset{#1}{\mathbb{E}}\left[#2\right]}} 
\newcommand{\ignore}[1]{}
\newif\ifnotes\notestrue 
\newcommand{\samson}[1]{\textcolor{purple}{{\bf (Samson:} {#1}{\bf ) }} \marginpar{\tiny\bf
             \begin{minipage}[t]{0.5in}
               \raggedright S:
            \end{minipage}}}
\newcommand{\david}[1]{\textcolor{purple}{{\bf (David:} {#1}{\bf ) }} \marginpar{\tiny\bf
             \begin{minipage}[t]{0.5in}
               \raggedright D:
            \end{minipage}}} 
\newcommand{\samson}[1]{}
\newcommand{\david}[1]{}
\renewcommand*{\@fnsymbol}[1]{\textcolor{mahogany}{\ensuremath{\ifcase#1\or *\or \dagger\or \ddagger\or
 \mathsection\or \triangledown\or \mathparagraph\or \|\or **\or \dagger\dagger
   \or \ddagger\ddagger \else\@ctrerr\fi}}}
\providecommand{\email}[1]{\href{mailto:#1}{\nolinkurl{#1}\xspace}}
\definecolor{mahogany}{rgb}{0.75, 0.25, 0.0}
\definecolor{darkblue}{rgb}{0.0, 0.0, 0.55}
\definecolor{darkpastelgreen}{rgb}{0.01, 0.75, 0.24}
\definecolor{darkgreen}{rgb}{0.0, 0.2, 0.13}
\definecolor{darkgoldenrod}{rgb}{0.72, 0.53, 0.04}
\definecolor{darkred}{rgb}{0.55, 0.0, 0.0}
\definecolor{forestgreenweb}{rgb}{0.13, 0.55, 0.13}
\definecolor{greencss}{rgb}{0.0, 0.5, 0.0}
\definecolor{bleudefrance}{rgb}{0.19, 0.55, 0.91}
\begin{document}

\allowdisplaybreaks

\title{On Approximability of $\ell_2^2$ Min-Sum Clustering}

\author{
\hspace{0.4in}
Karthik C. S.\thanks{Rutgers University.
E-mail: \email{karthik.cs@rutgers.edu}}
\hspace{0.4in}
\and 
Euiwoong Lee\thanks{University of Michigan. 
E-mail: \email{euiwoong@umich.edu}}
\hspace{0.4in}
\and 
Yuval Rabani\thanks{The Hebrew University of Jerusalem. 
E-mail: \email{yrabani@cs.huji.ac.il}}
\hspace{0.4in}
\vspace{-0.2in}
\and 
Chris Schwiegelshohn
\thanks{Aarhus University. 
E-mail: \email{cschwiegelshohn@gmail.com}} 
\and 
Samson Zhou\thanks{Texas A\&M University. 
E-mail: \email{samsonzhou@gmail.com}}
}
\date{}

\maketitle

\begin{abstract}
The $\ell_2^2$ min-sum $k$-clustering problem is to partition an input set into clusters $C_1,\ldots,C_k$ to minimize $\sum_{i=1}^k\sum_{p,q\in C_i}\|p-q\|_2^2$. The objective is a density-based clustering and can be more effective than the traditional centroid-based clustering like $k$-median and $k$-means in capturing complex structures in data that may not be linearly separable, such as when the clusters have irregular, non-convex shapes or are overlapping. Although $\ell_2^2$ min-sum $k$-clustering is NP-hard, it is not known whether it is NP-hard to approximate $\ell_2^2$ min-sum $k$-clustering beyond a certain factor. 

In this paper, we give the first hardness-of-approximation result for the $\ell_2^2$ min-sum $k$-clustering problem. We show that it is NP-hard to approximate the objective to a factor better than $1.056$ and moreover, assuming a balanced variant of the Johnson Coverage Hypothesis, it is  NP-hard to approximate the objective to a factor better than $1.327$. 

We then complement our hardness result by giving a nearly linear time parameterized PTAS for $\ell_2^2$ min-sum $k$-clustering running in time $O\left(n^{1+o(1)}d\cdot \exp((k\cdot\varepsilon^{-1})^{O(1)})\right)$, where $d$ is the underlying dimension of the input dataset.

Finally, we consider a learning-augmented setting, where the algorithm has access to an oracle that outputs a label $i\in[k]$ for input point, thereby implicitly partitioning the input dataset into $k$ clusters that induce an approximately optimal solution, up to some amount of adversarial error $\alpha\in\left[0,\frac{1}{2}\right)$. We give a polynomial-time algorithm that outputs a $\frac{1+\gamma\alpha}{(1-\alpha)^2}$-approximation to $\ell_2^2$ min-sum $k$-clustering, for a fixed constant $\gamma>0$. Therefore, our algorithm improves smoothly with the performance of the oracle and can be used to achieve approximation guarantees better than the NP-hard barriers for sufficiently accurate oracles. 
\end{abstract}



\thispagestyle{empty}
\newpage
\setcounter{page}{1}

\section{Introduction}
Clustering is a fundamental technique that partitions an input dataset into distinct groups called clusters, which facilitate the identification and subsequent utilization of latent structural properties underlying the dataset. 
Consequently, various formulations of clustering are used across a wide range of applications, such as computational biology, computer vision, data mining, and machine learning~\cite{jain1999data,xu2005survey}. 

Ideally, the elements of each cluster are more similar to each other than to elements in other clusters. 
To formally capture this notion, a dissimilarity metric is often defined on the set of input elements, so that more closer objects in the metric correspond to more similar objects. 
Perhaps the most natural goal would be to minimize the intra-cluster dissimilarity in a partitioning of the input dataset. 
This objective is called the \emph{min-sum $k$-clustering} problem and has received significant attention due to its intuitive clustering objective~\cite{Guttmann-BeckH98a,Indyk99,Matousek00,Schulman00,BartalCR01,VegaKKR03,CzumajS07,AloiseDHP09,BehsazFSS19,BanerjeeOR21,Cohen-AddadSL21}. 

In this paper, we largely focus on the $\ell_2^2$ min-sum $k$-clustering formulation. 
Formally, the input is a set $X$ of $n$ points in $\mathbb{R}^d$ and the goal is to partition $X=C_1\dot\cup\cdots\dot\cup C_k$ into $k$ clusters to minimize the quantity
\[\min_{C_1,\ldots,C_k}\sum_{i=1}^k\sum_{p,q\in C_i}\|p-q\|_2^2,\]
where $\|\cdot\|_2$ denotes the standard Euclidean $\ell_2$ norm. 

Whereas classical centroid-based clustering problems such as $k$-means and $k$-median leverage distances between data points and cluster centroids to identify convex shapes that partition the dataset, min-sum $k$-clustering is a density-based clustering that can handle complex structures in data that may not be linearly separable. 
In particular, min-sum $k$-clustering can be more effective than traditional centroid-based clustering in scenarios where clusters have irregular, non-convex shapes or overlapping clusters.
A simple example of the ability of min-sum clustering to capture more natural structure is an input that consists of two concentric dense rings of points in the plane. 
Whereas min-sum clustering can partition the points into the separate rings, centroid-based clustering will instead create a separating hyperplane between these points, thereby ``incorrectly'' grouping together points of different rings. 
See \figref{fig:cluster} for an example of the ability of min-sum clustering to capture natural structure in cases where centroid-based clustering fails. 

Moreover, min-sum clustering satisfies Kleinberg’s consistency axiom~\cite{Kleinberg02}, which informally demands that the optimal clustering for a particular objective should be preserved when distances between points inside a cluster are shrunk and distances between points in different clusters are expanded. 
By contrast, many centroid-based clustering objectives, including $k$-means and $k$-median, do not satisfy Kleinberg's consistency axiom~\cite{MahajanNV12}. 

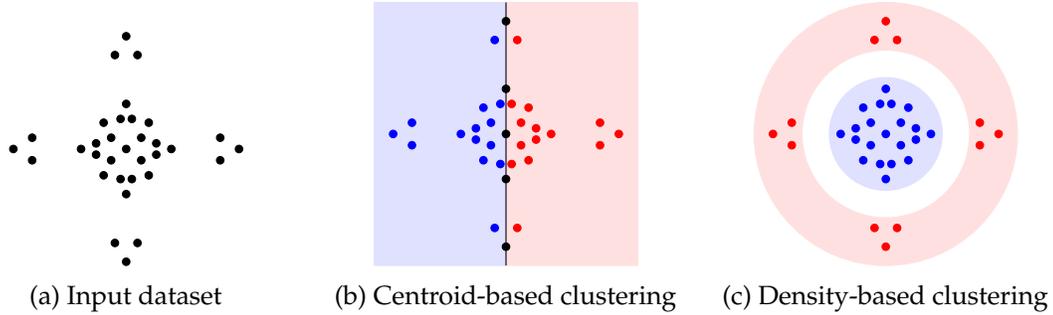
\begin{figure*}[!htb]
\centering
\begin{subfigure}[b]{0.3\textwidth}
\centering
\begin{tikzpicture}[scale=0.5]
\filldraw (0,0) circle (0.1);
\filldraw (1.2,0) circle (0.1);
\filldraw (-1.2,0) circle (0.1);
\filldraw (0,1.2) circle (0.1);
\filldraw (0,-1.2) circle (0.1);
\filldraw (0.6,0.7) circle (0.1);
\filldraw (-0.6,-0.7) circle (0.1);
\filldraw (0.6,-0.7) circle (0.1);
\filldraw (-0.6,0.7) circle (0.1);
\filldraw (0.4,0.3) circle (0.1);
\filldraw (0.4,-0.3) circle (0.1);
\filldraw (-0.4,-0.3) circle (0.1);
\filldraw (-0.4,0.3) circle (0.1);
\filldraw (0.8,0.15) circle (0.1);
\filldraw (0.8,-0.15) circle (0.1);
\filldraw (-0.8,0.15) circle (0.1);
\filldraw (-0.8,-0.15) circle (0.1);
\filldraw (0.15,0.8) circle (0.1);
\filldraw (0.15,-0.8) circle (0.1);
\filldraw (-0.15,0.8) circle (0.1);
\filldraw (-0.15,-0.8) circle (0.1);

\filldraw (2.5,0.3) circle (0.1);
\filldraw (-2.5,0.3) circle (0.1);
\filldraw (2.5,-0.3) circle (0.1);
\filldraw (-2.5,-0.3) circle (0.1);
\filldraw (3,0) circle (0.1);
\filldraw (-3,0) circle (0.1);
\filldraw (0.3,2.5) circle (0.1);
\filldraw (0.3,-2.5) circle (0.1);
\filldraw (-0.3,2.5) circle (0.1);
\filldraw (-0.3,-2.5) circle (0.1);
\filldraw (0,3) circle (0.1);
\filldraw (0,-3) circle (0.1);
\end{tikzpicture}
\caption{Input dataset}
\figlab{fig:cluster:a}
\end{subfigure}
\begin{subfigure}[b]{0.3\textwidth}
\centering
\begin{tikzpicture}[scale=0.5]
\filldraw[red!30!, opacity=0.4] (0,-3.5) rectangle +(3.5,7);
\filldraw[blue!30!, opacity=0.4] (0,-3.5) rectangle +(-3.5,7);

\filldraw (0,0) circle (0.1);
\filldraw[red] (1.2,0) circle (0.1);
\filldraw[blue] (-1.2,0) circle (0.1);
\filldraw (0,1.2) circle (0.1);
\filldraw (0,-1.2) circle (0.1);
\filldraw[red] (0.6,0.7) circle (0.1);
\filldraw[blue] (-0.6,-0.7) circle (0.1);
\filldraw[red] (0.6,-0.7) circle (0.1);
\filldraw[blue] (-0.6,0.7) circle (0.1);
\filldraw[red] (0.4,0.3) circle (0.1);
\filldraw[red] (0.4,-0.3) circle (0.1);
\filldraw[blue] (-0.4,-0.3) circle (0.1);
\filldraw[blue] (-0.4,0.3) circle (0.1);
\filldraw[red] (0.8,0.15) circle (0.1);
\filldraw[red] (0.8,-0.15) circle (0.1);
\filldraw[blue] (-0.8,0.15) circle (0.1);
\filldraw[blue] (-0.8,-0.15) circle (0.1);
\filldraw[red] (0.15,0.8) circle (0.1);
\filldraw[red] (0.15,-0.8) circle (0.1);
\filldraw[blue] (-0.15,0.8) circle (0.1);
\filldraw[blue] (-0.15,-0.8) circle (0.1);

\filldraw[red] (2.5,0.3) circle (0.1);
\filldraw[blue] (-2.5,0.3) circle (0.1);
\filldraw[red] (2.5,-0.3) circle (0.1);
\filldraw[blue] (-2.5,-0.3) circle (0.1);
\filldraw[red] (3,0) circle (0.1);
\filldraw[blue] (-3,0) circle (0.1);
\filldraw[red] (0.3,2.5) circle (0.1);
\filldraw[red] (0.3,-2.5) circle (0.1);
\filldraw[blue] (-0.3,2.5) circle (0.1);
\filldraw[blue] (-0.3,-2.5) circle (0.1);
\filldraw (0,3) circle (0.1);
\filldraw (0,-3) circle (0.1);

\draw (0,3.5) -- (0,-3.5);
\end{tikzpicture}
\caption{Centroid-based clustering}
\figlab{fig:cluster:b}
\end{subfigure}
\begin{subfigure}[b]{0.3\textwidth}
\centering
\begin{tikzpicture}[scale=0.5]
\filldraw[red!30!, opacity=0.4]  (0,0) circle (3.5);
\filldraw[white]  (0,0) circle (2.2);
\filldraw[blue!30!, opacity=0.4]  (0,0) circle (1.5);

\filldraw[blue] (0,0) circle (0.1);
\filldraw[blue] (1.2,0) circle (0.1);
\filldraw[blue] (-1.2,0) circle (0.1);
\filldraw[blue] (0,1.2) circle (0.1);
\filldraw[blue] (0,-1.2) circle (0.1);
\filldraw[blue] (0.6,0.7) circle (0.1);
\filldraw[blue] (-0.6,-0.7) circle (0.1);
\filldraw[blue] (0.6,-0.7) circle (0.1);
\filldraw[blue] (-0.6,0.7) circle (0.1);
\filldraw[blue] (0.4,0.3) circle (0.1);
\filldraw[blue] (0.4,-0.3) circle (0.1);
\filldraw[blue] (-0.4,-0.3) circle (0.1);
\filldraw[blue] (-0.4,0.3) circle (0.1);
\filldraw[blue] (0.8,0.15) circle (0.1);
\filldraw[blue] (0.8,-0.15) circle (0.1);
\filldraw[blue] (-0.8,0.15) circle (0.1);
\filldraw[blue] (-0.8,-0.15) circle (0.1);
\filldraw[blue] (0.15,0.8) circle (0.1);
\filldraw[blue] (0.15,-0.8) circle (0.1);
\filldraw[blue] (-0.15,0.8) circle (0.1);
\filldraw[blue] (-0.15,-0.8) circle (0.1);

\filldraw[red] (2.5,0.3) circle (0.1);
\filldraw[red] (-2.5,0.3) circle (0.1);
\filldraw[red] (2.5,-0.3) circle (0.1);
\filldraw[red] (-2.5,-0.3) circle (0.1);
\filldraw[red] (3,0) circle (0.1);
\filldraw[red] (-3,0) circle (0.1);
\filldraw[red] (0.3,2.5) circle (0.1);
\filldraw[red] (0.3,-2.5) circle (0.1);
\filldraw[red] (-0.3,2.5) circle (0.1);
\filldraw[red] (-0.3,-2.5) circle (0.1);
\filldraw[red] (0,3) circle (0.1);
\filldraw[red] (0,-3) circle (0.1);
\end{tikzpicture}
\caption{Density-based clustering}
\figlab{fig:cluster:c}
\end{subfigure}
\caption{Clustering of input dataset in \figref{fig:cluster:a} with $k=2$. 
\figref{fig:cluster:b} is an optimal centroid-based clustering, e.g., $k$-median or $k$-means, while the more natural clustering in \figref{fig:cluster:c} is an optimal density-based clustering, e.g., $\ell_2$ min-sum $k$-clustering.}
\figlab{fig:cluster}
\end{figure*}

On the other hand, theoretical understanding of density-based clustering objectives such as min-sum $k$-clustering is far less developed than that of their centroid-based counterparts. 
It can be shown that min-sum $k$-clustering with the $\ell_2^2$ cost function is NP-hard, using arguments from \cite{AloiseDHP09}. 
The problem is NP-hard even for $k=2$~\cite{VegaK01} in the metric case, where the only available information about the points is their pairwise dissimilarity, c.f., \secref{sec:related} for a summary of additional related work.  
In fact, for general $k$ in the metric case, it is NP-hard to approximate the problem within a $1.415$-multiplicative factor~\cite{GuruswamiI03,Cohen-AddadSL21}. 
However, no such hardness of approximation is known for the Euclidean case, i.e., $\ell_2^2$ min-sum, where the selected cost function is based on the geometry of the underlying space; the only known lower bound is the NP-hardness of the problem~\cite{AloiseDHP09,BanerjeeOR21,AKP24}. 
Thus a fundamental open question is:
\begin{question}
\questlab{quest:apx}
Is $\ell_2^2$ min-sum $k$-clustering APX-hard? 
That is, does there exist a natural hardness-of-approximation barrier for polynomial time algorithms?
\end{question}
Due to existing APX-hardness results for centroid-based clustering such as $k$-means and $k$-median~\cite{LeeSW17,Cohen-AddadS19,Cohen-AddadSL22}, it is widely believed that $\ell_2^2$ min-sum clustering is indeed APX-hard. 
Thus, there has been a line of work preemptively seeking to overcome such limitations. 
Indeed, on the positive side, \cite{InabaKI94} first showed that min-sum $k$-clustering in the $d$-dimensional $\ell_2^2$ case can be solved in polynomial time if both $d$ and $k$ are constants. 
For general graphs and fixed constant $k$, \cite{Guttmann-BeckH98a} gave a $2$-approximation algorithm using runtime $n^{\O{k}}$. 
The approximation guarantees were improved by a line of work~\cite{Indyk99,Matousek00,Schulman00}, culminating in polynomial-time approximation schemes by~\cite{VegaKKR03} for both the $\ell_2^2$ case and the metric case. 
Without any assumptions on $d$ and $k$, \cite{BartalCR01} introduced a polynomial algorithm that achieves an $\O{\frac{1}{\eps}\log^{1+\eps}n}$-multiplicative approximation. 
Therefore, a long-standing direction in the study of $\ell_2^2$ min-sum clustering is:
\begin{question}
\questlab{quest:algs}
How can we algorithmically bridge the gap between the NP-hardness of solving the $\ell_2^2$ min-sum clustering and the large multiplicative guarantees of existing approximation algorithms?
\end{question}

A standard approach to circumvent poor dependencies on the size of the input dataset is to sparsify the problem. Informally, we would like to reduce the search space by considering fewer candidate solutions and reduce the dependency on the number of input points by aggregating them.
For min-sum clustering this is a particular challenge, as a candidate solution is a partition and the cost of that partition depends on all pairwise distances between all the points.
While sparsification algorithms exist for graph clustering \cite{JambulapatiLS23,Lee23} and $k$-means clustering \cite{Cohen-AddadSS21,Cohen-AddadLSSS22}, where the output is typically called a coreset, similar constructions are not known to exist for min-sum clustering.

Another standard approach to overcome limitations inherent in worst-case impossibility barriers is to consider beyond worst case analysis. 
To that end, recent works have observed that in many applications, auxiliary information is often available and can potentially form the foundation upon which machine learning models are built. 
For example, previous datasets with potentially similar behavior can be used as training data for models to label future datasets. 
However, these heuristics lack provable guarantees and can produce embarrassingly inaccurate predictions when generalizing to unfamiliar inputs \cite{SzegedyZSBEGF13}. 
Nevertheless, \emph{learning-augmented algorithms}~\cite{MitzenmacherV20} have been shown to achieved both good algorithmic performance when the oracle is accurate, i.e., consistency, and standard algorithmic performance when the oracle is inaccurate, i.e., robustness for a wide range of settings, such as data structure design~\cite{KraskaBCDP18,Mitzenmacher18,LinLW22}, algorithms with faster runtime~\cite{DinitzILMV21,ChenSVZ22,DaviesMVW23}, online algorithms with better competitive ratio~\cite{PurohitSK18,GollapudiP19,LattanziLMV20,WangLW20,WeiZ20,BamasMS20,ImKQP21,LykourisV21,AamandCI22,Anand0KP22,AzarPT22,GrigorescuLSSZ22,KhodakBTV22,JiangLLTZ22,AntoniadisCEPS23,ShinLLA23}, and streaming algorithms that are more space-efficient~\cite{HsuIKV19,IndykVY19,JiangLLRW20,ChenIW22,ChenEILNRSWWZ22,LLLVW23}. 
In particular, \cite{ErgunFSWZ22,NguyenCN23} introduce algorithms for $k$-means and $k$-median clustering that can achieve approximation guarantees beyond the known APX-hardness limits. 

\subsection{Our Contributions}
In this paper, we perform a comprehensive study on the approximability of the $\ell_2^2$ min-sum $k$-clustering by answering \questref{quest:apx} and \questref{quest:algs}. 

\paragraph{Hardness-of-approximation of min-sum $k$-clustering.}
We first answer \questref{quest:apx} in the affirmative, by not only showing that the $\ell_2^2$ min-sum $k$-clustering is APX-hard but further giving an explicit constant NP-hardness of approximation result for the problem. 
\begin{theorem}[Hardness of approximation of $\ell_2^2$ min-sum $k$-clustering]
\thmlab{thm:mainlb}
It is NP-hard to approximate $\ell_2^2$ min-sum $k$-clustering to a factor better than $1.056$. 
Moreover, assuming the Dense and Balanced Johnson Coverage Hypothesis (\JCHDB), we have that the $\ell_2^2$ min-sum $k$-clustering is NP-hard to approximate to a factor better than $1.327$. 
\end{theorem}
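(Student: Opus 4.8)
The plan is to follow the Johnson‑Coverage paradigm of Cohen‑Addad and Schwiegelshohn, adapting it to the min‑sum objective through the elementary identity
\[
\sum_{p,q\in C_i}\|p-q\|_2^2 \;=\; 2\,|C_i|\sum_{p\in C_i}\|p-\mu_i\|_2^2,
\]
where $\mu_i$ is the centroid of $C_i$. This exhibits $\ell_2^2$ min‑sum $k$-clustering as a \emph{size‑weighted} $k$-means, and it is the reason the balanced variant of the hypothesis is the natural starting point: on instances where every near‑optimal clustering must use clusters of size $(1\pm o(1))\,n/k$, the factor $2|C_i|$ is essentially the fixed constant $2n/k$, so the min‑sum cost is, up to a $(1+o(1))$ factor, a fixed multiple of a $k$-means‑type cost. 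I would first isolate this as a reduction lemma: a gap-$\rho$ inapproximability for balanced $k$-means (on instances carrying the balance promise) yields a gap-$\rho'$ inapproximability for $\ell_2^2$ min‑sum with $\rho'$ arbitrarily close to $\rho$.

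Next I would build the gadget that turns a \JCHDB instance into a point set. Vertices of the Johnson graph $J(m,\ell)$ — the $\ell$-subsets of a ground set — are mapped to their $0/1$ indicator vectors in $\{0,1\}^m$, so that the squared Euclidean distance between two such points equals their Hamming distance; the ``hyperedges'' of the coverage instance become the data points and the candidate covering sets become the admissible clusters. A feasible cover of size $k$ then induces a clustering in which each cluster is ``pure'' (its points agree with the covering set on all but $\ell-1$ coordinates), and the dense/balanced promise of the hypothesis guarantees these clusters all have size $(1\pm o(1))\,n/k$; this gives the completeness bound $c_{\mathrm{yes}}$. For soundness I would argue that any clustering whose min‑sum cost is below the target either (i) is essentially balanced, in which case the size‑weighted identity lets me extract a cover whose value contradicts the soundness of \JCHDB, or (ii) is unbalanced, in which case the $|C_i|$-weighting already forces a large cost: a cluster of size $s$ contributes $\Theta(s^2)$ even when its points are close, so merging many points to fake purity is automatically expensive. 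The hardness factor is then $c_{\mathrm{no}}/c_{\mathrm{yes}}$, and I would optimize the free parameters ($\ell$, the arity of the coverage instance, and the auxiliary coordinates used to equalize cluster sizes) to drive this ratio to $1.327$ under \JCHDB. For the unconditional $1.056$ bound I would instantiate exactly the same reduction from the special case of Johnson Coverage for which inapproximability is known unconditionally — the $\ell=j-1$ regime, which up to lower‑order terms is a vertex‑cover/\mkc-type problem whose NP‑hardness of approximation follows from the PCP theorem — re‑tuning the parameters to the weaker gap available there.

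The main obstacle I expect is the soundness analysis for \emph{unbalanced} clusterings: I must show quantitatively that no clustering can beat the target ratio by trading balance for purity, i.e. bound, over all size profiles $(|C_1|,\dots,|C_k|)$, the minimum achievable min‑sum cost as a function of how well the clustering ``covers'' the instance, and verify that the extremal case is the balanced one. A secondary, more technical obstacle is keeping the Euclidean embedding \emph{exact} rather than approximate while simultaneously controlling cluster sizes: the standard Johnson‑graph embedding realizes the required squared distances, but the auxiliary coordinates introduced to balance cluster sizes must be added without perturbing the inter‑point distances that drive the gap, which I would handle by placing them in fresh coordinates shared within intended clusters and appealing once more to the Hamming $=$ squared‑$\ell_2$ identity. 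Finally, I would double‑check that the $(1+o(1))$ slacks accumulated through the size‑weighting reduction, the embedding, and the parameter tuning do not erode the stated constants.
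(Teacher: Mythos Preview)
Your gadget (indicator vectors of the $z$-sets, so squared $\ell_2$ distance equals Hamming distance) and your completeness analysis match the paper's. The differences, and the gap, are on the soundness side.

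First, the paper does \emph{not} route the argument through $k$-means or the centroid identity. It works directly with pair distances: for each cluster $C_i$ it partitions ordered pairs into those with $\|p-q\|_2^2<2(z-y)$, $=2(z-y)$, and $\ge 2(z-y)+2$ (counts $N_i,M_i,F_i$), shows $\sum_i N_i=o(|P|^2/k)$ from the density assumption, and invokes a combinatorial bound (Claim~3.18 of \cite{Cohen-AddadSL22}) to get $M_i\le (1+o(1))\Delta_i|C_i|$ where $\Delta_i=\max_{S}d_{i,S}$. The JCDB soundness gives $\sum_i\Delta_i\le\alpha|P|$, so the problem reduces to minimizing $\sum_i\big(|C_i|^2(2z-2y+2)-2\Delta_i|C_i|\big)$ under $\Delta_i\le|C_i|$, $\sum_i\Delta_i\le\alpha|P|$, $\sum_i|C_i|=|P|$. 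The minimum is $\frac{2|P|^2}{k}\big(\alpha\sqrt{z-y}+(1-\alpha)\sqrt{z-y+1}\big)^2$, yielding $1.327$ for $(\alpha,z-y)=(1-1/e,1)$ and $1.056$ for $(\alpha,z-y)=(7/8,2)$.

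Here is the concrete gap in your plan: you state you will ``verify that the extremal case is the balanced one.'' It is not. The minimizer of the optimization above splits the clusters into two groups of \emph{different} sizes (the first $t$ clusters absorb the entire $\alpha|P|$ budget for $\Delta_i$, the rest get $\Delta_i=0$), with $t/k=\alpha\sqrt{z-y}\big/\big(\alpha\sqrt{z-y}+(1-\alpha)\sqrt{z-y+1}\big)\neq\alpha$. If you restrict to balanced clusterings you get the larger bound $\frac{(z-y)+1-\alpha}{z-y}$ (e.g.\ $1+1/e\approx 1.368$ under \JCHDB), which is \emph{not} a valid lower bound because an unbalanced clustering can undercut it. So your ``min-sum $\approx (2n/k)\times k$-means on balanced instances'' reduction lemma does not close; you really do have to run the size-profile optimization, and when you do you recover exactly the paper's constants, not the balanced ones.

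Two smaller points. The ``auxiliary coordinates used to equalize cluster sizes'' are unnecessary: balance in the completeness case comes for free from the balanced-$\psi$ promise of \JCHDB, and no coordinates are added. And the unconditional $1.056$ does not come from the $y=z-1$ regime but from the $(7/8+\eps,3,1,\delta)$ case, i.e.\ $3$-hypergraph vertex coverage with $z=3,y=1$; the paper establishes the balanced-dense promise for this case separately (their \thmref{thm:minsumNP}) by revisiting the multilayered-PCP reduction of \cite{DGKR05,Khot02,Cohen-AddadSL22} and proving near-regularity of vertex degrees.
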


We remark that \JCHDB in the theorem statement above is simply a balanced formulation of the recently introduced Johnson Coverage Hypothesis  \cite{Cohen-AddadSL22}. 

\textbf{Fast polynomial-time approximation scheme.}
In light of \thmref{thm:mainlb}, a natural question would be to closely examine alternative conditions in which we can achieve a $(1+\eps)$-approximation to min-sum $k$-clustering, i.e., \questref{quest:algs}. 
To that end, there are a number of existing polynomial-time approximation schemes (PTAS)~\cite{Indyk99,Matousek00,Schulman00,VegaKKR03}, the best of which uses runtime $n^{\O{k/\eps^2}}$ for the $\ell_2^2$ case. 
However, as noted by \cite{CzumajS07}, even algorithms with runtime quadratic in the size $n$ of the input dataset are generally not sufficiently scalable to handle large datasets. 
In this paper, we present an algorithm with a running time that is nearly nearly linear. Specifically, we show
\begin{restatable}{theorem}{thmptas}
\label{thm:ptas}
    There exists an algorithm running in time
    $$O\left(n^{1+o(1)}d\cdot  2^{\eta\cdot k^2 \cdot \varepsilon^{-12} \log^2(k/(\varepsilon\delta))}\right),$$ 
    for some absolute constant $\eta$, that computes a $(1+\varepsilon)$-approximate solution to $\ell_2^2$ $k$-MinSum Clustering with probability $1-\delta$.
\end{restatable}
We again emphasize that the runtime of \ref{thm:ptas} is linear in the size $n$ of the input dataset, though it has exponential dependencies in both the number $k$ of clusters and the approximation parameter $\eps>0$. 
By contrast, the best previous PTAS uses runtime $n^{\O{k/\eps^2}}$~\cite{CzumajS07}, which has substantially worse dependency on the size $n$ of the input dataset. 

\paragraph{Learning-augmented algorithms.}
Unfortunately, exponential dependencies on the number $k$ of clusters can still be prohibitive for moderate values of $k$. 
To that end, we turn our attention to learning-augmented papers. 
We consider the standard \emph{label oracle} model for clustering, where the algorithm has access to an oracle that provides a label for each input point. 
Formally, for each point $x$ of the $n$ input points, the oracle outputs a label $i\in[k]$ for $x$, so that the labels implicitly partition the input dataset into $k$ clusters that induce an approximately optimal solution. 
However, the oracle also has some amount of adversarial error that respects the precision and recall of each cluster; we defer the formal definition to \defref{def:label:oracle}. 

One of the reasons label oracles have been used for learning-augmented algorithms for clustering is their relative ease of acquisition via machine learning models that are trained on a similar distribution of data. 
For example, a smaller separate dataset can be observed and used as a ``training'' data, an input to some heuristic to cluster the initial data, which we can then use to form a predictor for the actual input dataset. 
Indeed, implementations of label oracles have been shown to perform well in practice~\cite{ErgunFSWZ22,NguyenCN23}.

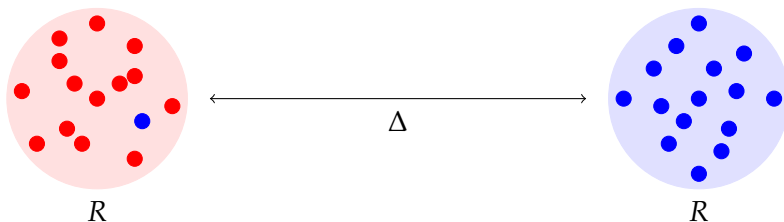
\begin{figure*}[!htb]
\centering
\begin{tikzpicture}[scale=1]
\filldraw[red!30!, opacity=0.4]  (-3,0) circle (1.2);
\filldraw[red] (-3,0) circle (0.1);
\filldraw[red] (-2.5,0.3) circle (0.1);
\filldraw[red] (-3.2,-0.6) circle (0.1);
\filldraw[red] (-3.4,-0.4) circle (0.1);
\filldraw[red] (-2.7,0.2) circle (0.1);
\filldraw[red] (-3,1) circle (0.1);
\filldraw[red] (-2,-0.1) circle (0.1);
\filldraw[red] (-4,0.1) circle (0.1);
\filldraw[red] (-3.5,0.5) circle (0.1);
\filldraw[red] (-2.5,0.7) circle (0.1);
\filldraw[red] (-3.8,-0.6) circle (0.1);
\filldraw[red] (-3.3,0.2) circle (0.1);
\filldraw[red] (-3.5,0.8) circle (0.1);
\filldraw[red] (-2.5,-0.8) circle (0.1);
\filldraw[blue] (-2.4,-0.3) circle (0.1);

\filldraw[blue!30!, opacity=0.4]  (5,0) circle (1.2);
\filldraw[blue] (5,0) circle (0.1);
\filldraw[blue] (6,0) circle (0.1);
\filldraw[blue] (4,0) circle (0.1);
\filldraw[blue] (5,1) circle (0.1);
\filldraw[blue] (5,-1) circle (0.1);
\filldraw[blue] (4.6,-0.6) circle (0.1);
\filldraw[blue] (5.4,-0.4) circle (0.1);
\filldraw[blue] (4.4,0.4) circle (0.1);
\filldraw[blue] (5.6,0.6) circle (0.1);
\filldraw[blue] (4.8,-0.3) circle (0.1);
\filldraw[blue] (5.2,0.4) circle (0.1);
\filldraw[blue] (5.3,-0.7) circle (0.1);
\filldraw[blue] (4.7,0.7) circle (0.1);
\filldraw[blue] (5.5,0.1) circle (0.1);
\filldraw[blue] (4.5,-0.1) circle (0.1);
\draw[<->] (-1.5,0) -- (3.5,0);

\node at (-3,-1.5){$R$};
\node at (5,-1.5){$R$};
\node at (1,-0.3){$\Delta$};
\end{tikzpicture}
\caption{Note that with arbitrarily small error rate, i.e., $\frac{1}{n}$, a single mislabeled point among the $n$ input points causes the resulting clustering to be arbitrarily bad for $\Delta\gg n^2\cdot R$.}
\figlab{fig:aug:bad}
\end{figure*}

We also remark that perhaps counter-intuitively, a label oracle with arbitrarily high accuracy does not trivialize the problem. 
In particular, the na\"{i}ve algorithm of outputting the clustering induced by the labels does not work. 
As a simple example, consider an input dataset where half of the $n$ points are at $x=0$ and the other half of the points are at $x=1$. 
Then for $k=2$, the clear optimal clustering is to cluster the points at the origin together, and cluster the points at $x=1$ together, which induces the optimal cost of zero. 
However, if even one of the $n$ points is incorrect, then the clustering output by the labels has cost at least $1$. 
Therefore, even with error rate as small as $\frac{1}{n}$, the multiplicative approximation of the na\"{i}ve algorithm can be arbitrarily bad. 
See \figref{fig:aug:bad} for an illustration of this example. 
Of course, this example does not rule out more complex algorithms that combines the labels with structural properties of optimal clustering and indeed, our algorithm utilizes such properties. 

We give a polynomial-time algorithm for the $\ell_2^2$ min-sum $k$-clustering that can provide guarantees beyond the computational limits of \thmref{thm:mainlb}, given a sufficiently accurate oracle. 
\begin{theorem}
\thmlab{thm:learn:aug}
There exists a polynomial-time algorithm that uses a label predictor with error rate $\alpha\in\left[0,\frac{1}{2}\right)$ and outputs a $\frac{1+\gamma\alpha}{(1-\alpha)^2}$-approximation to the $\ell_2^2$ min-sum $k$-clustering problem, where $\gamma=7.7$ for $\alpha\in\left[0,\frac{1}{7}\right)$ or $\gamma=\frac{5\alpha-2\alpha^2}{(1-2\alpha)(1-\alpha)}$ for $\alpha\in\left[0,\frac{1}{2}\right)$. 
\end{theorem}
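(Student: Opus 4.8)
The plan is to use the label oracle to manufacture $k$ ``weighted facilities'' $(c_i,m_i)$ --- a robust center $c_i$ and a size cap $m_i$ for each predicted cluster --- to reduce $\ell_2^2$ min-sum clustering to a \emph{capacitated} assignment (transportation) problem over these facilities, solve that in polynomial time, and then charge the value of its solution against $\OPT$ through the facility the labels ``should'' have produced. Throughout, I write $C_1^\ast,\dots,C_k^\ast$ for an optimal clustering with centroids $\mu_i^\ast$, sizes $n_i^\ast=|C_i^\ast|$, and $\Var(C):=\sum_{p\in C}\|p-\mu_C\|_2^2$, and I will use the two elementary facts $\Cost(C)=\sum_{p,q\in C}\|p-q\|_2^2=2|C|\,\Var(C)$ and $\sum_{p\in C}\|p-x\|_2^2=\Var(C)+|C|\,\|x-\mu_C\|_2^2$. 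Combining them, for any centers $c_i$ and any partition $C_1,\dots,C_k$ one gets the \emph{linearized} upper bound $\sum_i\Cost(C_i)\le\sum_i 2|C_i|\sum_{p\in C_i}\|p-c_i\|_2^2$. Writing $\hat C_1,\dots,\hat C_k$ for the predicted clusters and relabeling (which the algorithm need not know) so that $\hat C_i$ is matched to $C_i^\ast$, the error-rate hypothesis --- precision and recall of each $\hat C_i$ at least $1-\alpha$ --- says that $A_i:=\hat C_i\cap C_i^\ast$ has $|A_i|\ge(1-\alpha)|\hat C_i|$ and $|A_i|\ge(1-\alpha)n_i^\ast$, hence $(1-\alpha)|\hat C_i|\le n_i^\ast\le|\hat C_i|/(1-\alpha)$.

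\textbf{The algorithm.} First I would set $c_i$ to be a polynomial-time, $(1+\eps)$-approximately variance-minimizing centroid of the best inlier subset $S_i\subseteq\hat C_i$ with $|S_i|=\lceil(1-\alpha)|\hat C_i|\rceil$ --- i.e.\ one-center $k$-means on $\hat C_i$ with an $\alpha$-fraction of outliers removed. Next I would set $m_i:=\lceil|\hat C_i|/(1-\alpha)\rceil$, so that $n_i^\ast\le m_i\le n_i^\ast/(1-\alpha)^2+1$ and $\sum_i m_i\ge n$. Then I would solve the transportation problem of assigning each of the $n$ points to some $i$ at cost $2m_i\|p-c_i\|_2^2$, with at most $m_i$ points assigned to $i$, and output the resulting partition $C_1,\dots,C_k$. (If $\alpha$ is not supplied, run this for the polynomially many candidate values of $\alpha$ and keep the cheapest output.)

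\textbf{Analysis.} By the linearized bound and $|C_i|\le m_i$, the output cost is at most the transportation optimum, which is at most the cost of the feasible ``intended'' assignment that sends all of $C_i^\ast$ to facility $i$, namely $\sum_i 2m_i\sum_{p\in C_i^\ast}\|p-c_i\|_2^2=\sum_i 2m_i\bigl(\Delta_i+n_i^\ast\|c_i-\mu_i^\ast\|_2^2\bigr)$ with $\Delta_i:=\Var(C_i^\ast)$. Plugging in $m_i\le n_i^\ast/(1-\alpha)^2$ (the additive $+1$ being negligible after separately handling clusters of small optimal size), this is at most $\frac{1}{(1-\alpha)^2}\bigl(\OPT+\sum_i 2(n_i^\ast)^2\|c_i-\mu_i^\ast\|_2^2\bigr)$, so it suffices to establish the clusterwise displacement bound $n_i^\ast\|c_i-\mu_i^\ast\|_2^2\le\gamma\alpha\,\Delta_i$. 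To prove it, I would decompose $c_i-\mu_i^\ast$ along $\mu(S_i)\to\mu(S_i\cap A_i)\to\mu(A_i)\to\mu_i^\ast$ by the triangle inequality and bound each step with the ``variance transfer'' identities, valid for $T'\subseteq T$: $\mu(T)-\mu(T')=\tfrac{|T\setminus T'|}{|T|}\bigl(\mu(T\setminus T')-\mu(T')\bigr)$ and $\|\mu(T\setminus T')-\mu(T)\|_2^2\le\Var(T)/|T\setminus T'|$. The relevant set sizes are controlled because $\alpha<\tfrac12$: $|S_i\setminus A_i|\le\alpha|\hat C_i|$, $|A_i\setminus S_i|\le 2\alpha|\hat C_i|$, $|S_i\cap A_i|\ge(1-2\alpha)|\hat C_i|$, $|C_i^\ast\setminus\hat C_i|\le\alpha n_i^\ast$, and the variances appearing are all $O(\Delta_i)$ because $\Var(S_i)\le(1+\eps)\Delta_i$ (the best size-$\lceil(1-\alpha)|\hat C_i|\rceil$ subset of $A_i\subseteq C_i^\ast$ witnesses this). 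Feeding these in and optimizing the constants yields $n_i^\ast\|c_i-\mu_i^\ast\|_2^2\le\gamma\alpha\,\Delta_i$ with $\gamma$ a function of $\alpha$ bounded by $7.7$ on $[0,\tfrac17)$ and equal to $\frac{5\alpha-2\alpha^2}{(1-2\alpha)(1-\alpha)}$ on $[0,\tfrac12)$; summing over $i$ finishes.

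\textbf{Main obstacle.} The delicate part is the constant in the displacement bound. For $\alpha$ near $\tfrac12$ the ``good core'' $S_i\cap A_i$, of size $(1-2\alpha)|\hat C_i|$, shrinks toward the size of the discarded pieces, so the crude triangle-inequality/variance-transfer estimates degrade and must be sharpened --- this is exactly the source of the $\tfrac{1}{1-2\alpha}$ factor in the general $\gamma$; for small $\alpha$ one instead has to account tightly for the three displacement contributions (roughly $(1-\alpha)^{-2}$ from the $\mu(A_i)\to\mu_i^\ast$ step and a small-constant multiple of $(1-\alpha)^{-3}$ from the inlier-subset steps) to land on $\gamma=7.7$. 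A secondary but necessary ingredient is the polynomial-time subroutine computing the robust centroid $c_i$ with a $(1+\eps)$ (or even $O(1)$) guarantee, together with the harmless special-casing of clusters whose optimal size is too small for the rounding of $m_i$ to be essentially free.
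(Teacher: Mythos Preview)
Your reduction is the same as the paper's: compute robust centers $c_i$ from the predicted clusters, solve a capacitated assignment/transportation problem with per-point cost proportional to $|P_i|\cdot\|x-c_i\|_2^2$, and charge the optimum of that problem against the feasible ``intended'' assignment that sends each $C_i^\ast$ to facility $i$. The paper does exactly this (their \figref{fig:flow} and \lemref{lem:flow:to:cost:one}, \lemref{lem:flow:to:cost:two}), with one difference in the center step: rather than computing centers itself, the paper invokes the $\LearnedCenters$ procedure of Nguyen--Chakrabarti--Nguyen as a black box, which already delivers the per-cluster inequality $\sum_{x\in C_i^\ast}\|x-c_i\|_2^2\le(1+\gamma\alpha)\sum_{x\in C_i^\ast}\|x-\mu_i^\ast\|_2^2$ with precisely the stated $\gamma$. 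That inequality is equivalent to your target displacement bound $n_i^\ast\|c_i-\mu_i^\ast\|_2^2\le\gamma\alpha\,\Delta_i$, so the paper simply cites it and moves on.

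Where your proposal is incomplete is exactly the part you flag as the ``main obstacle'': you replace the cited center subroutine by $1$-means-with-outliers on $\hat C_i$ and sketch a variance-transfer argument for the displacement bound. Two concrete issues. First, the constants $7.7$ and $\frac{5\alpha-2\alpha^2}{(1-2\alpha)(1-\alpha)}$ in the theorem are artifacts of the \emph{coordinate-wise interval} algorithm and its specific analysis; there is no reason the triangle-inequality chain you outline for the (different) inlier-subset centroid should land on those exact values, and ``optimizing the constants'' is doing all the work. Second, the polynomial-time status of your subroutine is not free: finding a $(1+\eps)$-variance-minimizing size-$\lceil(1-\alpha)|\hat C_i|\rceil$ subset in arbitrary dimension is itself nontrivial, whereas the coordinate-wise algorithm the paper cites is trivially polynomial. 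If you are willing to cite the $\LearnedCenters$ guarantee, your write-up collapses to the paper's; if you insist on your own center procedure, you owe both a genuine polynomial-time algorithm and a displacement analysis that actually produces the stated $\gamma$.
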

We remark that \thmref{thm:learn:aug} does not require the true error rate $\alpha$ as an input parameter. 
Because we are in an offline setting, where can run \thmref{thm:learn:aug} multiple times with guesses for the true error rate $\alpha$, in decreasing powers of $\frac{1}{\lambda}$ for any constant $\lambda>1$. 
We can then compare the resulting clustering output by each guess for $\alpha$ and take the output the best clustering. 

\subsection{Technical Overview}

\paragraph{Hardness of approximation.}
Recently, the authors of \cite{Cohen-AddadSL22} put forth the Johnson Coverage Hypothesis (\JCH) and introduced a framework to obtain (optimal) hardness of approximation results for $k$-median and $k$-means in $\ell_p$-metrics. 
The proof of \thmref{thm:mainlb} builds on this framework.

\JCH roughly asserts that for large enough constant $z$,  given as input an integer $k$ and a collection of $z$-sets (i.e., sets each of size $z$) over some  universe, it is NP-hard to distinguish the completeness case where  there is a collection $C$ of $k$ many $(z-1)$-sets such that every input set is covered\footnote{A $(z-1)$-set covers a $z$-set if the former is a subset of the latter.} by some set in $C$, from the soundness case where every collection $C$ of $k$ many $(z-1)$-sets does not cover much more than $1-\frac{1}{e}$ fraction of the input sets (see \hypref{hyp:jch} for a formal statement). 

In this paper, we consider a natural generalization of \JCH, called \JCHDB, where we assume that the number of input sets is ``dense'', i.e., $\omega(k)$, and more importantly that in the completeness case,  the collection $C$ covers the input $z$-sets in a balanced manner, i.e., we can partition the input to $k$ equal parts such that each part is completely covered by a single set in $C$ (see \hypref{hyp:jchdb} for a formal statement).

We now sketch the proof of \thmref{thm:mainlb} assuming \JCHDB. Given a collection of $m$ many $z$-sets over a universe $[n]$ as input, we create a point for each input set, which is simply the characteristic vector of the set as a subset of $[n]$, i.e., the points are all $n$-dimensional Boolean vectors of Hamming
weight $z$. 

In the
completeness case, from the guarantees of \JCHDB, it is easy to see that the points created can be divided into $k$ equal clusters of size $m/k$ such that all the $z$-sets of a cluster are completely covered by a single $(z-1)$-set. This implies that the squared Euclidean distance between a pair of points within a cluster is exactly $2$ and thus the $\ell_2^2$ min-sum $k$-clustering cost is $k\cdot 2\cdot (m/k)(m/k-1)\approx 2m^2/k$.

On the other hand, in the soundness case, we first use the density guarantees of \JCHDB to argue that most clusters are not small. Then suppose that we had a low cost $\ell_2^2$ min-sum $k$-clustering, we look at a typical cluster and observe that the squared distance  of any two points  in the cluster must be a positive even integer, and it is exactly 2 only when the two input sets corresponding to the points intersect on a $(z-1)$-set. Thus, if the cost of the clustering is close to $\alpha \cdot 2m^2/k$ (for some $\alpha\ge 1$), then we argue (using convexity) that for a typical cluster that there must be a $(z-1)$-set that covers $(1-\alpha')m/k$ many $z$-sets in that cluster, where $\alpha'$ depends on $\alpha$. Thus, from this we decode $k$-many $(z-1)$-sets which cover a large fraction of the input $z$-sets. 

In order to obtain the unconditional NP-hardness result, much like in \cite{Cohen-AddadSL22}, we need to extend the above reduction to a more general problem. This is indeed established in \thmref{thm:mainreduction}, and after this we  prove a special case of a generalization of \JCHDB (when $z=3$) which is done  in \thmref{thm:minsumNP}  and this involved proving additional properties of the reduction of \cite{Cohen-AddadSL22} from the multilayered PCPs of \cite{DGKR05, Khot02} to $3$-Hypergraph Vertex Coverage.

\textbf{Nearly Linear Time PTAS.} An important feature of $\ell_2^2$ Min-Sum Clustering is that we can use assignments of clusters to their mean to obtain the cost of the points in the cluster, an idea previously used in \cite{Indyk99,Matousek00,Schulman00,VegaKKR03}. We show how to reduce the number of candidate means to a constant (depending only on $k$ and $\varepsilon$.
The idea here is to use $D^2$ sampling methods akin to $k$-means++ \cite{ArthurV07}. Unfortunately, by itself, it is not sufficient as there may exist clusters that have significant min-sum clustering cost, but are not detectable by $D^2$ sampling. To this end, we augment $D^2$ sampling via a careful pruning strategy that removes high costing points, increasing the relative cost of clusters of high density. 
Thereafter, we show that given sufficiently many samples, we can find a small set of suitable candidate means that are induced by a nearly optimal clustering.

What remains to be shown is how to find an assignment of points to these centers with similar cost. For this, we could use a flow-based approach, but this results in a $n^3$ running time. Instead, we employ a discretization and bucketing strategy that allows us to sparsify the point set while preserving the min-sum clustering cost, akin to coresets.

\paragraph{Learning-augmented algorithm.}
Our starting point for our learning-augmented algorithm for min-sum $k$-clustering is the learning-augmented algorithms for $k$-means clustering by \cite{ErgunFSWZ22,NguyenCN23}. 
The algorithms note that the $k$-means clustering objective can be decomposed across the points that are given each label $i\in[k]$. 
Thus we consider the subset $P_i$ of points of the input dataset $X$ that are given label $i$ by the oracle. 
Since $k$-means clustering objective can be further decomposed along the $d$ dimensions, then the algorithms consider $P_i$ along each dimension. 

The cluster $P_i$ can have an $\alpha$ fraction of incorrect points. 
The main observation is that there can be two cases. 
Either $P_i$ includes a number of ''bad'' points that are far from the true mean and thus easy to identify, or $P_i$ includes a number of ``bad'' points that are difficult to identify but also are close to the true mean and thus do not largely affect the overall $k$-means clustering cost. 
Thus the algorithm simply needs to prune away the points that are far away, which can be achieved by selecting the interval of $(1-\O{\alpha})$ points that has the best clustering cost. 
It is then shown that the resulting centers provide a good approximate solution to the $k$-means clustering cost. 

Unfortunately, we cannot immediately utilize the previous approach because min-sum $k$-clustering is a density-based clustering rather than a centroid-based clustering. 
However, it is known~\cite{InabaKI94} that we can rewrite
\[\sum_{i\in[k]}\sum_{p,q\in C_i}\|p-q\|_2^2=\sum_{i\in[k]}|C_i|\cdot\sum_{p\in C_i}\|p-c_i\|_2^2,\]
where $c_i$ is the centroid of the points in the cluster $C_i$ in an approximately optimal clustering $\mathcal{C}=\{C_1,\ldots,C_k\}$. 
We can use the learning-augmented $k$-means clustering algorithm to identify good proxies for each centroid $c_i$. 
Moreover, by our assumptions on the precision and recall of each cluster, we have that $|P_i|$ is a good estimate of $|C_i|$. 
Therefore, we have a good approximation of the cost of the optimal min-sum $k$-clustering; it remains to identify the actual clusters. 

In standard centroid-based clustering, each point is assigned to its closest center. 
However, this is not true for min-sum $k$-clustering. 
Thus, we seek alternative approaches to identifying a set of approximately $|P_i|$ to each centroid returned by the learning-augmented $k$-means algorithm. 
To that end, we define a constrained min-cost flow problem as follows. 
We create a source node $s$ and a sink node $t$, requiring $n=|X|$ flow from $s$ to $t$. 
We then create a directed edge from $s$ to each node $u_x$ representing a separate $x\in X$ with capacity $1$ and cost $0$. 
These two gadgets ensure that a unit of flow must be pushed across each node representing a point in the input dataset. 

We also create a directed edge to $t$ from each node $v_i$ representing a separate $c_i$ with capacity $\frac{1}{1-\alpha}\cdot|P_i|$ and cost $0$. 
For each $x\in X$, $i\in[k]$, create a directed edge from $u_x$ to $v_i$ with capacity $1$ and cost $\frac{1}{1-\alpha}\cdot|P_i|\cdot\|x-c_i\|_2^2$. 
These two gadgets ensure that when a flow is pushed across some node to the corresponding node representing a center, then the cost of the flow is almost precisely the cost of assigning a point to the corresponding center toward the min-sum $k$-clustering objective. 
Finally, we require that at least $(1-\alpha)\cdot|P_i|$ flow goes through node $v_i$ correpsonding to center $c_i$. 
This ensures that the correct number of points is assigned to each center consistent with the precision and recall assumptions. 

We note that the constrained min-cost flow problem can be written as a linear program. 
Therefore to identify the overall clusters, we run any standard polynomial-time algorithm for solving linear programs~\cite{Karmarkar84,Vaidya89a,Vaidya90,LeeS15,LeeSZ19,CohenLS21,JiangSWZ21}. 
It then follows by that well-known integrality theorems for min-cost flow, the resulting solution is integral and thus provides a valid clustering with approximately optimal $\ell_2^2$ min-sum $k$-clustering objective. 

\subsection{Related Works}
\seclab{sec:related}
The min-sum $k$-clustering problem was first introduced for general graphs by \cite{SahniG76}. 
The problem is complement of the max $k$-cut problem, in which the goal is to partition the vertices of an input graph into $k$ subsets as to maximize the number or weight of the edges crossing any pair of subsets, c.f.,~\cite{PapadimitriouY91}.  
\cite{Guttmann-BeckH98a} showed that the $\ell_2$ min-sum $k$-clustering problem is also closely related to the balanced $k$-median problem, in which the goal is to identify $k$ centers $c_1,\ldots,c_k$ and partition the input dataset $X$ into clusters $C_1,\ldots,C_k$ to minimize $\sum_{i=1}^k|C_i|\sum_{x\in X}\|x-c_i\|_2$.
In particular, \cite{Guttmann-BeckH98a} showed that an $\alpha$-approximation to balanced $k$-median yields a $2\alpha$-approximation to min-sum $k$-clustering. 
\cite{Guttmann-BeckH98a} then showed that balanced $k$-median can be solved in time $n^{\O{k}}$ by guessing the cluster centers and sizes, and then subsequently determining the assignment between the input points and the centers, which also results in a $2$-approximation for min-sum $k$-clustering in $n^{\O{k}}$ time. 
For the structurally different $\ell_2$ min-sum $k$-clustering problem, \cite{BehsazFSS19} achieved a polynomial-time algorithm that achieves the best known approximation of $\O{\log n}$, by considering the embedding of metric spaces into hierarchically separated trees using dynamic programming. 
However, these techniques do not immediately translate into a good approximation for $\ell_2^2$ min-sum $k$-clustering. 
Even more recently, \cite{naderi2024approximation} provided a QPTAS in metrics induced by graphs of bounded treewidth, and graphs of bounded doubling dimension.

For the prize-collecting version of $\ell_2$ min-sum $k$-clustering, \cite{HassinO10} gave a $2$-approximation algorithm in the metric setting that uses polynomial time for fixed constant $k$. 
In a separate line of work, \cite{BalcanB09,BalcanBG09} address conditions under which the clustering would be stable. 
Namely for the metric case and small $k$, they compute a clustering that is to the optimal $\ell_2$ min-sum $k$-clustering in the sense that most of the labels are correct, though the objective value may not be close to the optimal value. 

On the lower bound side,  \cite{Guttmann-BeckH98a} showed that the general min-sum $k$-clustering problem is NP-hard, while \cite{AloiseDHP09} showed that even the $\ell_2^2$ min-sum $k$-clustering problem is NP-hard even when $k=2$. 
\cite{KannKLP97} first showed that it is NP-hard to approximate non-metric min-sum $k$-clustering within a multiplicative $\O{n^{2-\eps}}$-factor for any $\eps>0$ and $k>3$. 
Recently, \cite{Cohen-AddadSL21} showed that for metric min-sum $k$-clustering, it is NP-hard to approximate within a multiplicative $1.415$-factor. 
However, prior to this work, no such hardness-of-approximation was known for the $\ell_2^2$ min-sum $k$-clustering problem. 

A popular way of obtaining polynomial time approximation schemes are coresets, which are succinct summaries of a data set with respect to a given clustering objective.
For $\ell_2^2$ min-sum clustering, the most closesly related construction is the classic $k$-means problem, as well as variants such as non-uniform $k$-clustering.
Following a long line of work~\cite{BecchettiBC0S19,FeldmanSS20,HuangV20,Cohen-AddadWZ23,WoodruffZZ23}, a $k$-means coreset in Euclidean space of size $\tilde{O}(\frac{k}{\eps^2}\cdot\min\left(\frac{1}{\eps^2},\sqrt{k}\right))$ is known to exist~\cite{Cohen-AddadSS21,Cohen-AddadLSSS22,BansalCPSS24}, which was surprisingly shown to be optimal~\cite{HuangLW24,ZhuTHH24}.

For non-uniform clustering, centers are associated with weights and the clustering cost is $\sum_{i=1}^k \sum_{p\in C_i} w_{c_i}\|p-c_i\|^2$, where $c_i$ is the center associated with the cluster $C_i$ and $w_{c_i}$ denotes its weight. Min-sum clustering is a related problem where the weight is not arbitrary, but chosen to be equal to $|C_i|$. Unfortunately, the only known coreset constructions for the weighted $k$-means problem \cite{FeldmanS12} only apply to the line metric and even in this case have size at least $(\log n)^k$.
Nevertheless, coreset based approaches have been successfully used to obtain fast algorithms with additive errors in general metric spaces, see \cite{CzumajS07}. It is unclear if these ideas can improve algorithms for $\ell_2^2$ min-sum clustering, even when using additive errors.

\subsection{Preliminaries}
We use the notation $[n]$ to denote the set $\{1,2,\ldots,n\}$ for an integer $n>0$. 
For a set $X$, we use the notation $X=A\dot\cup B$ to denote that $A$ and $B$ partition $X$, i.e., $A\cup B=X$ and $A\cap B=\emptyset$. 
For a matrix $A\in\mathbb{R}^{n\times d}$, we define its Frobenius norm as
\[\|A\|_F:=\sqrt{\sum_{i=1}^n\sum_{j=1}^d A_{i,j}^2}.\]
We use $\poly(n)$ to denote a fixed polynomial in $n$ whose degree can be determined by setting appropriate constants in the algorithms or proofs. 
We use $\polylog(n)$ to denote $\poly(\log n)$. 
For a function $f(\cdot,\ldots,\cdot)$, we use the notation $\tO{f}$ to denote $f\cdot\polylog(f)$. 

\paragraph{$k$-means clustering.}
In the Euclidean $k$-means clustering problem, the input is a dataset $X\subset\mathbb{R}^d$ and the goal is to partition $X$ into clusters $C_1,\ldots,C_k$ by assigning a centroid $c_i$ to each cluster $C_i$ as to minimize the objective
\[\min_{c_1,\ldots,c_k}\sum_{x\in X}\min_{i\in[k]}\|x-c_i\|_2^2.\]

\section{Hardness of Approximation of \texorpdfstring{$\ell_2^2$}{L22} Min-Sum \texorpdfstring{$k$}{k}-Clustering}
In this section, we show the hardness of approximation of $\ell_2^2$ min-sum $k$-clustering, i.e., \thmref{thm:mainlb}. 
We first define the relevant formulations of Johnson Coverage Hypothesis in \secref{sec:jch}. Next, in \secref{sec:inapprox} we provide the main reduction from the Johnson coverage problem to the $\ell_2^2$ min-sum $k$-clustering problem. Finally, in \secref{sec:jchdb} we prove a special case of a generalization of \JCHDB which yields the unconditional NP-hardness factor claimed in \thmref{thm:mainlb}.
\subsection{Johnson Coverage Hypothesis}
\seclab{sec:jch}
In this section, we recall the \jc problem, followed by the \jc hypothesis \cite{Cohen-AddadSL22}.

Let $n,z,y\in\mathbb N$ such that $n\ge z > y$. Let $E\subseteq \binom{[n]}{z}$ and $S\in \binom{[n]}{y}$. We define the coverage of $S$ w.r.t.\ $E$, denoted by $\con(S,E)$ as follows:
$$
\con(S,E)=\{T\in E\mid S\subset T\}.
$$

\begin{definition}[Johnson Coverage Problem]
In the $(\alpha,z,y)$-Johnson Coverage problem with $z > y \geq 1$, we are given a  universe $U:=[n]$, a collection of subsets of $U$, denoted by $E\subseteq \binom{[n]}{z}$, and a parameter $k$ as input. We would like to distinguish between the following two cases:
\begin{itemize}
\item \textbf{\emph{Completeness}}: There exists $\C:=\{S_1,\ldots ,S_k\}\subseteq  \binom{[n]}{y}$ such that 
$$\con(\C):=\underset{i\in[k]}{\cup} \con(S_i,E)=E.$$
\item \textbf{\emph{Soundness}}: For every $\C:=\{S_1,\ldots ,S_k\}\subseteq \binom{[n]}{y}$ we have ${\left|\con(\C)\right|}\le \alpha\cdot\left|E\right|$.
\end{itemize}
\label{def:gjc}
We call $(\alpha, z, z-1)$-Johnson Coverage as $(\alpha, z)$-Johnson Coverage. 
\end{definition}

Notice that $(\alpha,2)$-Johnson Coverage Problem is simply the well-studied vertex coverage problem (with gap $\alpha$). Also, notice that if instead of picking the collection $\C$ from $\binom{[n]}{y}$, we replace it with picking the collection $\C$ from $\binom{[n]}{1}$ with a similar notion of coverage, then we simply obtain the  Hypergraph Vertex Coverage problem (which is equivalent to the \mkc problem for unbounded $z$). In \figref{fig:jch} we provide a few examples of instances of the Johnson coverage problem.

\begin{figure*}[!htb]
\centering
\begin{subfigure}[b]{0.3\textwidth}
\centering
\begin{tikzpicture}[scale=0.5]
\filldraw (-1,0) circle (0.1);
\filldraw (-1.7,1) circle (0.1);
\filldraw (-1.7,2) circle (0.1);
\filldraw (-1,3) circle (0.1);

\filldraw (1,0) circle (0.1);
\filldraw (1.7,1) circle (0.1);
\filldraw (1.7,2) circle (0.1);
\filldraw (1,3) circle (0.1);

\draw (-1,0) -- (0,0.8);
\draw (-1.7,1) -- (0,0.8);
\draw (1,0) -- (0,0.8);
\draw (1.7,1) -- (0,0.8);

\draw (-1.7,2) -- (0,2.2);
\draw (-1,3) -- (0,2.2);
\draw (1.7,2) -- (0,2.2);
\draw (1,3) -- (0,2.2);

\draw (0,0.8) -- (0,2.2);

\filldraw[red] (0,0.8) circle (0.2);
\filldraw[red] (0,2.2) circle (0.2);
\end{tikzpicture}
\caption{}
\figlab{fig:jch:one:two}
\end{subfigure}
\begin{subfigure}[b]{0.3\textwidth}
\centering
\begin{tikzpicture}[scale=0.5]
\draw [rounded corners,fill=gray!90,opacity=0.5] (-1,3.4)--(-2.2,1.7)--(0.5,2.1)--cycle;
\draw [rounded corners,fill=gray!90,opacity=0.5] (1,3.4)--(2.2,1.7)--(-0.5,1.9)--cycle;
\draw [rounded corners,fill=gray!90,opacity=0.5] (-1,-0.6)--(-2.3,1.4)--(0.6,1)--cycle;
\draw [rounded corners,fill=gray!90,opacity=0.5] (1,-0.6)--(2.3,1.4)--(-0.5,1)--cycle;

\filldraw (-1,0) circle (0.1);
\filldraw (-1.7,1) circle (0.1);
\filldraw (-1.7,2) circle (0.1);
\filldraw (-1,3) circle (0.1);

\filldraw (0,0.8) circle (0.1);
\filldraw (0,2.2) circle (0.1);

\filldraw (1,0) circle (0.1);
\filldraw (1.7,1) circle (0.1);
\filldraw (1.7,2) circle (0.1);
\filldraw (1,3) circle (0.1);

\filldraw[red] (0,0.8) circle (0.1);
\filldraw[red] (0,2.2) circle (0.1);
\end{tikzpicture}
\caption{}
\figlab{fig:jch:one:three}
\end{subfigure}
\begin{subfigure}[b]{0.3\textwidth}
\centering
\begin{tikzpicture}[scale=0.5]
\draw [rounded corners,fill=gray!90,opacity=0.5] (-1,3.4)--(-2.2,1.7)--(0.5,2.1)--cycle;
\draw [rounded corners,fill=gray!90,opacity=0.5] (1,3.4)--(2.2,1.7)--(-0.5,1.9)--cycle;
\draw [rounded corners,fill=gray!90,opacity=0.5] (-1,-0.6)--(-2.3,1.4)--(0.6,1)--cycle;
\draw [rounded corners,fill=gray!90,opacity=0.5] (1,-0.6)--(2.3,1.4)--(-0.5,1)--cycle;

\filldraw (-1,0) circle (0.1);
\filldraw (-1.7,1) circle (0.1);
\filldraw (-1.7,2) circle (0.1);
\filldraw (-1,3) circle (0.1);

\filldraw (0,0.8) circle (0.1);
\filldraw (0,2.2) circle (0.1);

\filldraw (1,0) circle (0.1);
\filldraw (1.7,1) circle (0.1);
\filldraw (1.7,2) circle (0.1);
\filldraw (1,3) circle (0.1);

\draw (-1,0) -- (0,0.8);
\draw (-1.7,1) -- (0,0.8);
\draw (1,0) -- (0,0.8);
\draw (1.7,1) -- (0,0.8);

\draw (-1.7,2) -- (0,2.2);
\draw (-1,3) -- (0,2.2);
\draw (1.7,2) -- (0,2.2);
\draw (1,3) -- (0,2.2);

\draw (0,0.8) -- (0,2.2);
\end{tikzpicture}
\caption{}
\figlab{fig:jch:two:three}
\end{subfigure}
\caption{Examples of input instances of the Johnson Coverage Hypothesis for $k=2$. 
\figref{fig:jch:one:two} shows an example of a completeness instance of $\left(0.7,2,1\right)$, since all subsets of size $2$, i.e., all edges, can be covered by $k=2$ choices of subset of size $1$, i.e., two vertices.
\figref{fig:jch:one:three} shows an example of a completeness instance of $\left(0.7,3,1\right)$, since all subsets of size $3$ can be covered by $k=2$ vertices. 
\figref{fig:jch:two:three} shows an example of a soundness instance of $\left(0.7,3,2\right)$, since at most $2\le 0.7\cdot 4$ subsets of size $3$ can be covered by any choice of $k=2$ edges.  
}
\figlab{fig:jch}
\end{figure*}
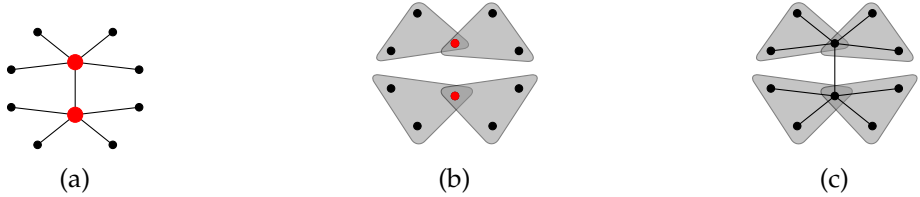

We now state the following hypothesis. 

\begin{hypothesis}[Johnson Coverage Hypothesis (\JCH)~\cite{Cohen-AddadSL22}]\hyplab{hyp:jch}
For every constant $\varepsilon>0$, there exists a constant $z:=z(\varepsilon)\in\mathbb{N}$ such that deciding the $\left(1-\frac{1}{e}+\varepsilon,z\right)$-Johnson Coverage Problem is \NP-Hard. 
\end{hypothesis}

Note that since Vertex Coverage problem is a special case of the \jc problem, we have that the \NP-Hardness of $(\alpha,z)$-Johnson Coverage problem is already known for $\alpha=0.944$ \cite{AKS11} (under unique games conjecture). 

On the other hand,  if we replace picking the collection $\C$ from $\binom{[n]}{z-1}$ by picking from $\binom{[n]}{1}$, then for the Hypergraph Vertex Coverage problem, we do know that for every $\eps>0$ there is some constant $z$ such that the Hypergraph Vertex Coverage problem is \NP-Hard to decide for a factor of $\left(1-\frac{1}{e}+\varepsilon\right)$ \cite{F98}. 

For continuous clustering objectives, a dense version of \JCH is sometimes needed to prove inapproximability results (see \cite{Cohen-AddadSL22} for a discussion on this). Thus, we state: 

\begin{hypothesis}[Dense Johnson Coverage Hypothesis (\JCHD)~\cite{Cohen-AddadSL22}]
\JCH holds for instances $(U,E,k)$ of \jc problem where $|E|=\omega(k)$.\hyplab{hyp:jchd}
\end{hypothesis}

More generally, let $(\alpha, z, y)$-\jcd problem be the special case of the 
$(\alpha, z, y)$-\jc problem where the instances satisfy $|E| = \omega(k \cdot |U|^{z - y - 1})$. 
Then \JCHD states that for any $\eps > 0$, there exists $z = z(\eps)$ such that 
$(1-1/e + \eps, z, z-1)$-\jcd is \NP-Hard. 
This additional property has always been obtained in literature by looking at the hard instances that were constructed. 
In \cite{Cohen-AddadS19}, where the authors proved the previous best inapproximability results for continuous case $k$-means and $k$-median, 
it was observed that hard instances of $(0.94,2,1)$-\jc constructed in~\cite{AKS11} can be made to satisfy the above property.

Now we are ready to define the variant of \JCH needed for proving inapproximability of $\ell_2^2$ min-sum $k$-clustering.
For any two non-empty finite sets $A,B$, and a constant $\delta\in [0,1]$, we say a function $f:A\to B$ is $\delta$-balanced if for all $b\in B$ we have: 
$$  \left|\{a\in A: f(a)=b\}\right| \le (1+\delta)\cdot \frac{|A|}{|B|}.
$$

We then put forth the following hypothesis.

\begin{hypothesis}[Dense and Balanced Johnson Coverage Hypothesis (\JCHDB)]\hyplab{hyp:jchdb}
\JCH holds for instances $(U,E,k)$ of \jc problem where $|E|=\omega(k)$ and in the completeness case there   exists $\C:=\{S_1,\ldots ,S_k\}\subseteq  \binom{[n]}{z-1}$  and a {\bf 0-balanced} function $\psi:E\to [k]$ such that for all $T\in E$ we have $S_{\psi(T)}\subset T$.
\end{hypothesis}

More generally, let $(\alpha, z, y,\delta)$-\jcdb problem be the special case of the 
$(\alpha, z, y)$-\jcd problem where the instances admit a $\delta$-balanced function $\psi:E\to [k]$ in the completeness case which  partitions $E$ to $k$ parts, say $E_1\dot\cup \cdots \dot\cup E_k$ such that for all $i\in[k]$ we have $\con(S_i,E_i)=E_i$ and $|E_i|\le \frac{|E|}{k}\cdot (1+\delta)$. 
Then \JCHDB states that for any $\eps > 0$, there exists $z = z(\eps)$ such that 
$(1-1/e + \eps, z, z-1,0)$-\jcdb is \NP-Hard.

As with the case of \JCHD, the balanced addition to \JCHD is also quite natural and candidate constructions typically give this property for free. To support this point, we will prove some special case of this.

In \cite{Cohen-AddadSL22} the authors had proved the following special case of \JCHD.

\begin{theorem}[\cite{Cohen-AddadSL22}]
\thmlab{thm:CKLNP}
For any $\eps > 0$, given a simple 3-hypergraph $\mathcal{H} = (V, H)$ with $n = |V|$, it is NP-hard to distinguish between the following two cases:
\label{thm:np-hard}
\begin{itemize}
\item {\bf Completeness:} There exists $S \subseteq V$ with $|S| = n/2$ that intersects every hyperedge. 
\item {\bf Soundness:} Any subset $S \subseteq V$ with $|S| \leq n/2$ intersects at most a $(7/8 + \eps)$ fraction of hyperedges. 
\end{itemize}
Furthermore, under randomized reductions, the above hardness holds when $|H| = \omega(n^2)$. 
\label{thm:hvc}
\end{theorem}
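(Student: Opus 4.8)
The plan is to re-derive \thmref{thm:CKLNP} by a PCP composition in the style of Dinur--Guruswami--Kindler--Regev: compose the \emph{multilayered} outer PCP of \cite{DGKR05,Khot02} with a $3$-query long-code inner gadget, and then amplify the number of hyperedges by a ``cloud'' blow-up. First I would recall the outer verifier. For a $3$-SAT instance and parameters $\ell$ (layers) and $R$ (alphabet size), the multilayered PCP produces a layered constraint system $\Psi$ on layers $L_0,\dots,L_\ell$ with projection constraints between every pair of layers, enjoying perfect completeness and the \emph{weak density} property in the soundness case: one cannot pick sets $I_i\subseteq L_i$ with $|I_i|\ge\delta|L_i|$ for every $i$ while avoiding, between every pair of layers, every constraint whose two endpoints both fall into the selected sets; quantitatively, the fraction of such ``internal'' constraints satisfiable by a bounded-size list labeling is $o_\delta(1)$. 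The parameters $\ell,R,\delta$ are fixed as functions of $\eps$ at the end.

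Next I would build $\calH=(V,H)$. Attach to each variable $u$ of $\Psi$ a Boolean cube $\{0,1\}^{[R]}$, folded over the constant function and over the projections as in long-code constructions; $V$ is the disjoint union of these cubes with uniform vertex weights, so that a ``dictator'' set $\{x : x_c=0\}$ inside a cube has relative weight exactly $\tfrac{1}{2}$. The hyperedges are generated by a $3$-query test: sample a layer triple together with the projection constraints linking them, and sample three cube-points in a correlated way so that (i) if all three cubes are set to dictators induced by a common satisfying labeling of $\Psi$, then at least one queried point lies outside the corresponding dictator (equivalently, the dictator vertex cover hits the triple), and (ii) the three queried points are marginally near-uniform and nearly independent, so that a random such triple is hit by any \emph{fixed} vertex set of relative weight $\le\tfrac{1}{2}$ with probability at most $\tfrac{7}{8}+o(1)$. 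For \textbf{completeness}, a satisfying labeling $\sigma$ of $\Psi$ yields $S=\bigcup_u\{x : x_{\sigma(u)}=1\}$: folding makes $S$ well defined with $|S|=(\tfrac{1}{2}+o(1))n$, which we round to exactly $n/2$ by adding $O(1)$ isolated dummy vertices, and property (i) makes $S$ hit every hyperedge.

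The heart of the argument is \textbf{soundness}. Suppose $S$ with $|S|\le n/2$ hits more than a $\tfrac{7}{8}+\eps$ fraction of $H$; write $f_u=\mathbf{1}_S$ restricted to the cube of $u$, so the cubes have average density $\le\tfrac{1}{2}$. After standard cleaning/truncation steps one may assume the cubes relevant to an $\Omega(\eps)$-fraction of the hyperedges have density $\le\tfrac{1}{2}+o(1)$ while the local covering probability there still exceeds $\tfrac{7}{8}+\eps/2$. A Fourier-analytic step then shows that beating $\tfrac{7}{8}$ on a product-like distribution using density-$\le\tfrac{1}{2}$ functions is impossible unless these functions share influential coordinates through the projections --- this is exactly where the budget $|S|\le n/2$ pins the random bound at $1-(1/2)^3=\tfrac{7}{8}$. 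Decoding each variable to a uniformly random coordinate among its $O(1/\eps)$ most influential ones yields a bounded-size list labeling that, by the weak-density property of $\Psi$, satisfies a nonnegligible fraction of internal constraints between some pair of layers, contradicting outer soundness. I expect \emph{this step} --- designing the $3$-query test so that it is simultaneously complete for measure-exactly-$\tfrac{1}{2}$ dictators, sound at the threshold $\tfrac{7}{8}$ against \emph{all} measure-$\le\tfrac{1}{2}$ sets (not merely per-cube-balanced ones, which is what makes the folding bookkeeping delicate), and compatible with \emph{weak density} rather than the stronger projection soundness --- to be the main obstacle.

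Finally, for the density addendum, I would apply a cloud blow-up to the hard instance $(V_0,H_0)$ produced above: replace each vertex by $t=t(n)\to\infty$ twin copies with no hyperedge inside a twin-cloud, and replace each hyperedge $\{a,b,c\}$ by all $t^3$ hyperedges across the three corresponding clouds. Then $|V|=t|V_0|$ and $|H|=t^3|H_0|$, so $|H|/|V|^2=t\cdot|H_0|/|V_0|^2\to\infty$, giving $|H|=\omega(|V|^2)$. Completeness transfers by taking all copies of a covering set; for soundness, given a set of relative weight $\le\tfrac{1}{2}$, let $\rho_a\in[0,1]$ be the fraction of cloud $a$ it contains, note $\sum_a\rho_a\le|V_0|/2$, and observe that the covered fraction equals $\mathbb{E}_{\{a,b,c\}\sim H_0}[\,1-(1-\rho_a)(1-\rho_b)(1-\rho_c)\,]$, a \emph{multilinear} function of $(\rho_a)$ whose maximum over the polytope $\{\rho\in[0,1]^{V_0}:\sum_a\rho_a\le|V_0|/2\}$ is attained at a $0/1$ vertex, i.e.\ at an honest vertex set of size $\le|V_0|/2$, hence is at most $\tfrac{7}{8}+\eps$. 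The ``randomized reduction'' caveat in the statement is a minor technicality that can be absorbed into this padding/blow-up step. Collecting the parameter choices $\ell,R,\delta,t$ as functions of $\eps$ then completes the proof.
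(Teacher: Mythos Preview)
The paper does not prove \thmref{thm:CKLNP} from scratch; it is imported from \cite{Cohen-AddadSL22}. What the paper \emph{does} do (in \secref{sec:jchdb}) is reproduce the reduction verbatim in order to extract the extra regularity and balancedness properties needed for \thmref{thm:minsumNP}; the soundness is simply quoted as a lemma from \cite{Cohen-AddadSL22}. So at the level of a ``plan'', your outline---multilayered PCP from \cite{DGKR05,Khot02} composed with a long-code inner test on cubes $\{\pm 1\}^{\Sigma}$, completeness via dictators $\{x:x_{\sigma(v)}=-1\}$, soundness via Fourier/influence decoding, and a blow-up step for the density addendum---is on the right track and aligns with what the cited proof (as reproduced in the paper) actually does.

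There is one concrete discrepancy worth flagging. You describe the inner test as sampling a \emph{layer triple} (three different layers). In the reduction reproduced in \secref{sec:jchdb}, hyperedges live over a \emph{pair} of layers $i<j$: one vertex $(v_i,x)$ in $U_i$ and \emph{two} vertices $(v_j,y),(v_j,z)$ in $U_j$, with $y,z$ correlated through $x$ and the projection $\pi_e$. This two-points-in-one-cube structure is what makes the analysis go through cleanly (it is essentially a H{\aa}stad-style NAE/long-code test), and it is also why the weak-density/smoothness of the multilayered PCP suffices. Your triple-layer variant is not obviously equivalent, and the step you yourself flag as ``the main obstacle'' becomes cleaner once you switch to the one-small-side/two-large-side test.

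Your density amplification by clouds is a fine argument, and the paper's exposition also ends with a cloud construction (there used to pass from the weighted to the unweighted instance). One small gap: ``multilinear $\Rightarrow$ maximum at a $0/1$ vertex of $\{\rho\in[0,1]^{V_0}:\sum_a\rho_a\le |V_0|/2\}$'' is not literally true, since that polytope has vertices with one fractional coordinate. What you need is the pipage observation: along any swap direction $(\rho_a,\rho_b)\mapsto(\rho_a+\delta,\rho_b-\delta)$ the coverage $1-\prod(1-\rho)$ is convex, so the maximum on the feasible segment is at an endpoint, and iterating gives an integral maximizer. With that one-line fix your soundness-after-blow-up step is correct.
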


In \secref{sec:jchdb}, we further analyze the proof of the above theorem and prove the following:

\begin{theorem}
\thmlab{thm:minsumNP}
\thmref{thm:CKLNP} holds even with the following additional completeness guarantee for all $\delta>0$: there   exists $S:=\{v_1,\ldots ,v_k\}\subseteq  V$  and a $\delta$-balanced function $\psi:H\to [k]$ such that for all $e\in H$ we have $v_{\psi(e)}\in e$.
\end{theorem}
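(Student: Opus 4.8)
The plan is to open up the reduction of \cite{Cohen-AddadSL22} underlying \thmref{thm:CKLNP} --- which goes from the multilayered PCPs of \cite{DGKR05,Khot02} to $3$-Hypergraph Vertex Coverage via a long-code gadget --- and to verify that its completeness witness already carries the extra balance. Recall the shape of that reduction: each variable $u$ of the (layered) label-cover instance, with label set $[L_u]$, becomes a \emph{block} of hypergraph vertices indexed by a (folded) cube $\{0,1\}^{L_u}$, and each hyperedge is sampled by a test distribution that first draws a constraint and then a triple of query points in the blocks it touches. A perfect labeling $\sigma$ of the PCP produces the cover $S=\bigcup_u\{x\in\{0,1\}^{L_u}:x_{\sigma(u)}=1\}$, of size exactly $n/2$; since here $k=|S|=n/2$, a $\delta$-balanced $\psi\colon H\to[k]$ is precisely an assignment of each hyperedge to an incident vertex of $S$ under which no vertex of $S$ is assigned more than $(1+\delta)\,|H|/k$ hyperedges.

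First I would reduce the existence of such a $\psi$ to a single combinatorial condition on $S$ via max-flow/min-cut (equivalently, a Hall-type deficiency condition): form the bipartite transportation network with a unit source at every hyperedge, a sink of capacity $(1+\delta)|H|/k$ at every vertex of $S$, and an arc $(e,v)$ whenever $v\in e\cap S$. All $|H|$ units route iff, for every $B\subseteq S$, the number of hyperedges whose $S$-vertices all lie in $B$ is at most $(1+\delta)|B|\,|H|/k$; substituting $A=S\setminus B$, this says that every $A\subseteq S$ with $|A|=t$ must be touched by at least $(1+\delta)\,t\,|H|/k-\delta|H|$ hyperedges. So it suffices to prove that $S$ is ``spread'': small subsets of $S$ capture few hyperedges and moderate subsets touch a roughly proportional fraction.

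Second, I would derive this spreading property from the symmetry and regularity of the gadget. Inside a single block $u$, the group of permutations of $[L_u]$ fixing $\sigma(u)$ acts on $\{0,1\}^{L_u}$, permutes $S$ restricted to block $u$ transitively, and fixes the test distribution; hence all vertices of $S$ in block $u$ have the same hyperedge-degree, and pairs and triples of $S$-vertices inside a block occur with symmetric frequencies. Combining this with the regularity of the (smoothed) multilayered PCP --- each variable carries an equal share of the test distribution --- and with the fact that every hyperedge contains at least one vertex of $S$, a short inclusion--exclusion estimate bounds, for any $A\subseteq S$ of size $t$, the number of hyperedges missing $A$ by $(1-t/k)|H|+o_\delta(|H|)$ and the number captured by $A$ by $(t/k+o_\delta(1))|H|$. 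Crucially, hyperedges with two or three vertices in $S$ only \emph{decrease} the captured count, so we need uniformity of $S$-degrees but no bound on how many $S$-vertices a hyperedge has. Taking the PCP soundness error small enough and the parallel repetition large enough as a function of $\delta$ pushes the $o_\delta$ terms below $\delta$, which verifies the flow condition and thus yields the required $\psi$; the soundness clause of \thmref{thm:CKLNP} is untouched because only its completeness object has been re-inspected.

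The main obstacle is that the PCP is \emph{multilayered}: different layers host blocks of different sizes and different constraint degrees, so block-internal symmetry alone is insufficient and the cross-block regularity must be extracted from, or engineered into, the concrete construction of \cite{DGKR05,Khot02,Cohen-AddadSL22}. The delicate case is an adversarial $A\subseteq S$ that concentrates on the dictator sets of an atypically under-represented layer; ruling it out amounts to tracking per-layer and per-block incidence counts through the reduction, and, if they are not already balanced, pre-balancing the instance by taking blow-ups of the under-represented layers --- an operation that keeps the same labeling $\sigma$ as a perfect completeness solution and does not improve the soundness bound. The remaining ingredients --- the flow reduction and the within-block symmetry computation --- are routine.
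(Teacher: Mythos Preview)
The paper's route is more concrete than your Hall/flow plan and avoids both obstacles you flag. It proves two explicit facts about the reduction of \cite{Cohen-AddadSL22}. First, summing the layer-pair sampling weights of the multilayered PCP, it shows that every vertex $(v,x)$---in any layer---has incident hyperedge weight $(3\pm o(1))$ times its vertex weight; the layer weights are chosen so that this sum balances across layers, and hence no blow-ups are needed. Second, by inspecting the dictatorship test it shows that for each $(v_i,x)\in S$ the incident hyperedges split $\tfrac12\pm O(\delta)$ into those with only $(v_i,x)$ in $S$ and those with all three vertices in $S$. These two facts yield an explicit \emph{fractional} balanced assignment---singletons go to their sole $S$-vertex, triples are split $\tfrac13$ each---under which every $v\in S$ receives load $(1\pm o(1))|H|/k$; the integral $\psi$ then follows from integrality of the transportation polytope, with no Hall condition checked over all $A\subseteq S$.

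Your symmetry step does not go through as stated: coordinate permutations of $\{\pm1\}^{L_u}$ fixing $\sigma(u)$ preserve Hamming weight and hence are not transitive on $\{x:x_{\sigma(u)}=-1\}$, and bit-flips on the remaining coordinates do not preserve the test distribution either (the law of $(y_b,z_b)$ given $x_{\pi(b)}=+1$ differs from that given $-1$). Within-block degree uniformity does hold, but because each of $x,y,z$ has uniform marginal, not via a group action. More importantly, degree uniformity alone is not enough for the Hall condition you wrote down: the relevant quantity is the \emph{weighted} load $\sum_{e\ni v}1/|e\cap S|$, and establishing that this is constant over $v\in S$ requires exactly the paper's half--half observation (uniform singleton-in-$S$ and triple-in-$S$ counts per vertex of $S$), not merely equal degrees. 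Once that is in hand, your flow argument and the paper's direct fractional construction become equivalent.
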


This result will be used to prove the unconditional NP-hardness of approximating $\ell_2^2$ min-sum $k$-clustering problem.

\subsection{Inapproximability of \texorpdfstring{$\ell_2^2$}{L22} min-sum \texorpdfstring{$k$}{k}-clustering}
\seclab{sec:inapprox}
 
In the following subsection, we will prove the below theorem. 

\begin{theorem}
\thmlab{thm:mainreduction}
Assume $(\alpha, z, y,\delta)$-\jcdb is \NP-Hard. 
For every constant $\varepsilon>0$,  given a point-set $P\subset \mathbb{R}^{d}$ of size $n$ (and $d=\O{\log n}$) and a parameter $k$ as input, it is \NP-Hard to distinguish between the following two cases:
\begin{itemize}
\item \textbf{\emph{Completeness}}:  
There exists partition $P_1^*\dot\cup\cdots \dot\cup P_k^*:=P$  such that $$\sum_{i\in [k]}\sum_{p,q\in P_i^*}\|p-q\|_2^2\le (1+3\delta)\cdot (z-y)\cdot \rho n^2/k,$$
\item \textbf{\emph{Soundness}}: For every partition $P_1\dot\cup\cdots \dot\cup P_k:=P$  we have  $$\sum_{i\in [k]}\sum_{p,q\in P_i}\|p-q\|_2^2\ge (1-o(1))\cdot \left(\alpha\cdot \sqrt{z-y}+(1-\alpha)\cdot \sqrt{z-y+1}\right)^{2} \cdot \rho n^2/k,$$
\end{itemize}
for some constant $\rho>0$.
\end{theorem}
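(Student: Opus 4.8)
The plan is to give a gap-preserving reduction. Given a $(\alpha, z, y,\delta)$-\jcdb instance $(U=[N],E,k)$ (so $|E|=\omega(k\cdot N^{z-y-1})$), map each $z$-set $T\in E$ to its characteristic vector $x_T\in\{0,1\}^N$, a Boolean vector of Hamming weight $z$, so that $\|x_T-x_{T'}\|_2^2=2(z-|T\cap T'|)$ is a positive even integer, equal to $2(z-y)$ exactly when $|T\cap T'|=y$. Set $\rho=2$ and $n:=|E|$; to reach dimension $\O{\log n}$ I would apply, as a final step, a Johnson--Lindenstrauss / terminal embedding preserving all pairwise squared distances up to a factor $1\pm\eps_0$, which rescales every clustering cost by $1\pm\eps_0$ and is absorbed below. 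Completeness is immediate: take the partition $E_1\dot\cup\cdots\dot\cup E_k$ and the $y$-sets $S_1,\dots,S_k$ from the $\delta$-balanced completeness case and let $P_i^*:=\{x_T:T\in E_i\}$; since $S_i\subseteq T\cap T'$ for all $T,T'\in E_i$, every within-cluster squared distance is at most $2(z-y)$, so $\sum_{p,q\in P_i^*}\|p-q\|_2^2\le 2(z-y)|E_i|^2\le 2(z-y)(1+\delta)^2 n^2/k^2$, and summing over $i$ gives $\le (1+\delta)^2\cdot 2(z-y)n^2/k\le (1+3\delta)(z-y)\rho n^2/k$ for $\delta\le 1$ (choosing $\eps_0$ small compared to $\delta$).

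For soundness, fix any partition $P_1\dot\cup\cdots\dot\cup P_k=P$; since $T\mapsto x_T$ is injective it induces a partition $E=E_1\dot\cup\cdots\dot\cup E_k$. Put $s_i:=|E_i|$ and let $a_i:=\max_{S\in\binom{[N]}{y}}|\con(S,E_i)|$. Picking a maximizer $S_i$ per cluster and using that the sets $\con(S_i,E_i)\subseteq E_i$ are disjoint, the \jcdb soundness guarantee yields $\sum_i a_i\le |\con(\{S_1,\dots,S_k\})|\le\alpha n$. Next, the density assumption $kN^{z-y-1}=o(n)$ lets me fix a threshold $L$ with $N^{z-y-1}=o(L)$ and $kL=o(n)$; then $\sum_{i:\,s_i<L}s_i\le kL=o(n)$, so the small clusters jointly hold only $o(n)$ points and may be dropped, while for each cluster with $s_i\ge L$ every $(y{+}1)$-set covers at most $O(N^{z-y-1})=o(s_i)$ members of $E_i$, so the number of ordered within-cluster pairs with $|T\cap T'|\ge y+1$ is $o(s_i^2)$.

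The heart of the proof is the following estimate for each cluster with $s_i\ge L$:
\[
\sum_{p,q\in P_i}\|p-q\|_2^2\;\ge\;2\left(\sqrt{z-y}\cdot a_i+\sqrt{z-y+1}\cdot(s_i-a_i)\right)^{2}-o(s_i^2).
\]
I would prove it by writing $\sum_{p,q\in P_i}\|p-q\|_2^2=2s_i\sum_{T\in E_i}\|x_T-\mu_i\|_2^2$ for the centroid $\mu_i$, and then, after splitting within-cluster pairs by $|T\cap T'|$: the $o(s_i^2)$ pairs with $|T\cap T'|\ge y+1$ contribute nonnegatively, pairs with $|T\cap T'|=y$ contribute $2(z-y)$, and pairs with $|T\cap T'|<y$ contribute at least $2(z-y+1)$, which gives $\sum_{p,q\in P_i}\|p-q\|_2^2\ge 2(z-y+1)s_i^2-2P_i-o(s_i^2)$ where $P_i$ is the number of ordered pairs with $|T\cap T'|\ge y$. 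Combined with the elementary inequality $(z-y+1)s^2-a^2\ge\left(\sqrt{z-y}\,a+\sqrt{z-y+1}\,(s-a)\right)^2$, the estimate reduces to a \emph{convexity/counting bound} showing $P_i$ is not much larger than $a_i^2+c_{z,y}\,a_i(s_i-a_i)+o(s_i^2)$ for a constant $c_{z,y}$ — i.e.\ that the number of within-cluster $y$-covered pairs is controlled by the single best $y$-set. This step is where I expect the main difficulty: the naive bound $P_i\le\binom{z}{y}s_i a_i$ is far too lossy (it would only yield hardness for $\alpha<1/\binom{z}{y}$), so one must use \emph{distinctness} of the $z$-sets together with density to argue that many $y$-sets cannot simultaneously attain coverage $\Theta(s_i)$ on a heavily overlapping family; this is precisely where the finer combinatorial structure of the reduction of \cite{Cohen-AddadSL22} and of the Johnson scheme would be invoked, and it is the analogue of proving Theorem~\ref{thm:minsumNP} for the unconditional case.

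Finally I would sum the per-cluster estimate over the at most $k$ large clusters: either the clustering is so unbalanced that $\sum_i s_i^2$ exceeds a large constant times $n^2/k$, in which case already the crude bound $\sum_i\Cost_i\ge(2(z-y)-o(1))\sum_i s_i^2$ beats the target; or $\sum_i s_i^2=O(n^2/k)$, so all the $o(s_i^2)$ error terms (and the lower-order terms from using $k=o(n)$) sum to $o(n^2/k)$. In the latter case, applying Cauchy--Schwarz $\sum_i y_i^2\ge\frac1k(\sum_i y_i)^2$, monotonicity of $x\mapsto\left(\sqrt{z-y}\,x+\sqrt{z-y+1}\,(m-x)\right)^2$ in $x$, $\sum_i a_i\le\alpha n$, and $\sum_{i:\,s_i\ge L}s_i\ge(1-o(1))n$ gives
\[
\sum_{i\in[k]}\sum_{p,q\in P_i}\|p-q\|_2^2\;\ge\;(1-o(1))\cdot\frac{2n^2}{k}\left(\alpha\sqrt{z-y}+(1-\alpha)\sqrt{z-y+1}\right)^{2},
\]
which is the asserted soundness bound with $\rho=2$; undoing the $1\pm\eps_0$ embedding distortion only weakens this by a further $1-o(1)$ factor.
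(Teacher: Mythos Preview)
Your construction, completeness analysis, dimensionality reduction, and the soundness skeleton (density kills pairs with intersection $>y$; picking a best $y$-set per cluster gives $\sum_i a_i\le\alpha n$) all match the paper. You also correctly identify the crux: bounding, in each large cluster, the number $M_i$ of ordered pairs $(T,T')$ with $|T\cap T'|=y$, and you are right that the naive bound $M_i\le\binom{z}{y}a_is_i$ is too lossy. The gap is that you overcomplicate the resolution. The bound you need is not a quadratic target of the shape $P_i\lesssim a_i^2+c_{z,y}\,a_i(s_i-a_i)$, but simply the \emph{linear} estimate
\[
M_i\;\le\;(1+o(1))\,a_i s_i+o(s_i^2),
\]
which the paper invokes as a black box (Claim~3.18 in \cite{Cohen-AddadSL22}); its proof is a self-contained counting/convexity argument about set systems that uses only density, and it has nothing to do with \thmref{thm:minsumNP} (that theorem is about \emph{producing} balanced hard \jcdb instances, not about analyzing the reduction). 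Once $M_i\le a_is_i+o(s_i^2)$ is in hand, your per-cluster estimate is immediate from the identity
\[
(z{-}y{+}1)\,s^2 - a s - \bigl(\sqrt{z{-}y}\,a+\sqrt{z{-}y{+}1}\,(s{-}a)\bigr)^2
= a\,(s{-}a)\bigl(\sqrt{z{-}y{+}1}-\sqrt{z{-}y}\bigr)^2 \ge 0,
\]
so your ``elementary inequality'' is in fact slack.

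Your aggregation step is a pleasant variation on the paper's. The paper keeps the expression $\sum_i\bigl(2(z{-}y{+}1)s_i^2-2a_is_i\bigr)$ and minimizes it by a two-group argument: set $a_i=s_i$ on the largest clusters until the budget $\sum_i a_i\le\alpha n$ is exhausted, then by convexity equalize sizes within each group and optimize over the split point. Your route---first pass to the per-cluster square $y_i^2$ with $y_i=\sqrt{z{-}y}\,a_i+\sqrt{z{-}y{+}1}\,(s_i-a_i)$, then apply Cauchy--Schwarz $\sum_i y_i^2\ge\frac1k(\sum_i y_i)^2$ together with $\sum_i y_i\ge(1-o(1))n\bigl(\alpha\sqrt{z{-}y}+(1-\alpha)\sqrt{z{-}y{+}1}\bigr)$---is cleaner and lands on the same bound; your case split on $\sum_i s_i^2$ to absorb the $o(\cdot)$ errors is also fine. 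So the only missing ingredient in your proposal is the citation (or independent proof) of the $M_i\le a_is_i+o(s_i^2)$ bound.
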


Putting together the above theorem with \thmref{thm:minsumNP} (i.e., NP-hardness of $(7/8+\varepsilon,3,1,\delta)$-\jcdb problem for all $\varepsilon,\delta>0$), we obtain the NP-hardness of approximating $\ell_2^2$ min-sum $k$-clustering. The above theorem also immediately yields the hardness of approximating $\ell_2^2$ min-sum $k$-clustering under \JCHDB (i.e., conditional NP-hardness of $(1-1/e+\varepsilon,z,z-1,0)$-\jcdb problem for all $\varepsilon>0$ and some $z=z(\varepsilon)\in\mathbb{N}$). This completes the proof of~\thmref{thm:mainlb}.

\subsubsection{Proof of \texorpdfstring{\thmref{thm:mainreduction}}{Theorem (Main Reduction)}}
Fix $\varepsilon>0$ as in the theorem statement.  Let $\eps':=\eps/11$. Starting from a hard instance of $\left( \alpha, z, y, \delta \right)$-\jcdb problem $(U,E,k)$ with $|U| = n$ and $|E| = \omega(n^{z-y})$, 

\paragraph{Construction.} 
The $\ell_2^2$ min-sum $k$-clustering instance consists of the set of points to be clustered $P\subseteq \{0,1\}^{n}$ where for every $T\in E$ we have the point $p_t\in P$ defined as follows:
\[p_T:=\sum_{i\in T}\vec{e}_i\].
From the construction, it follows that for every distinct $T,T'\in E$, we have:
\begin{align}\label{eqsub}
 \|p_T-p_{T'}\|_2^2= 2z -2\cdot |T\cap T'|. 
\end{align}

\paragraph{Completeness.}
Suppose there exist $S_1,\ldots ,S_k\in \binom{[n]}{y}$   and a $\delta$-balanced function $\psi:E\to [k]$ such that for all $T\in E$ we have $S_{\psi(T)}\subset T$. 
Then, we define a clustering $C_1\dot\cup\cdots\dot\cup C_k=P$  as follows: for every $p_T\in P$, we  include it in cluster $C_{\psi(T)}$.
We now provide an upper bound on the $\ell_2^2$ min-sum cost of clustering $\mathcal{C}:=\{C_1,\ldots ,C_k\}$.
~\eqref{eqsub} implies that for each $C_i$, for any pair $T, T'$ such that
$p_T,p_{T'} \in C_i$, we have that $S_{\psi(T)}\subseteq T\cap T'$ and thus $\|p_T - p_{T'} \|^2_2 \le 2z-2y$.
Thus, the cost of clustering $\C$ is bounded as follows:
$$\sum_{i\in[k]}\sum_{p,q\in C_i}\|p-q\|_2^2\le \sum_{i\in[k]}\left(\left(|C_i|^2-|C_i|\right)\cdot 2\cdot (z-y)\right)\le 2|P|\cdot \left(\frac{|P|}{k}-1\right)\cdot (1+\delta)^2\cdot (z-y).$$
The completeness analysis is now completed by noting that $(1+\delta)^2\le 1+3\delta$. Thus we turn to the soundness analysis.

\paragraph{Soundness.}
Consider the optimal $\ell_2^2$ min-sum $k$-clustering $\mathcal{C}:=\{C_1,\ldots ,C_k\}$ of the instance (i.e., $C_1\dot\cup\cdots\dot\cup C_k=P$). 
We aim at showing that the $\ell_2^2$ min-sum $k$-clustering cost of $\C$ is at
least $((z-y) + 2(1 - \alpha) - o(1))\ell |P|$. Given a cluster $C_i$,
let $E_i := \{ T \in E : p_T\in C_i \}$ be the collection of $z$-sets of $E$ corresponding to $C_i$. 
For each
$S \in \binom{[n]}{y}$, we define the \emph{degree of $S$ in $C_i$} to be 
\[d_{i, S} := \left|\{T \mid S \subset T \text{ and } p_T \in C_i\}\right|.\]
Let $t_1 = 2z - 2y$ and $t_2 = 2z - 2y + 2$. 
For each cluster $C_i$, let
\begin{align*}
F_i &= \bigg| \{ (p, q) \in C_i^2 : \| p - q \|_2^2 \geq t_2 \} \bigg| \\
M_i &= \bigg| \{ (p, q) \in C_i^2 : \| p - q \|_2^2=t_1  \} \bigg| \\
N_i &= \bigg| \{ (p, q) \in C_i^2 : \| p - q \|_2^2 < t_1  \} \bigg|.
\end{align*}
By~\eqref{eqsub}, $F_i$, $M_i$, and $N_i$ are the number of (ordered) pairs within $C_i$ whose corresponding $z$-sets in the \jcdb instance intersect in $<y$, $=y$, and $>y$ elements respectively. 
Let $\Delta_i=\underset{S\in\binom{[n]}{y} }{\max}\ d_{i,S}$ and observe that $\Delta_i\le|C_i|$. 
We write the total cost of the clustering as follows.
\begin{align}
& \sum_{i\in[k]}\sum_{p,q\in C_i}\|p-q\|_2^2 
\geq  \sum_{i\in[k]}  \bigg( F_i t_2 + M_i t_1 \bigg)   
= \sum_{i\in[k]}  \bigg( (|C_i|^2 - M_i) t_2 + M_i t_1 - N_i t_2 \bigg)
\label{eq:cost}
\end{align}

We first upper bound $\sum_{i\in[k]} (N_i t_2) $. 
For each $z$-set $T$, there are at most $\left(\sum_{\ell=y+1}^{z} \binom{z}{\ell}\binom{n-z}{z-\ell}\right)$ many sets in $\binom{[n]}{z}$ that intersect with $T$ in at least $y + 1$ elements. 
Therefore, we have:
\begin{align*}
    \sum_{i\in[k]}N_i&\le \left(\sum_{i\in[k]} |C_i|\cdot \left(\sum_{\ell=y+1}^{z} \binom{z}{\ell}\binom{n-z}{z-\ell}\right)\right)\\
    &\le \sum_{i\in[k]}|C_i|\cdot 2^z\cdot (z-y)\cdot n^{z-y-1}\\
    &=\O{|P|\cdot n^{z-y-1}}.
\end{align*}

By the definition of \jcdb, $|P|=|E| = \omega(k \cdot n^{z-y-1})$, 
so $\sum_{i\in[k]} N_i t_2 $ is at most $o( |P|^2/k)$.

Next, we invoke a technical claim in \cite{Cohen-AddadSL22} which bounds $M_i / |C_i|$ in terms of $\Delta_i$ and $|C_i|$. 
\begin{claim}[Claim 3.18 in \cite{Cohen-AddadSL22}]
For every $i \in [k]$, either $|C_i| = o(|P| / k)$ or $M_i / |C_i| \leq (1 + o(1)) \Delta_i + o(|C_i|)$. 
\label{claim:sumdegsquared}
\end{claim}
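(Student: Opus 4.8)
The plan is to first reduce the claim to an upper bound on $\sum_{S\in\binom{[n]}{y}}d_{i,S}^2$, and then to estimate that sum by splitting the $y$-sets according to whether their degree in $C_i$ is a non-negligible fraction of $|C_i|$. For the reduction, note that every ordered pair $(T,T')$ with $p_T,p_{T'}\in C_i$ and $|T\cap T'|=y$ is classified by the $y$-set $S:=T\cap T'$, which is contained in both $T$ and $T'$; grouping the pairs counted by $M_i$ according to the value of $S$, for each fixed $S$ both coordinates range over the $d_{i,S}$ sets of $E_i$ containing $S$, so there are at most $d_{i,S}^2$ such pairs with $T\cap T'=S$, whence $M_i\le\sum_{S\in\binom{[n]}{y}}d_{i,S}^2$. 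I would also record the identity $\sum_S d_{i,S}=\binom{z}{y}|C_i|$ (count incidences $(S,T)$ with $S\subset T\in E_i$), which is the only global handle on the degree sequence.

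The naive estimate $\sum_S d_{i,S}^2\le\Delta_i\sum_S d_{i,S}=\binom{z}{y}\,\Delta_i|C_i|$ loses the constant factor $\binom{z}{y}$, which is fatal since the claim demands the leading constant $1+o(1)$. So the crux is to show that the $y$-sets carrying the bulk of the degree mass together cover essentially each $z$-set of $E_i$ only once. Fix a threshold $\tau$ with $\tau=o(|C_i|)$, to be chosen at the end, and call a $y$-set $S$ \emph{heavy} if $d_{i,S}\ge\tau$; from $\sum_S d_{i,S}=\binom{z}{y}|C_i|$ there are at most $m_0:=\binom{z}{y}|C_i|/\tau$ heavy sets. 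The key input is that any fixed set of size at least $y+1$ is contained in at most $\binom{n-y-1}{z-y-1}=O(n^{z-y-1})$ members of $\binom{[n]}{z}$, hence in at most $O(n^{z-y-1})$ members of $E_i$; since we are in the case $|C_i|=\Omega(|P|/k)$, the density built into the \jcdb instance ($|E|=\omega(k\,n^{z-y-1})$, equivalently $|P|/k=\omega(n^{z-y-1})$) gives $|C_i|=\omega(n^{z-y-1})$, so a fixed $(y+1)$-set lies in only $o(|C_i|)$ members of $E_i$. For two distinct heavy $y$-sets their union has size at least $y+1$, so the Bonferroni (inclusion--exclusion) inequality applied to the families $\{T\in E_i:S\subset T\}$ over all heavy $S$ yields $\sum_{S\text{ heavy}}d_{i,S}\le|C_i|+\binom{m_0}{2}\cdot o(|C_i|)$.

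It then remains to choose $\tau$ so that $\tau=o(|C_i|)$ while $\binom{m_0}{2}\cdot o(|C_i|)$ is still $o(|C_i|)$; since $m_0=\binom{z}{y}|C_i|/\tau$ and the ``$o(|C_i|)$'' above is $O(n^{z-y-1})$, a choice such as $\tau=|C_i|\cdot\bigl(n^{z-y-1}/|C_i|\bigr)^{1/3}$ works, using $n^{z-y-1}/|C_i|\to 0$. With this $\tau$ one gets $\sum_{S\text{ heavy}}d_{i,S}\le(1+o(1))|C_i|$, hence $\sum_{S\text{ heavy}}d_{i,S}^2\le\Delta_i\sum_{S\text{ heavy}}d_{i,S}\le(1+o(1))\Delta_i|C_i|$, while for the remaining light $y$-sets $\sum_{S\text{ light}}d_{i,S}^2\le\tau\sum_{S\text{ light}}d_{i,S}\le\tau\binom{z}{y}|C_i|=o(|C_i|^2)$. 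Adding the two contributions and using $M_i\le\sum_S d_{i,S}^2$ from the first step gives $M_i\le(1+o(1))\Delta_i|C_i|+o(|C_i|^2)$, which is the claim after dividing by $|C_i|$.

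I expect the heavy-set step to be the main obstacle: beating the constant $\binom{z}{y}$ requires the observation that two $y$-sets of large degree cannot sit inside many common $z$-sets of $E_i$, and this is precisely where one is forced to invoke the density of the Johnson-coverage instance (without density a fixed $(y+1)$-set could be contained in a constant fraction of $E_i$, and the argument breaks). The rest --- choosing $\tau$, verifying the $o(\cdot)$ bookkeeping, and the trivial case $|C_i|=o(|P|/k)$ permitted by the statement --- is routine.
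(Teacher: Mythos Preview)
The paper does not give its own proof of this claim; it is imported verbatim from \cite{Cohen-AddadSL22} (as Claim~3.18 there) and used as a black box in the soundness analysis. So there is no in-paper argument to compare against.

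On its own merits your proof is correct. The reduction $M_i\le\sum_{S}d_{i,S}^2$ and the double-counting identity $\sum_S d_{i,S}=\binom{z}{y}|C_i|$ are both valid, and the heavy/light split together with the Bonferroni bound is exactly the right device to remove the spurious constant $\binom{z}{y}$: two distinct $y$-sets have union of size at least $y+1$, hence can lie together in at most $O(n^{z-y-1})$ members of $E_i$, which is $o(|C_i|)$ once the density hypothesis $|E|=\omega(k\cdot n^{z-y-1})$ is combined with $|C_i|=\Omega(|P|/k)$. Your threshold $\tau=|C_i|\cdot(n^{z-y-1}/|C_i|)^{1/3}$ balances the two requirements ($\tau=o(|C_i|)$ and $m_0^2\cdot n^{z-y-1}=o(|C_i|)$) correctly. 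The only cosmetic point is the implicit passage from ``not $o(|P|/k)$'' to ``$\Omega(|P|/k)$'', which is the informal reading used throughout this style of hardness argument and is how the claim is applied downstream. Incidentally, the label \texttt{claim:sumdegsquared} in the source suggests the original proof in \cite{Cohen-AddadSL22} also routes through $\sum_S d_{i,S}^2$, so your reconstruction is likely close in spirit to theirs.
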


We can thus lower bound the cost of the clustering in \eqref{eq:cost} as follows:
\begin{align}
 \sum_{i\in[k]}\sum_{p,q\in C_i}\|p-q\|_2^2 
&\ge \left(\sum_{i\in[k]}   |C_i|^2 (2z-2y+2) \right) -  \left(\sum_{i\in[k]} 2\Delta_i|C_i|    \right) - o\left(|P^2|/k+\sum_{i\in[k]}   |C_i|^2\right)\label{eq:cost1}
\end{align}

Thus, we now look at upper bounding $\sum_{i\in[k]} 2\Delta_i|C_i|$. From the soundness case assumption, we have that $s:=\sum_{i\in [k]}\Delta_i \le \alpha\cdot |E|$. Without loss of generality, we may assume that $|C_1|\ge |C_2|\ge \cdots \ge |C_k|$. Let $t\in[k]$ be smallest integer such that  $\sum_{i\in[t]}|C_i|>s$. 
Since $\Delta_i\le|C_i|$, then $\sum_{i\in[k]} 2\Delta_i|C_i|$ is maximized when $\Delta_i=|C_i|$ for all $i\in [t]$. Thus, $\sum_{i\in[k]} 2\Delta_i|C_i|\le \sum_{i\in[t]} 2|C_i|^2$, and we can rewrite \eqref{eq:cost1} as follows:
\begin{align}
 \sum_{i\in[k]}\sum_{p,q\in C_i}\|p-q\|_2^2 
&\geq  \left(\sum_{i\in[t]}   |C_i|^2 (2z-2y) \right) +\left(\sum_{i=t+1}^k   |C_i|^2 (2z-2y+2) \right)  - o\left(|P^2|/k+\sum_{i\in[k]}   |C_i|^2\right) \label{eq:cost2}
\end{align} 
Then the quantity $\left(\sum_{i\in[t]}   |C_i|^2 (2z-2y) \right) +\left(\sum_{i=t+1}^k   |C_i|^2 (2z-2y+2) \right)$ is minimized when for all $i\in[t]$, we have all $|C_i|$'s to be equal and for all $i\in\{t+1,\ldots ,k\}$, we have all $|C_i|$'s to be equal (by convexity). Thus, $$\left(\sum_{i\in[t]}   |C_i|^2 (z-y) \right) +\left(\sum_{i=t+1}^k   |C_i|^2 (z-y+1) \right)\ge \left(   \frac{\alpha^2 |P|^2}{t} (z-y) \right) +\left(   \frac{(1-\alpha)^2|P|^2}{(k-t)} (z-y+1) \right).$$

We may rewrite the left side as follows:
$$ \frac{|P|^2}{k}\left(   \frac{\alpha^2\cdot (z-y) }{t/k}   +    \frac{(1-\alpha)^2\cdot (z-y+1)}{1-(t/k)}  \right)$$

If we look at the first derivative of the above expression w.r.t. $t/k$, then we have that the minima of the above expression is attained when:
$$\frac{(1-\alpha)^2\cdot (z-y+1)}{(1-(t/k))^2}=\frac{\alpha^2\cdot (z-y)}{(t/k)^2}$$
Simplifying, we obtain:
$$t=k\cdot \left(\frac{\alpha\cdot \sqrt{z-y}}{\alpha\cdot \sqrt{z-y}+(1-\alpha)\cdot \sqrt{z-y+1}}\right)$$

Returning to the cost of clustering, we have from \eqref{eq:cost2}:
\begin{align*}
 \sum_{i\in[k]}\sum_{p,q\in C_i}\|p-q\|_2^2 
&\geq   (1-o(1))\cdot  \frac{2|P^2|}{k} \cdot \left(   \frac{\alpha^2\cdot (z-y) }{t/k}   +    \frac{(1-\alpha)^2\cdot (z-y+1)}{1-(t/k)}  \right)\\
&\geq (1-o(1))\cdot  \frac{2|P^2|}{k} \cdot \left(   \frac{\alpha^2\cdot (z-y) }{t/k}   +  \left(\frac{\alpha^2\cdot (z-y) }{t/k} \cdot\frac{1-(t/k)}{t/k}\right) \right)\\
&=  (1-o(1))\cdot  \frac{2|P^2|}{k} \cdot \left(  \frac{\alpha^2\cdot (z-y) }{  (t/k)^2}  \right)\\
&\ge (1-o(1))\cdot  \frac{2|P^2|}{k} \cdot    \left(\alpha\cdot \sqrt{z-y}+(1-\alpha)\cdot \sqrt{z-y+1}\right)^{2} 
\end{align*} 

\paragraph{Dimensionality reduction.}
The proof of the theorem with the reduced dimension (i.e., $d=O(\log n)$) of the hard instances follows from the Johnson-Lindenstrauss lemma. 
Elaborating, given a set of $n$ points in $\mathbb{R}^d$, we have that the $\ell_2^2$ min-sum $k$-clustering cost of a given partition $\{C_1,\ldots,C_k\}$   expressed as $\sum_{i=1}^k   \sum_{p,q \in C_i} \|p-q\|^2_2$. 
Thus, applying the Johnson-Lindenstrauss lemma with target dimension $\O{\log n/\eps^2}$ for small enough $\eps$, yields an instance where the $\ell_2^2$ min-sum $k$-clustering cost of any clustering $C$ is within a factor $(1+\eps)$ of the $\ell_2^2$ min-sum $k$-clustering cost of $C$ in the original $d$-dimensional instance. 
It follows that the gap is preserved up to a $(1+\eps)$ factor and the theorem follows. 
Note that this can be made deterministic (for example, see the result of Engebretsen et al.~\cite{EIO02}). 

\subsection{Proof of \texorpdfstring{\thmref{thm:minsumNP}}{Theorem (Min Sum NP)}}\seclab{sec:jchdb}


\thmref{thm:minsumNP} follows from observing additional properties of the reduction of \cite{Cohen-AddadSL22} from the multilayered PCPs of \cite{DGKR05, Khot02} to $3$-Hypergraph Vertex Coverage. The description of the reduction is taken verbatim from \cite{Cohen-AddadSL22}. We first describe the multilayered PCPs that we use. 

\begin{definition}
An $\ell$-layered PCP $\calm$ consists of
\begin{itemize}
\item An $\ell$-partite graph $G = (V, E)$ where $V = \cup_{i=1}^{\ell} V_i$. Let $E_{i,j} = E \cap (V_i \times V_j)$. 
\item Sets of alphabets $\Sigma_1, \dots, \Sigma_{\ell}$. 
\item For each edge $e = (v_i, v_j) \in E_{i,j}$, a surjective projection $\pi_e : \Sigma_j \to \Sigma_i$. 
\end{itemize}
Given an assignment $(\sigma_i : V_i \to \Sigma_i)_{i \in [\ell]}$, an edge $e = (v_i, v_j) \in E_{i,j}$ is {\em satisfied} if $\pi_e(\sigma_j(v_j)) = \sigma_i(v_i)$. 
There are additional properties that $\calm$ can satisfy.
\begin{itemize}
\item $\eta$-smoothness: For any $i < j$, $v_j \in V$, and $x, y \in \Sigma_j$, $\Pr_{(v_i, v_j) \in E_{i,j}} [\pi_{(v_i, v_j)}(x) = \pi_{(v_i, v_j)}(y)] \leq \eta$. 
\item Path-regularity: Call a sequence $p = (v_1, \dots, v_{\ell})$ {\em full path} if $(v_i, v_{i+1}) \in E_{i, i+1}$ for every $1 \leq i < \ell$,
and let $\calp$ be the distribution of full paths obtained by (1) sampling a random vertex $v_1 \in V_1$ and (2) for $i = 2, \dots, \ell$, sampling $v_i$ from the neighbors of $v_{i-1}$ in $E_{i-1, i}$. 
$\calm$ is called {\em path-regular} if for any $i < j$, sampling $p = (v_1, \dots, v_{\ell})$ from $\calp$ and taking $(v_i, v_j)$ is the same as sampling uniformly at random from $E_{i,j}$. 
\end{itemize}
\end{definition}

\begin{theorem}
\thmlab{thm:dgkr}
\cite{DGKR05, Khot02}
For any $\tau, \eta > 0$ and $\ell \in \NN$, given an $\ell$-layered PCP $\calm$ with $\eta$-smoothness and path-regularity, it is NP-hard to distinguish between the following cases.
\begin{itemize}
\item {\bf Completeness: } There exists an assignment that satisfies every edge $e \in E$.
\item {\bf Soundness: } For any $i < j$, no assignment can satisfy more than an $\tau$ fraction of edges in $E_{i,j}$. 
\end{itemize}
\end{theorem}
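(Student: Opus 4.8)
The plan is to reopen the completeness analysis of the reduction of \cite{Cohen-AddadSL22} from the layered PCP of \thmref{thm:dgkr} to $3$-Hypergraph Vertex Coverage and read off, in addition to the hitting set $S$ of size $n/2$, a near-uniform way to charge each hyperedge to one of its vertices in $S$. Write $(V,H)$ for the resulting hypergraph, set $k:=|S|=n/2$, and identify $[k]$ with $S$; then a $\delta$-balanced $\psi:H\to[k]$ with $v_{\psi(e)}\in e$ for all $e$ is exactly an assignment of every hyperedge to one of its incident $S$-vertices in which no vertex is charged more than $(1+\delta)|H|/k$ times (and the existence of any such $\psi$ already forces $S$ to hit every hyperedge, so it subsumes the original completeness statement).

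The first step is to reduce to a fractional statement. Since $S$ is a hitting set, the bipartite incidence graph between $H$ and $S$ has no isolated hyperedge, and I claim it admits a \emph{perfectly balanced} fractional charging: weights $x_{e,v}\ge 0$ for $v\in e\cap S$ with $\sum_{v}x_{e,v}=1$ for every $e$ and $\sum_{e}x_{e,v}=|H|/k$ for every $v\in S$. Granting this, consider the transportation polytope of fractional chargings in which each $v\in S$ is given the \emph{integer} capacity $\lceil|H|/k\rceil$; it is nonempty (the balanced charging lies in it) and has integral data, hence an integral vertex, which is an integral $\psi$ with every load at most $\lceil|H|/k\rceil$. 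Because \thmref{thm:CKLNP} already provides $|H|=\omega(n^2)$ and $k=n/2$, we have $|H|/k=\omega(n)$, so $\lceil|H|/k\rceil\le(1+\delta)|H|/k$ for every fixed $\delta>0$ once $n$ is large; in fact an $o(1)$-balanced fractional charging would already suffice. So it remains to build the balanced fractional charging.

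The balanced fractional charging is produced by averaging an arbitrary one over the natural symmetries of the construction, together with the regularity of the hyperedge-generating process. In the reduction each PCP vertex $u\in V_i$ carries a long-code block on (a copy of) $\{0,1\}^{\Sigma_i}$, and in the completeness case $S$ contributes from this block the ``dictator half'' $D_u$ determined by the label $\sigma_i(u)$ of $u$ in the satisfying assignment. Fixing one block, the group $\mathbb Z_2^{\Sigma_i\setminus\{\sigma_i(u)\}}$ of bit-flips of the non-dictator coordinates acts freely and transitively on $D_u$; the point is that this action — performed consistently across blocks through the projections $\pi_e$ — maps $H$ to itself and stabilizes every $D_{u'}$ setwise. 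Averaging a starting charging $\psi_0$ over the product of these groups yields a feasible fractional charging whose restriction to each $D_u$ is the constant $|\psi_0^{-1}(D_u)|/|D_u|$, and one then checks this per-coordinate load is the same for all $v\in S$: this is where path-regularity (\thmref{thm:dgkr}) and the uniform choice of the queried test coordinates enter, making the number of hyperedges incident to each block equal up to $1\pm o(1)$ and, after the first averaging, spread uniformly over the (equally many, within a layer) coordinates of each $D_u$ — with the layers weighted so that the per-coordinate load is uniform across layers as well. A clean alternative, if one prefers to avoid exact symmetry, is to verify directly the Hall/max-flow condition $|\{e\in H: e\cap S\subseteq W\}|\le(1+o(1))\,|W|\,|H|/k$ for every $W\subseteq S$, which is again a counting statement about the explicit gadget and immediately gives an $o(1)$-balanced fractional charging.

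The main obstacle is exactly the verification underlying the previous paragraph: that the bit-flip symmetries of the non-dictator coordinates are genuine automorphisms of the particular hypergraph of \cite{Cohen-AddadSL22}, and that the induced per-vertex loads are uniform. The delicate ingredients are the noise/$\eta$-smoothness part of the test (which correlates coordinates and could break exact bit-flip invariance — although breakage affecting only an $o(1)$ fraction of hyperedges is harmless, since an $o(1)$-balanced fractional charging already suffices before rounding), the folding conventions of the long code, and the way coordinates of blocks in different layers are identified via the projections $\pi_e$, which constrain which symmetries descend to the hypergraph and must still leave the orbits on $S$ of equal size. Carrying this out requires reproducing the reduction's completeness argument in enough detail to track the canonical hitting set $S$ and the hyperedge distribution, precisely as \secref{sec:jchdb} undertakes; with this in hand, \thmref{thm:minsumNP} supplies the NP-hardness of $(7/8+\varepsilon,3,1,\delta)$-\jcdb needed downstream.
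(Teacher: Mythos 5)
There is a basic mismatch between what you set out to prove and the statement at hand. \thmref{thm:dgkr} is the multilayered PCP theorem of \cite{DGKR05, Khot02} itself: the assertion that constructing an $\ell$-layered, $\eta$-smooth, path-regular PCP with perfect completeness and per-pair soundness $\tau$ is NP-hard to decide. The paper does not prove this; it imports it by citation as the starting point of the whole hardness pipeline. Your proposal instead \emph{assumes} \thmref{thm:dgkr} ("the layered PCP of \thmref{thm:dgkr}") and works downstream of it, sketching how to extract a $\delta$-balanced charging $\psi:H\to[k]$ in the completeness case of the reduction to $3$-Hypergraph Vertex Coverage. That is a (sketch of a) proof of \thmref{thm:minsumNP}, not of \thmref{thm:dgkr}; as an argument for the stated theorem it is circular, since the layered PCP hardness is exactly what your construction takes as input. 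A genuine proof of \thmref{thm:dgkr} would have to go through the PCP theorem, parallel repetition / smooth Label Cover, and the layering construction of \cite{DGKR05}, none of which appears in your write-up.

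As a secondary remark, even read as a proposal for \thmref{thm:minsumNP}, your route differs from the paper's. You average an arbitrary charging over long-code bit-flip symmetries and then round an integral assignment off a transportation polytope; the paper instead proves an ``(almost) regularity'' property of the weighted hypergraph (each vertex sees a $(3\pm o(1))$ multiple of its weight in incident hyperedges) and a direct completeness count showing each $v\in S$ can absorb a $2/3\pm\O{\delta}$ fraction of its incident hyperedges, before unweighting via vertex clouds. Your symmetry argument is plausible but carries exactly the verification burden you flag (whether the noise step and the projections $\pi_e$ preserve the claimed automorphisms), whereas the paper's counting argument sidesteps it. But the primary issue remains that neither approach addresses the theorem you were asked to prove.
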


Given an $\ell$-layered PCP $\calm$ described above, in \cite{Cohen-AddadSL22} they design the reduction to the Johnson Coverage problem as follows.
First, the produced instance will be vertex-weighted and edge-weighted, so that the problem becomes ``choose a set of vertices of total weight at most $k$ to maximize the total weight of covered edges.'' We will explain how to obtain an unweighted instance at the end of this section.

\begin{itemize}
\item Let $C_i := \{ \pm 1 \}^{|\Sigma_i|}$ and $U_i := V_i \times C_i$. 
The resulting hypergraph will be denoted by $\calh = (U, H)$ where $U = \cup_{i=1}^{\ell} (V_i \times C_i)$. The weight of vertex $(v, x) \in V_i \times C_i$ is 
\[
w(v, x) :=\frac{1}{\ell} \cdot \frac{1}{|V_i|} \cdot \frac{1}{|C_i|}.
\]
Note that the sum of all vertex weights is $1$. 

\item Let $\cald_I$ be the distribution where $i \in [\ell]$ is sampled with probability\footnote{\cite{Cohen-AddadSL22} states $(\ell-i)^2 / (6\ell(\ell - 1)(2\ell - 1))$, which is a typo corrected in their analysis.} $6(\ell-i)^2 / (\ell(\ell - 1)(2\ell - 1))$, and $\cald$ be the distribution over $(i, j) \in [\ell]^2$ where $i$ is sampled from $\cald_I$ and $j$ is sampled uniformly from $\{ i+1, \dots, \ell \}$. For each $i < j$, we create a set of hyperedges $H_{i,j}$ that have one vertex in $U_i$ and two vertices in $U_j$. 
Fix each $e = (v_i, v_j) \in E_{i,j}$ and a set of three vertices $t \subseteq (\{ v_i \} \times C_i) \cup (\{ v_j \} \times C_j)$. The weight $w(t)$ is $($the probability that $(i, j)$ is sampled from $\cald) \cdot (1/|E_{i,j}|) \cdot ($the probability that $t$ is sampled from the following procedure$)$. The reduction is parameterized by $\delta > 0$ determined later. 
\begin{itemize}
\item For each $a \in \Sigma_i$, sample $x_a \in \{ \pm 1 \}$.
\item For each $b \in \Sigma_j$, 
\begin{itemize}
\item Sample $y_b \in \{ \pm 1 \}$. 
\item If $x_{\pi(b)} = -1$, let $z_b = y_b$ with probability $1-\delta$ and $z_b = -y_b$ otherwise.
\item If $x_{\pi(b)} = 1$, let $z_b = -y_b$. 
\end{itemize}
\item Output $\{ (v_i, x), (v_j, y), (v_j, z) \}$. 
\end{itemize}
Note that the sum of all hyperedge weights is also $1$. 
\end{itemize}

\paragraph{Soundness.} The soundness of the reduction is proved in \cite{Cohen-AddadSL22}.
\begin{lemma}[\cite{Cohen-AddadSL22}] Any subset of weight at most $1/2$ intersects hyperedges of total weight at most $7/8 + o(1)$.     
\end{lemma}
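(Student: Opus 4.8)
The statement is exactly the soundness guarantee of the reduction of~\cite{Cohen-AddadSL22}, so I only sketch the argument; the complete proof is there. The plan is an influence-decoding argument: assume for contradiction that some $T\subseteq U$ with $w(T)\le\tfrac12$ covers hyperedges of total weight at least $\tfrac78+\gamma$ for a constant $\gamma>0$, and extract from $T$ an assignment to the $\ell$-layered PCP $\calm$ that violates \thmref{dgkr}, having first fixed the smoothness parameter $\eta$, the sampler noise $\delta$, and the target soundness $\tau$ small enough in terms of $\gamma$. First I would arithmetize: for each layer $i$ and $v\in V_i$, let $f_v\colon\{\pm1\}^{|\Sigma_i|}\to\{0,1\}$ be the indicator of $\{x:(v,x)\in T\}$, and set $\overline f_v=1-f_v$, $\beta_v=\mathbb{E}_x[f_v(x)]$. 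Since vertex weights are uniform within each layer and uniform across layers, $w(T)\le\tfrac12$ says that the average of $\beta_v$, over layers uniformly and over $V_i$ uniformly, is at most $\tfrac12$. Moreover the weight of the hyperedges generated from a fixed edge $e=(v_i,v_j)\in E_{i,j}$ that $T$ misses is exactly $\mathbb{E}\bigl[\overline f_{v_i}(x)\,\overline f_{v_j}(y)\,\overline f_{v_j}(z)\bigr]$ over the reduction's sampler $(x,y,z)$; by path-regularity, averaging this over $(i,j)\sim\cald$ and $e\in E_{i,j}$ equals the total uncovered weight, which is $\le\tfrac18-\gamma$ by assumption.

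Next I would Fourier-expand a single edge's contribution. Using that $y$ is uniform and independent of $x$, that $z$ is marginally uniform, and that conditioned on $x$ one has $z_b=y_b\xi_b$ with $\mathbb{E}[\xi_b\mid x]=-\delta-(1-\delta)\,x_{\pi_e(b)}$, I expect the inner expectation to simplify to
\[
\mathbb{E}\bigl[\overline f_{v_i}(x)\,\overline f_{v_j}(y)\,\overline f_{v_j}(z)\bigr]=(1-\beta_{v_i})(1-\beta_{v_j})^2+\mathsf{Coll}_j(e)-\mathsf{Cross}(e)+\mathrm{err}(e),
\]
where $\mathsf{Coll}_j(e)=\sum_{\{b,b'\}:\,\pi_e(b)=\pi_e(b')}\widehat{f_{v_j}}(\{b,b'\})^2\ge 0$ is a ``collision'' term, $\mathsf{Cross}(e)$ collects the matched contributions $\widehat{f_{v_i}}(\{\pi_e(b)\})\,\widehat{f_{v_j}}(\{b\})^2$ together with their higher-order analogues, and $\mathrm{err}(e)$ gathers the Fourier levels of size $\ge1$ that carry a power of $\delta$, so that $|\mathbb{E}_e[\mathrm{err}(e)]|=\O{\delta}$. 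The routine-but-delicate part here is verifying that nothing else survives to leading order, which is precisely where the $\delta$-biased coupling of $y$ and $z$ through $\pi_e$ in the sampler is exploited.

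Then I would run the dichotomy. A convexity computation, which is where the quadratic layer distribution $\cald_I$ is essential, should yield $\mathbb{E}_{(i,j)\sim\cald,\,e\in E_{i,j}}\bigl[(1-\beta_{v_i})(1-\beta_{v_j})^2\bigr]\ge\tfrac18-o(1)$, formalizing that a measure-$\tfrac12$ set cannot beat the random half-assignment, whose uncovered fraction is $(\tfrac12)^3=\tfrac18$. Since $\eta$-smoothness forces $\mathbb{E}_e[\mathsf{Coll}_j(e)]\le\eta$ (any fixed colliding pair of $\Sigma_j$-coordinates survives with probability at most $\eta$ over a random edge, and the Fourier weights sum to at most $1$), the hypothesis that the average uncovered weight is $\le\tfrac18-\gamma$ then gives $\mathbb{E}_e[\mathsf{Cross}(e)]\ge\gamma-\O{\eta+\delta}>\tfrac\gamma2$.

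Finally I would decode. A non-negligible average of $\mathsf{Cross}(e)$ means that for a constant fraction of edges $e=(v_i,v_j)$ there are coordinates $a\in\Sigma_i$, $b\in\Sigma_j$ with $\pi_e(b)=a$ and both $|\widehat{f_{v_i}}(\{a\})|$ and $\widehat{f_{v_j}}(\{b\})^2$ of order $\gamma$; assigning each vertex $v$ a uniformly random element of its heavy-coordinate set $\{a:|\widehat{f_v}(\{a\})|\ge\Omega(\gamma)\}$, which is of constant size by Parseval, then satisfies a constant fraction of $\cald$-weighted edges, hence --- again by path-regularity --- a constant fraction of some layer pair $E_{i,j}$, contradicting \thmref{dgkr} once $\tau$ is taken small enough. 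The hardest parts to get right are the leading-order Fourier bookkeeping of the second step and, above all, the convexity lemma pinning the ``no beating random'' value at exactly $\tfrac18-o(1)$: it is this bound that forces the use of the designed distribution $\cald_I$ rather than a uniform choice of layer, and it is the heart of the soundness analysis.
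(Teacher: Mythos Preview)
The paper does not prove this lemma at all: it states it and cites \cite{Cohen-AddadSL22}, where the soundness of the reduction is established. So there is no ``paper's own proof'' to compare against; your proposal is in fact attempting to reconstruct the argument of the cited source, not of the present paper.

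That said, your sketch is a faithful outline of the standard DGKR/H{\aa}stad influence-decoding strategy that underlies the result in \cite{Cohen-AddadSL22}: arithmetize via long-code indicators $f_v$, Fourier-expand the uncovered-weight expression for the NAE-style test (your computation $\mathbb{E}[\xi_b\mid x]=-\delta-(1-\delta)x_{\pi_e(b)}$ is correct), separate the constant term $(1-\beta_{v_i})(1-\beta_{v_j})^2$ from the collision and cross terms, use $\eta$-smoothness to kill collisions, and decode from surviving cross terms. You correctly flag the convexity step as the crux and correctly attribute to the weighted layer distribution $\cald_I$ the role of turning the global measure bound $\mathbb{E}[\beta_v]\le\tfrac12$ into the pointwise $\tfrac18-o(1)$ lower bound on the main term; in the actual argument this step goes through the multilayer/path-regular structure (finding adjacent layers with nearly equal average measure) rather than a single convexity inequality, but your identification of where the difficulty lies is right. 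As a sketch of the cited proof, this is sound; just be aware that the precise Fourier bookkeeping (which terms land in $\mathsf{Coll}$, $\mathsf{Cross}$, and $\mathrm{err}$) is more delicate than your one-line decomposition suggests, and the ``no beating random'' step requires the full DGKR layering machinery rather than convexity alone.
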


\paragraph{(Almost) regularity.}
We prove the (almost) regularity of the reduction; for every vertex, the ratio between the weight of the vertex and the total weight of the hyperedges containing it is $(3 \pm o(1))$. Note that $3$ is natural as both total vertex weights and total edge weights are normalized to $1$ and each hyperedge contains three vertices. 

Fix a vertex $(v, x)$ where $v \in V_i$ for some $i \in [\ell]$. Its vertex weight $w(v, x) = \frac{1}{\ell} \cdot \frac{1}{|V_i|} \cdot \frac{1}{|C_i|}$. 
We now consider the edge weight (described as a sampling procedure) and compute the probability that a random hyperedge contains $(v, x)$. There are two possibilities. 

\begin{itemize}
\item The hyperedge is from the $j$th layer and $i$th layer for some $j < i$. For fixed $j < i$, the probability of the pair $(j, i)$ is 
\[
\frac{6(\ell - j)^2}{\ell (\ell-1)(2\ell - 1)} \cdot \frac{1}{\ell - j}
= \frac{6(\ell - j)}{\ell (\ell-1)(2\ell - 1)}, \]
and given $(j, i)$, the probability that $v$ is contained in the sampled hyperedge is 
$\frac{2 \pm o(1)}{|V_i||C_i|}$.
(Note that the distribution of either $(v_j, y)$ or $(v_j, z)$ in the procedure is the uniform distribution on $V_i \times C_i$. The factor $2$ comes from the fact that the hyperedge samples two points from the $i$th layer; the probability that the same point is sampled twice is exponentially small and can be absorbed in the $o(1)$ term.)

\item The hyperedge is from the $i$th layer and $j$th layer for some $i < j$. For fixed $i < j$, the probability of the pair $(i, j)$ is 
\[
\frac{6(\ell - i)^2}{\ell (\ell-1)(2\ell - 1)} \cdot \frac{1}{\ell - i}
= \frac{6(\ell - i)}{\ell (\ell-1)(2\ell - 1)}, \]
\end{itemize}
Summing the above events for all $j$ values, we get
\begin{align*}
(1 \pm o(1)) &\bigg(  \big( \sum_{j=1}^{i-1} \frac{6(\ell - j)}{\ell (\ell-1)(2\ell - 1)} 
\frac{2}{|V_i||C_i|} \big) 
+
\big( \sum_{j=i+1}^{\ell} \frac{6(\ell - i)}{\ell (\ell-1)(2\ell - 1)} 
\frac{1}{|V_i||C_i|} \big) 
\bigg)  \\
=\quad& \frac{6 \pm o(1)}{\ell (\ell-1)(2\ell - 1_ |V_i||C_i|} \bigg(  \big( \sum_{j=1}^{i-1} 2(\ell - j) \big) 
+
\big( \sum_{j=i+1}^{\ell} (\ell - i)
 \big) \bigg) \\
 =\quad& \frac{6 \pm o(1)}{\ell (\ell-1)(2\ell - 1) |V_i||C_i|} \bigg(  \big( \sum_{j=1}^{i-1} 2(\ell - j) \big) 
+ \big( \ell - i \big)^2 \bigg) \\
=\quad& \frac{6 \pm o(1)}{\ell (\ell-1)(2\ell - 1) |V_i||C_i|} \bigg( 2\ell (i - 1) - i(i-1) + \ell^2 - 2 \ell i + i^2 \bigg) \\
=\quad& \frac{6 \pm o(1)}{\ell (\ell-1)(2\ell - 1) |V_i||C_i|} \bigg( \ell^2 - 2\ell + i \bigg) = \frac{3 \pm \O{1/\ell} \pm o(1)}{\ell |V_i||C_i|}. 
\end{align*}
By increasing $\ell$ to be an arbitrarily large constant, we established that the total weight of the hyperedges containing $(v, x)$ is $(3 \pm o(1))$ times its vertex weight $\frac{1}{\ell |V_i| |C_i|}$. 

\paragraph{Completeness.}
If $\calm$ admits an assignment $(\sigma_i : V_i \to \Sigma_i)_{i \in [\ell]}$ that satisfies every edge $e \in E$, 
let $S := \{ (v_i, x) : v_i \in V_i, x_{\sigma_i(v_i)} = -1 \}$. 
Fix any $e = (v_i, v_j) \in E_{i,j}$ and consider the above sampling procedure to sample $x \in \{ \pm 1 \}^{\Sigma_i}$ and $y \in \{ \pm 1 \}^{\Sigma_j}$ when $b = \sigma_j(v_j)$. 
Since $\pi_e(\sigma_j(v_j)) = \sigma_i(v_i)$, at least one of $x_{\sigma_i(v_i)}$, $y_{\sigma_j(v_j)}$, $z_{\sigma_j(v_j)}$ must be $-1$ always. 
So, $S$ intersects every hyperedge with nonzero weight.

Furthermore, an inspection of the sampling procedure reveals that for a fixed vertex $(v_i, x)$ and $j > i$, a $1/2 \pm \O{\delta}$ fraction of the hyperedges containing it has all three vertices in $S$ and a $1/2 \pm \O{\delta}$ fraction of the hyperedges containing it has only $(v_i, x)$ in $S$. 
Therefore, there must be an assignment from all the hyperedges to $S$ such that (1) a hyperedge is assigned to a vertex contained by it, and (2) every vertex is assigned a $1/2 + 1/(2\cdot 3) \pm \O{\delta} = 2/3 \pm \O{\delta}$ fraction of the hyperedges containing it (which is consistent with the fact that $S$ contains half of the vertices). 

Therefore, each vertex has almost the same ratio (up to $1 \pm o(1)$ by taking $\delta$ arbitrarily small) between its weight and the total weight of the hyperedges assigned to it. 
In order to obtain an unweighted instance, for each vertex $(v, x)$, we create a new {\em cloud} of vertices $C_{v, x}$ whose cardinality is proportional to $w(v, x)$, and replace each edge $((v_1, x_1), (v_2, x_2), (v_3, x_3))$ by all possible edges between $C_{v_1, x_1}, C_{v_2, x_2}, C_{v_3, x_3}$ (with the total weight equal to the weight of the original edge).

\section{\boldmath PTAS based on $D^2$ Sampling}

For a set $A\subset\mathbb{R}^d$, let $\mu(A) := \frac{1}{|A|}\sum_{p\in A} p$ denote its mean.
Let $\mathcal{C}=\{C_1,\ldots C_k\}$ be an optimal $k$-MinSum clustering of a point set $A$.
We use $\mu_i=\mu(C_i)$ to denote the mean of $C_i$ and we use $\Delta_i = \frac{\sum_{p\in C_i}\|p-\mu_i\|^2}{|C_i|}$ to denote the average mean squared distance of $C_i$ to $\mu_i$. We further use $C_i^\beta$ to denote the subset of $C_i$ with $\|p-\mu_i\|^2 \leq \beta\cdot \Delta_i$.
Finally, let $\OPT$ denote the cost of an optimal solution. So, $\OPT = \sum_{i=1}^k |C_i|^2\cdot\Delta_i$.

\begin{definition}
We say that $m$ is an $\varepsilon$-approximate mean of $C_i$ if $\|m-\mu_i\|^2 \leq \varepsilon\cdot \Delta_i$.
We say that a set $S\subset A$ is an $(\varepsilon,\beta)$-mean seeding set for $C_i\in \mathcal{C}$, if there exists a subset $S'\cup \{s\}\subset S$ 
with $\|s-\mu_i\|^2 \leq \beta\cdot \Delta_i$
and a weight assignment $w:S'\rightarrow \mathbb{R}_{\geq 0}$ such that 
$$
\left\|\frac{1}{\sum_{p\in S'}w(p)}\sum_{p\in S'} w(p)\cdot p - \mu_i\right\|^2 \leq \varepsilon\cdot \Delta_i.
$$
\end{definition}

We will use the following well-known identities for Euclidean means.
\begin{lemma}
\cite{InabaKI94}
\label{lem:magic}
    Let $A\subset \mathbb{R}^d$ be a set of points. Then for any $c\in \mathbb{R}^d$:
    \begin{itemize}
        \item $\sum_{p\in A} \|p-c\|^2 = \sum_{p\in A} \|p-\mu(A)\|^2 + |A|\cdot \|\mu(A)-c\|^2$.
        \item $\sum_{p,q\in A} \|p-q\|^2 = 2\cdot |A|\cdot \sum_{p\in A} \|p-\mu(A)\|^2$.
    \end{itemize}
\end{lemma}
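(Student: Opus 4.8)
The plan is to prove both identities by direct expansion of squared Euclidean norms, using only the defining property of the centroid, namely that $\sum_{p \in A}(p - \mu) = 0$ (equivalently $\sum_{p \in A} p = |A| \cdot \mu$).

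For the first identity, I would write $p - c = (p - \mu) + (\mu - c)$ for each $p \in A$ and expand
\[
\|p - c\|^2 = \|p - \mu\|^2 + 2\langle p - \mu,\, \mu - c\rangle + \|\mu - c\|^2.
\]
Summing over $p \in A$, the cross term becomes $2\big\langle \sum_{p\in A}(p - \mu),\, \mu - c\big\rangle = 0$ because $\mu$ is the mean of $A$; the last term contributes $|A| \cdot \|\mu - c\|^2$; and the first identity follows immediately.

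For the second identity, I would apply the first identity with the choice $c = q$ for each fixed $q \in A$, which gives $\sum_{p \in A}\|p - q\|^2 = \sum_{p \in A}\|p - \mu\|^2 + |A| \cdot \|\mu - q\|^2$. Summing this over all $q \in A$, the first term on the right contributes $|A| \cdot \sum_{p\in A}\|p - \mu\|^2$ and the second contributes $|A| \cdot \sum_{q\in A}\|q - \mu\|^2 = |A| \cdot \sum_{p\in A}\|p-\mu\|^2$, so that $\sum_{p,q\in A}\|p - q\|^2 = 2|A| \cdot \sum_{p\in A}\|p - \mu\|^2$. (Alternatively, one can expand $\|p-q\|^2 = \|p\|^2 - 2\langle p, q\rangle + \|q\|^2$ directly, use $\sum_{p,q\in A}\langle p,q\rangle = |A|^2 \|\mu\|^2$, and combine with the identity $\sum_{p\in A}\|p-\mu\|^2 = \sum_{p\in A}\|p\|^2 - |A| \cdot \|\mu\|^2$.)

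There is no real obstacle here; the only point that requires care is the cancellation of the cross term, which is precisely where the hypothesis that $\mu$ is the centroid of $A$ (rather than an arbitrary point) is used.
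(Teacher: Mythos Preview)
Your proposal is correct and is the standard proof of these classical identities. Note that the paper does not actually prove this lemma; it is stated with a citation to \cite{InabaKI94} and used as a known fact, so there is no ``paper's own proof'' to compare against. Your argument is exactly the usual one and would be entirely appropriate here.
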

We note that as an immediate corollary, the lemma implies that the sum of squared distances of all points in a cluster $C_i$ to an approximate mean is at most $(1+\varepsilon)|C_i|\Delta_i$ and the MinSum clustering cost is at most $(1+\varepsilon)|C_i|^2\Delta_i$.
\begin{corollary}
\label{cor:approxmean}
For any set of points $A\subset \mathbb{R}^d$. Then $c\in \mathbb{R}^d$ is an $\varepsilon$-approximate mean of $A$ if and only if $\sum_{p\in A} \|p-c\|^2 \leq (1+\varepsilon)\cdot |C_i|\cdot \Delta_i$.
\end{corollary}
\begin{lemma}
    \cite{BecchettiBC0S19}
    \label{lem:weaktriangle}
    Given numbers $a,b,c$, we have for all $\varepsilon>0$
    $$(a-b)^2 \leq (1+\varepsilon)\cdot (a-c)^2 + \left(1+\frac{1}{\varepsilon}\right)(b-c)^2.$$
\end{lemma}

We also show that we only have to consider seeding sets with $ \beta \in \Theta(\varepsilon^{-2})$.
\begin{lemma}
\label{lem:magiccore}
For any cluster $C_i$, $\varepsilon\in (0,1)$ and $\beta\geq 12\varepsilon^{-2}$, we have that $\mu_i(C_i^{\beta}) = \frac{1}{|C_i^{\beta}|} \sum_{p\in C_i^{\beta}}p$ is a $\varepsilon$-approximate mean of $C_i$.
\end{lemma}
\begin{proof}
By Markov's inequality, $|C_i\setminus C_i^{\beta}|\leq \beta^{-1} \cdot|C_i|\leq \frac{\varepsilon^2}{12}\cdot |C_i|$.
Since $\frac{1}{|C_i^{\beta}|}\sum_{p\in C_i^{\beta}}\|p-\mu_i(C_i^{\beta})\|^2 \leq \frac{1}{|C_i^{\beta}|}\sum_{p\in C_i^{\beta}}\|p-\mu_i\|^2 \leq \Delta_i\cdot \frac{|C_i|}{|C_i^{\beta}|}\leq 2\Delta_i$, Lemma \ref{lem:magic} implies $\|\mu_i(C_i^{\beta})-\mu_i\|^2 = \frac{1}{|C_i^{\beta}|}\sum_{p\in C_i^{\beta}}\|p-\mu_i\|^2 - \frac{1}{|C_i^{\beta}|}\sum_{p\in C_i^{\beta}}\|p-\mu_i(C_i^\beta)\|^2 \leq 2\Delta_i$.
We then have due to Lemma \ref{lem:weaktriangle}
\begin{align*}
\sum_{q\in C_i\setminus C_i^{\beta}} \|q-\mu_i(C_i^{\beta})\|^2 - \|q-\mu_i\|^2 & \leq \frac{\varepsilon}{2} \sum_{q\in C_i\setminus C_i^{\beta}} \|p-\mu_i\|^2 + \left(1+\frac{2}{\varepsilon} \right)\cdot  \|\mu_i - \mu_i(C_i^\beta)\|^2 \\
&\leq \frac{\varepsilon}{2} \sum_{q\in C_i\setminus C_i^{\beta}} \|p-\mu_i\|^2 + \frac{\varepsilon^2}{12} |C_i|\cdot \frac{3}{\varepsilon} \cdot 2\Delta_i \leq \varepsilon |C_i|\Delta_i
\end{align*}
The cost of the points in $C_i^\beta$ to $\mu_i(C_i^\beta)$ only gets smaller compared to the cost of these points to $\mu_i$. Hence, the increase in cost is bounded by $\varepsilon |C_i|\Delta_i$, which with Corollary \ref{cor:approxmean} yields the claim.
\end{proof}

Finally, we also show how to efficiently extract a mean from a mean seeding set, while being oblivious to $\Delta_i$.
\begin{lemma}
\label{lem:constructive}
    Let $S$ be an $(\varepsilon/4,\beta)$-mean seeding set of a cluster $C_j$ with mean $\mu_j$. Then we can compute    $\left(\frac{10\beta\cdot |S|}{\varepsilon}+1\right)^{|S|}$ choices of weights in time linear in the size of choices such that at least one of the computed choices satisfies
    $$\left\|\frac{1}{\sum_{p\in S} w(p)}\sum_{p\in S} w(p) \cdot p - \mu_j\right\|^2\leq \varepsilon\cdot \Delta_j.$$
\end{lemma}
\begin{proof}
We first introduce some preprocessing. By an affine transformation of the space, subtract $q = \underset{p\in S}{\text{argmin}}\|p-\mu_j\|$ from all points. Now all points $p$ in $S$ with $\|p-\mu_j\|\leq \sqrt{\beta\cdot \Delta_j}$ have norm at most $2\sqrt{\beta\cdot \Delta_j}$. 

Let $S'\subset S$ be the set with weights $w$ such that
$$\left\|\frac{1}{\sum_{p\in S'}w(p)}\sum_{p\in S'}w(p)p - \mu_j\right\|^2\leq \frac{\varepsilon}{4}\cdot \Delta_j.$$

Let $w_{\max}$ be the maximum weight of the points in $S'$. For $w(p)$ every $p$, we set $w'(p)$ to be the largest multiple of $\frac{\varepsilon}{10\beta\cdot |S|}\cdot w_{\max}$ that is
at most $w(p)$ (where we extend $w$ to all of $S$ by setting $w(p)=0$ for all $p\not\in S'$). So,
$w'(p) = \frac{\varepsilon}{10\beta\cdot |S|}\cdot i \cdot w_{\max} \leq w(p)< \frac{\varepsilon}{10\beta\cdot |S|}\cdot (i+1) \cdot w_{\max} $ for some $i\in \left\{0,1,\ldots \frac{10\beta\cdot |S|}{\varepsilon}\right\}$. Observe that there are at most $\left(\frac{10\beta\cdot |S|}{\varepsilon}+1\right)^{|S|}$
choices of weights of points in $S$.
Furthermore, we have
$$
\left\vert\sum_{p\in S}\left(w'(p)-w(p)\right)\right\vert \leq \frac{\varepsilon}{10\beta\cdot |S|}\sum_{p\in S'}w(p).
$$
We now argue that $\mu' = \frac{1}{\sum_{p\in S'}w'(p)}\sum_{p\in S'} w'(p)$ is a $\varepsilon$-approximate mean of $C_j$.
We have
\begin{eqnarray*}
    & & \left\|\frac{1}{\sum_{p\in S'}w(p)}\sum_{p\in S'}w(p)p - \frac{1}{\sum_{p\in S'}w'(p)}\sum_{p\in S'}w'(p)p\right\| \\
    &=& \frac{1}{\sum_{p\in S'}w'(p)}\left\|\frac{\sum_{p\in S'}w'(p)}{\sum_{p\in S'}w(p)}\sum_{p\in S'}w(p)p - \sum_{p\in S'}w'(p)p\right\| \\
    &=& \frac{1}{\sum_{p\in S'}w'(p)}\left\|\frac{\sum_{p\in S'}w'(p)-\sum_{p\in S'}w(p)}{\sum_{p\in S'}w(p)}\sum_{p\in S'}w(p)p\right\| + \frac{1}{\sum_{p\in S'}w(p)}\left\|\sum_{p\in S'}(w(p)-w'(p))p\right\|\\
    &\leq & \frac{2\varepsilon}{10\beta\cdot |S|}\left\|\frac{1}{\sum_{p\in S'}w(p)}\sum_{p\in 'S}w(p)p\right\| + \sum_{p\in S'} \frac{\varepsilon}{10\beta\cdot |S|}\|p\| \\
    &\leq & \frac{5\varepsilon}{10\beta\cdot |S|} \cdot |S'| \cdot \sqrt{\beta\cdot \Delta_j} \leq \frac{\varepsilon}{2} \sqrt{\Delta_j}
\end{eqnarray*}
By the triangle inequality, we can therefore conclude that $\mu'$ is a $\varepsilon$-approximate mean of $\mu_j$
\end{proof}

\paragraph{Computing a Mean-Seeding Set via Uniform Sampling.}

\begin{lemma}
\label{lem:uniform}
Let $\varepsilon\in (0,1)$ and $\beta > 48\varepsilon^2$.
With probability at least $1-\delta$, a set of $32 k\varepsilon^{-1}\log \delta^{-1}$ points $S$ sampled uniformly at random with replacement from $A$ contains is a $(\varepsilon,\beta)$-mean seeding set of any $C_i$ with $|C_i|\geq \frac{n}{k}$.
\end{lemma}
\begin{proof}
Due to Lemma \ref{lem:magiccore}, The mean of $C_i^{\beta}$ is an $\frac{\varepsilon}{2}$-approximate mean. Hence, if we obtain a $(\varepsilon/2,\beta)$-seeding set of $C_i^{\beta}$, the claim follows. By Markov's inequality, $C_i^{\beta}$ contains at least $\frac{n}{2k}$ points.
For any $p\in C_i^{\beta}$, we have $\mathbb{E}\left[\|p-\mu(C_i^{\beta})\|^2\right] = \Delta_i^{\beta} := \frac{1}{|C_i^{\beta}|}\sum_{p\in C_i^{\beta}}\|p-\mu(C_i^{\beta}\|^2\leq \Delta_i$ and therefore for any set of $m$ points $S_i$ sampled independently with replacement from $C_i^{\beta}$, $\mathbb{E}\left[\|\frac{1}{m}\sum_{p\in S_i} p-\mu(C_i^{\beta})\|^2\right] = \frac{1}{m}\Delta_i^{\beta}$. Therefore, if $m\geq 4\varepsilon^{-1}$, $S_i$ is an $(\varepsilon/2,\beta)$-mean seeding set of $C_i$ with probability at least $\frac 1 2$. Hence, sampling $\log\delta^{-1}$ many copies of $S_i$ implies that at least one of them is an $(\varepsilon/2,\beta)$-mean seeding set of $C_i$ with probability $1-2^{-\log \delta^{-1}} = 1-\delta$.

A sample from the point set is contained in $C_i^{\beta}$ with probability at least $\frac{1}{2k}$. Hence, sampling at least $16k\cdot \varepsilon^{-1}\cdot \log \delta^{-1}$ implies that with probability at least $1-\delta$, the number $X$ of points sampled from $S_i$ is
at least $4\epsilon^{-1}\log\delta^{-1}$, as follows. By the above analysis $\E X \ge 8\eps^{-1}\log\delta^{-1}$. Therefore, by standard
Chernoff bounds, $\Pr[X < 4\epsilon^{-1}\log\delta^{-1}] < e^{-\frac 1 8\cdot 8\eps^{-1}\log\delta^{-1}} \le \delta$. 
\end{proof}



\paragraph{$D^2$ Subsampling}
We now define an algorithm for sampling points that induce means from the target clusters.
The high level idea is as follows. We construct a rooted tree in which every node is labeled by a set of candidate cluster means. For a parent and child pair of nodes, the parent's set is a subset of the child's set. The construction is iterative. Given an interior node, we construct its children by adding a candidate mean to the parent's set. The candidantes are generated using points sampled at random from a distribution that will be defined later. The goal is to have, eventually, an $\varepsilon$-approximate mean for every optimal cluster. This will be achieved with high probability at one of the leaves of the tree. The root of the tree is labeled with the empty set, and its children are constructed via uniform sampling. Subsequently, we refine the sampling distribution to account for various costs and densities of the clusters.

We now go into more detail for the various sampling stages of the algorithm.

\begin{description}
    \item[Preprocessing:] We ensure that all points are not too far from each other.
    \item[Initialization:] We initialize the set of means via uniform sampling. Due to Lemma \ref{lem:uniform}, we can enumerate over potential sets of $\varepsilon$-approximate means for all clusters of size $\frac n k$. Each candidate mean defines a child of the root.
    \item[Sampling Stage:] Consider a node of the tree labeled with a non-empty set of candidate means $M$. We put 
    $\Gamma_i = 2^{-i}\cdot\sum_{q\in A}\min_{m\in M}\|q-m\|^2$ for $i\in \{0,1,\ldots,13\log ( nk / \varepsilon)\}$, where $\eta$ is an absolute constant to be defined later. Let $A_{i,M} = \{q\in A\colon \min_{m\in M}\|q-m\|^2\le \Gamma_i\}$. (Note that $A_{0,M}$ includes all the points.)
    Let $\mathbb{P}_i$ denote the probability distribution on $A_{i,M}$ induced by setting, for each $p\in A_{i,M}$, 
    $$
    \mathbb{P}_i[p]=\frac{\min_{m\in M}\|p-m\|^2}{\sum_{p\in A_{i,M}}\min_{m\in M}\|p-m\|^2}
    $$ 
    We'll use $\mathbb{P}$ to denote $\mathbb{P}_0$.
    For each $i$, we sample a sufficient (polynomial in $k$ and $\varepsilon$, but independent of $n$) number of points independently from the distribution $\mathbb{P}_i$. Let $S$ denote the set of sampled points.
    \item[Mean Extraction Stage:] We enumerate over combinations of points in $M\cup S$, using some non-uniform weighing to fix a mean to add to $M$, see Lemma~\ref{lem:constructive}. Each choice of mean is added to $M$ to create a child of the node labeled $M$.
\end{description}

Throughout this section we will use the following definition. Given a set of centers $M$, we say that a cluster $C_i$ is $\varepsilon$-covered by $M$ if $|C_i|^2\cdot \min_{m\in M}\|\mu_i-m\|^2\leq \frac{\varepsilon}{2}\cdot \left(\frac{1}{k}\cdot OPT + |C_i|^2 \Delta_i\right)$.
Our goal will be to prove the following lemma.
\begin{lemma}
\label{lem:approxmeans}
    Let $\mathcal{C}=\{C_1,\ldots C_k\}$ be the clusters of an optimal Min-Sum $k$-clustering and let $\eta$ be an absolute constant.
    For every $\delta, \epsilon > 0$, there is a randomized algorithm that 
    outputs a collection of at most $n^{o(1)} \cdot 2^{\eta\cdot k^2 \cdot \varepsilon^{-12} \log^2(k/(\varepsilon\delta))}$ sets of at most $k$ centers $M$, such that with probability $1 -\delta$ at least one of them that $\varepsilon$-covers every $C_i \in \mathcal{C}$. The algorithm runs in time $n^{1+o(1)} \cdot d \cdot 2^{\eta\cdot k^2 \cdot \varepsilon^{-12} \log^2(k/(\varepsilon\delta))}$.
\end{lemma}

Note that if all clusters of $\mathcal{C}$ are $\varepsilon$-covered, then there exists an assignment of points to centers, such that Min-Sum clustering cost of the resulting clustering is at most $(1+\varepsilon)\cdot \OPT$. To see this, notice that if we use $\mathcal{C}$ as the clustering with $m_i=\text{argmin}_{m\in M}\|\mu_i-m\|^2$, then 
$$\sum_{i=1}^k |C_i| \sum_{p\in} \|p-m_i\|^2 \leq \OPT + \sum_{i=1}^k |C_i| \sum_{p\in} \frac{\varepsilon}{2} \left(\frac{1}{k}\cdot \frac{\OPT}{|C_i|^2} + \frac{1}{2}\Delta_i\right) \leq (1+\varepsilon)\cdot \OPT.$$





\paragraph{Preprocessing}

The first lemma allows us to assume that all points are in some sense close to each other.
\begin{lemma}
\label{lem:preprocess}
    Suppose $n>20$.
    Given an set of $n$ points $A\subset \mathbb{R}^d$, we can partition a point set into subsets $A_1,\ldots A_k$, such that $\|p-q\|^2 \leq n^{10}\cdot \OPT$ for any two points $p,q\in A_i$ and such that any cluster $C_j$ is fully contained in one of the $A_i$. The partitioning takes time $\tilde{O}(nd+k^2)$.
\end{lemma}
\begin{proof}
The proof uses similar arguments found throughout $k$-means and $k$-median research, with only difference being that some of the discretization arguments are slightly finer to account for the MinSum clustering objective.

Consider a candidate $20$-approximate $k$-means clustering with cost $T$, which can be computed in time $\tilde{O}(nd+k^2)$~\cite{DraganovSS24}. Then we have $\frac{1}{20}T\cdot \OPT\leq 20 n^2 \cdot T$. Now, suppose that there are two centers $c_1$ and $c_2$ such that $\|c_1-c_2\|^2 \leq 20 n^{7} \cdot T$. Then for any point $p\in C_1$ and $q\in C_2$, we have by the triangle inequality $\|p-q\|^2 \leq 20n^{9}\cdot T\leq n^{10}\cdot T$. Conversely, if $\|c_1-c_2\|^2 > n^{8} \cdot T$, we know that no two points in the clusters induced by $C_1$ and $C_2$ can be in the same cluster of the optimal MinSum clustering. 
\end{proof}

\paragraph{Computing a Mean-Seeding Set via $D^2$ Sampling.}

We now consider a slight modification of Lemma \ref{lem:uniform} to account for sampling points from a cluster non-uniformly. We introduce the notion of a distorted core as follows.
Given a cluster $C_j$, a set of centers $M$, and parameters $\alpha,\beta$, we say that a subset of $C_j^{\beta}\cup M$ is a $(C_j,\beta,\alpha,M)$-distorted core (denoted $core(C_j,\beta,\alpha,M)$) iff it is the image of a mapping $\pi_{\alpha,M}:C_j^\beta\rightarrow C_j^\beta\cup M$ such that for any point $p\in C_j^\beta$, we have
$$\pi_{\alpha,M}(p) = \begin{cases} p &\text{if }\min_{m\in M}\|p-m\|^2 \geq \alpha\cdot \Delta_j \\  \underset{m\in M}{\text{argmin}}\|p-m\|^2 & \text{if }\min_{m\in M}\|p-m\|^2 < \alpha\cdot \Delta_j\end{cases}.$$
We use $D(C_j,\beta,\alpha,M)$ to denote the set of points in $C_j^\beta$ such that $\min_{m\in M}\|p-m\|^2 < \alpha\cdot \Delta_j$.

The following lemmas relate the goodness of a mean computed on an $\alpha$-distorted core to the mean on the entire set of points when sampling points proportionate to squared distances. We start by proving an analogue of Lemma \ref{lem:magiccore}.
\begin{lemma}
\label{lem:core}
Let $\alpha \leq\frac{\varepsilon}{4}$ and let $\beta\geq \frac{144}{\varepsilon^{2}}$.
Given a set of centers $M$ and a cluster $C_j$, let 
$$\hat{\mu_j} = \frac{1}{|C_j^\beta|}\sum_{p\in C_j^\beta} \pi_{\alpha,M}(p).$$ 
Then,
$$\|\hat{\mu_j}-\mu_j\|^2\leq \varepsilon\cdot \Delta_j.$$
\end{lemma}
\begin{proof}
First, let $\mu_j'$ be the mean of $C_j^\beta$. Due to Markov's inequality $|C_j^\beta|\geq \frac{|C_j|}{2}$. Using Lemma \ref{lem:magic}, we have $|C_j|\cdot \Delta_j \geq \sum_{p\in C_j^\beta}\|p-\mu_j\|^2 \geq |C_j^\beta|\cdot\|\mu_j'-\mu_j\|^2$, which implies that $\|\mu_j'-\mu_j\|^2 \cdot |C_j|\leq 2|C_j| \cdot \Delta_j$. Then
\begin{eqnarray*}
 \sum_{p\in C_j} \|p-\mu_j'\|^2 
& = & \sum_{p\in C_j^\beta}\|p-\mu_j'\|^2 + \sum_{p\in C_j\setminus C_j^\beta}\|p-\mu_j'\|^2\\
&\leq& \sum_{p\in C_j^\beta}\|p-\mu_j\|^2 + \\
& & + \sum_{p\in C_j\setminus C_j^\beta}\left(1+\frac{\varepsilon}{8}\right)\cdot \|p-\mu_j\|^2 + |C_j\setminus C_j^\beta|\cdot \left(1+\frac{8}{\varepsilon}\right)\cdot \|\mu_j'-\mu_j\|^2 \\
&\leq& \left(1+\frac{\varepsilon}{8}\right)\cdot \sum_{p\in C_j}\|p-\mu_j\|^2  + \frac{9}{\varepsilon\beta} \cdot |C_j| \cdot \|\mu_j'-\mu_j\|^2 \\
&\leq& \left(1+\frac{\varepsilon}{8}\right)\cdot \sum_{p\in C_j}\|p-\mu_j\|^2  + \frac{18}{\varepsilon\beta} \cdot |C_j|\Delta_j,
\end{eqnarray*}
where we used Lemma \ref{lem:weaktriangle} in the second inequality.
In other words, $\mu_j'$ is an $\left(\frac{\varepsilon}{8} + \frac{18}{\varepsilon\beta}\right)$-approximate mean of $C_j$. We now turn our attention to $\hat{\mu_j}$. We have
\begin{align*}
\|\hat{\mu_j}-\mu_j\| & \leq \frac{1}{|C_j^\beta|}\cdot \sum_{p\in C_j^\alpha}\|p-\pi_{\alpha,M}(p)\| \leq \sqrt{\alpha\cdot \Delta_j}
\end{align*}
By the triangle inequality, we therefore have
$$\|\hat{\mu_j}-\mu_j\|\leq \|\hat{\mu_j} - \mu_j'\|+\|\mu_j'-\mu_j\| \leq \sqrt{\alpha\cdot \Delta_j} + \sqrt{\left(\frac{\varepsilon}{8} + \frac{18}{\varepsilon\beta}\right)\cdot \Delta_j}.$$
By our choice of $\alpha$ and $\beta$, this implies that $\hat{\mu_j}$ is an $\varepsilon$-approximate mean of $C_j$. 
\end{proof}

We now characterize when $M$ either covers a cluster $C_j$, or when $M$ is a suitable seeding set for $C_j$. The following lemma says that if $M$ is not a seeding set of $C_j$, then there exist many points in the core $C_j^\beta$ of $C_j$ that are far from $M$.

\begin{lemma}
\label{lem:closecore}
    Given $\alpha\leq \frac{\varepsilon}{16}$, $\beta \geq \frac{2400}{\varepsilon^{2}}$, and $\gamma\leq \sqrt{\frac{\varepsilon}{16(\beta+\alpha)}}$,
    and a set of centers $M$, let $C_j$ be a cluster for which $|D(C_j,\beta,\alpha,M)| \geq (1-\gamma)\cdot |C_j^\beta|$. 
    Then $M$ is an $(\varepsilon,\beta)$-mean seeding set of $C_j$. 
\end{lemma}
\begin{proof}
    First, let $\hat{\mu_j} = \frac{1}{|C_j^\beta|}\sum_{p\in C_j^\beta}\pi_{\alpha,M}(p)$ and let 
    $\mu_j' = \frac{1}{|D(C_j,\beta,\alpha,M)|}\sum_{p\in D(C_j,\beta,\alpha,M)}p$ be the mean of $D(C_j,\beta,\alpha,M)$. 
    Now, observe that for any pairs of points $p\in C_j^\alpha$ and $q\in C_j^\beta$, by the triangle inequality 
    $$
    \|q-p\|\leq \|q - \mu_j\| + \|\mu_j - p\|\le \sqrt{(\beta+\alpha)\cdot \Delta_j}.
    $$
    Then
    \begin{eqnarray*}
        & & \|\hat{\mu_j}- \mu_j'\| \\
        & = & \frac{1}{|C_j^\beta|} \left\|\sum_{p\in C_j^\beta}\pi_{\alpha,M}(p) - \frac{|C_j^\beta|}{|D(C_j,\beta,\alpha,M)|}\sum_{p\in D(C_j,\beta,\alpha,M)}p\right\|\\
        & = &\frac{1}{|C_j^\beta|}\cdot \left\|\left(\sum_{p\in D(C_j,\beta,\alpha,M)}(\pi_{\alpha,M}(p)-p)\right)\right. + \\
        & & + \left.\left(\sum_{p\in C_j^\beta\setminus D(C_j,\beta,\alpha,M)}\pi_{\alpha,M}(p) - \frac{|C_j^\beta\setminus D(C_j,\beta,\alpha,M)|}{|D(C_j,\beta,\alpha,M)|}\sum_{p\in D(C_j,\beta,\alpha,M)}p\right)\right\|\\
        &\leq& \frac{1}{|C_j^\beta|}\cdot \left\|\sum_{p\in D(C_j,\beta,\alpha,M)}(\pi_{\alpha,M}(p)-p)\right\| + \\
        & & + \frac{1}{|C_j^\beta|}\cdot
        \left\|\sum_{p\in C_j^\beta\setminus D(C_j,\beta,\alpha,M)}\pi_{\alpha,M}(p) - \frac{|C_j^\beta\setminus D(C_j,\beta,\alpha,M)|}{|D(C_j,\beta,\alpha,M)|}\sum_{p\in D(C_j,\beta,\alpha,M)}p\right\| \\
        &\leq& \sqrt{\alpha \cdot \Delta_j} + \gamma\cdot \sqrt{(\beta+\alpha)\cdot \Delta_j}
    \end{eqnarray*}
    Finally, by the triangle inequality, Lemma \ref{lem:core} and our choice of $\alpha$, $\beta$, and $\gamma$, we have
    $$\|\mu_j'-\mu_j\|\leq \|\mu_j'-\hat{\mu_j}\| + \|\hat{\mu_j}-\mu_j\|\leq \sqrt{\alpha \cdot \Delta_j} + \gamma\cdot \sqrt{(\beta+\alpha)\cdot \Delta_j} + \sqrt{\frac{\varepsilon}{4}\Delta_j} \le \sqrt{\epsilon\Delta_j},$$
thus completing the proof.
\end{proof}

As a consequence of this lemma and the preprocessing, we show under the assumption of Lemma \ref{lem:preprocess}, the largest value of $i$ such that $C_j^\beta \in A_{i,M}$ for an uncovered cluster $C_j$ cannot be too large.
\begin{lemma}
\label{lem:preprocess2}
    Given $\beta \geq \frac{2400}{\varepsilon^{2}}$,
    suppose we have a set of points $A$ such that $\|p-q\|^2 \leq n^{10}\cdot \OPT$ as per Lemma \ref{lem:preprocess}. 
    Let $M$ be a set of points and suppose there exists a cluster $C_j$ such that such $C_j$ is uncovered and such that $M$ is not an $(\varepsilon/4,\beta)$ mean seeding set of $A$. We then have that $C_j^\beta \subset A_{i,M}$ implies $i\leq 13 \log(nk/\varepsilon)$.
\end{lemma}
\begin{proof}
    Suppose $i> 13 \log(nk/\varepsilon)$. 
    Due to Lemma \ref{lem:closecore}, we know there exists a point $p'\in C_j^\beta$ such that $\min_{m\in M} \|p-m\|^2 \geq \varepsilon/16\cdot \Delta_j$. This implies via Lemma \ref{lem:preprocess} that $\Delta_j \leq \left(\frac{k \cdot n}{\varepsilon}\right)^{-13}\cdot 16\varepsilon^{-1}\cdot n^{10}\cdot\OPT$.
    
    Consider the point $p\in C_j^\beta$ with minimumal distance to $\mu_j$ and let $m_p=\text{argmin}_{m\in M} \|p-m\|^2$. Then $\|p-m\|^2 \leq n^{-20}\cdot \OPT$, which implies that 
    \begin{align*}
        |C_j|\cdot \sum_{q\in C_j}\|q-m\|^2 &\leq |C_j|\cdot\sum_{q\in C_j}2\cdot\|q-p\|^2+2\cdot\|p-m\|^2 \\
        & \leq  4|C_j|\cdot \sum_{q\in C_j}\|q-\mu_j\|^2+2 |C_j|^2\cdot\|p-m\|^2 \\
        &\leq 4|C_j|^2\cdot 16 \varepsilon^{-1}\cdot \left(\frac{k \cdot n}{\varepsilon}\right)^{-13}\cdot n^{10}\cdot\OPT + 2 |C_j|^2 \left(\frac{k \cdot n}{\varepsilon}\right)^{-13}\cdot  n^{10}\cdot\OPT \\
        &\leq 66\cdot |C_j|^2\cdot\left(\frac{k \cdot n}{\varepsilon}\right)^{-13}\cdot 16\varepsilon^{-1}\cdot n^{10}\cdot\OPT \leq \frac{\varepsilon}{2k}\cdot\OPT,
    \end{align*}
     which is a contradiction to $M$ not covering $C_j$.
\end{proof}

We now show that, given that $M$ is not a seeding set of some cluster $C_j$, that the weighted squared distance of $\mu_j$ to its closest point in $M$ is a reasonably accurate proxy for the squared distance of the points in the core $C_j^\beta$ to their respectively closest points in $M$.

\begin{lemma}
\label{lem:lbdist}
Let $M$ be a set of centers and let $C_j$ be a cluster that is not $\varepsilon$-covered by $M$. Also assume that $M$ is 
not an $(\varepsilon,\beta)$-mean seeding set of $C_j$. Then,
$$\sum_{p\in C_j^\beta} \min_{m\in M}\|p-m\|^2 \geq \frac{1}{272}\left(\frac{\varepsilon}{\beta}\right)^{3/2}\cdot |C_j| \cdot \min_{m\in M}\|\mu_j-m\|^2$$
\end{lemma}
\begin{proof}
For all $p\in C_j^\beta\setminus D(C_j,\beta,\varepsilon/16,M)$, we have:
\begin{eqnarray*}
       \min_{m\in M}\|\mu_j-m\|^2 
&\leq& \left(\min_{m\in M}\|p-m\| + \|\mu_j-p\|\right)^2 \\
&\leq& 2\min_{m\in M}\|p-m\|^2 + 2\|\mu_j-p\|^2 \\
&\leq& \frac{34\beta}{\varepsilon}\cdot\min_{m\in M}\|p-m\|^2,
\end{eqnarray*}
where the first inequality uses the triangle inequality and that for $m' = \arg\min_{m\in M}\|p-m\|$, we have that $\|\mu_j-m'\|^2\ge \min_{m\in M}\|\mu_j-m\|^2$, and the last inequality uses $\min_{m\in M}\|p-m\|^2\ge \frac{\varepsilon}{16}\Delta_j$ 
and $\|\mu_j-p\|^2\le \beta\Delta_j$ and $\frac{\beta}{\varepsilon} \ge 1$.

We first consider the case that $\min_{m\in M}\|\mu_j-m\| \geq 2\sqrt{\beta\cdot \Delta_j}$. In this case, all points in $C_j^\beta$ are closer to $\mu_j$ than to any point in $M$.
This implies 
$$\sum_{p\in C_j^\beta} \min_{m\in M}\|p-m\|^2 \geq \frac{1}{4}|C_j^\beta| \min_{m\in M}\|\mu_j-m\|^2 \geq \frac{1}{8}|C_j|\min_{m\in M}\|\mu_j-m\|^2.$$

Now, we consider the case that $\min_{m\in M}\|\mu_j-m\| \leq 2\sqrt{\beta\cdot \Delta_j}$.
As $M$ is not an $\varepsilon$-mean seeding set for $C_j$, Lemma~\ref{lem:closecore} implies that
$|C_j^\beta\setminus D(C_j,\beta,\varepsilon/16,M)| > \sqrt{\frac{\varepsilon}{16\beta+\varepsilon}} |C_j^\beta|\ge \frac 1 2 \sqrt{\frac{\varepsilon}{16\beta+\varepsilon}} |C_j|$. Therefore,
\begin{eqnarray*}
      \sum_{p\in C_j^\beta} \min_{m\in M}\|p-m\|^2
&\ge& \sum_{p\in C_j^\beta\setminus D(C_j,\beta,\varepsilon/16,M)} \min_{m\in M}\|p-m\|^2 \\
&\ge& |C_j^\beta\setminus D(C_j,\beta,\varepsilon/16,M)|\cdot \frac{\varepsilon}{34\beta}\cdot\min_{m\in M}\|\mu_j-m\|^2 \\
&\ge& \frac 1 2 \sqrt{\frac{\varepsilon}{16\beta + \varepsilon}}\cdot\frac{\varepsilon}{34\beta}\cdot |C_j|\cdot\min_{m\in M}\|\mu_j-m\|^2 \\
&\ge& \frac{1}{272}\left(\frac{\varepsilon}{\beta}\right)^{3/2}\cdot |C_j|\cdot\min_{m\in M}\|\mu_j-m\|^2,
\end{eqnarray*}
which completes the proof.
\end{proof}

Next, we show that the marginal probability of picking a point from an uncovered cluster $C_j$ cannot be significantly smaller than the marginal probability of picking a point from the union of covered clusters with larger cardinality than $C_j$.

\begin{lemma}
\label{lem:lbprob}
    Let $M$ be a set of centers, and let ${\cal C}$ denote a set of clusters that are $\varepsilon$-covered by $M$. Let $\mathcal{H}$ denote the set of points in all the clusters in ${\cal C}$. Let $\beta> \frac{2400}{\varepsilon^{2}}$. Consider a cluster $C_j\not\in {\cal C}$. Let $i$ be the largest index such that $C_i\in {\cal C}$. Suppose that $M$ is not an $(\varepsilon,\beta)$-mean seeding set of $C_j$, and that $i < j$. Then
    $$\mathbb{P}[p\in C_j^\beta\mid p\in \mathcal{H}\cup C_j] \geq \frac{\varepsilon^4 \cdot \beta^{-3/2}}{1088k}.$$
\end{lemma}
\begin{proof}
For the points in $\mathcal{H}\cup C_j$, we have 
\begin{align*}
\label{eq:sample1}
  \sum_{p\in \mathcal{H}\cup C_j}\min_{m\in M}\|p-m\|^2 & = \sum_{C_h\in \cal C}  |C_h| \cdot \left(\Delta_h + \min_{m\in M}\|\mu_h - m\|^2\right) + |C_j|\cdot(\Delta_j + \min_{m\in M}\|\mu_j-m\|^2) \\
  &\leq \sum_{C_h\in \cal C} (1+\varepsilon) \cdot |C_h|\cdot\Delta_h + |C_j|\cdot(\Delta_j + \min_{m\in M}\|\mu_j-m\|^2) \\
  &\leq  2\cdot \left(\sum_{C_h\in \cal C} |C_h|\cdot\Delta_h  +\varepsilon^{-1}\cdot |C_j| \min_{m\in M}\|\mu_j-m\|^2\right),
\end{align*}
where the first inequality holds by definition of an $\varepsilon$-covered cluster and the second inequality holds as $M$ does not $\varepsilon$-cover $C_j$ and thus in particular $\min_{m\in M}\|\mu_j-m\|^2 \geq \varepsilon\cdot \Delta_j$ due to Corollary \ref{cor:approxmean}.

Assume for contradiction that the lemma does not hold, so
$$
\sum_{p\in C_j^\beta}\|p-m\|^2 <  \frac{\varepsilon^4 \cdot \beta^{-3/2}}{1088k}\cdot \sum_{p\in \mathcal{H}\cup C_j}\min_{m\in M}\|p-m\|^2.
$$
This yields
\begin{eqnarray*}
\frac{1}{272}\left(\frac{\varepsilon}{\beta}\right)^{3/2} |C_j| \cdot \min_{m\in M}\|\mu_j-m\|^2
&\le& \sum_{p\in C_j^\beta}\|p-m\|^2 \\
& < & \frac{\varepsilon^4 \cdot \beta^{-3/2}}{1088k}\cdot \sum_{p\in \mathcal{H}\cup C_j}\min_{m\in M}\|p-m\|^2\\
&\le&  \frac{\varepsilon^4 \cdot \beta^{-3/2}}{544 k}\cdot \left(\sum_{C_h\in \cal C}  |C_h|\cdot\Delta_h  +\varepsilon^{-1}\cdot |Cj| \min_{m\in M}\|\mu_j-m\|^2\right), \\
\end{eqnarray*}
where the first inequality uses Lemma~\ref{lem:lbdist}. (Note that this lemma assumes that $M$
is not an $(\varepsilon,\beta)$-seeding set for $C_j$.) Rearranging the terms, we get
\begin{eqnarray*}
\frac{\varepsilon^3\cdot \beta^{-3/2}\cdot}{544}\cdot |C_j|\cdot\min_{m\in M}\|\mu_j-m\|^2
&\le&\left(\frac{(\varepsilon/\beta)^{3/2}}{272}-\frac{(\varepsilon/\sqrt{\beta})^{3}}{544k}\right)\cdot|C_j|\cdot\min_{m\in M}\|\mu_j-m\|^2 \\
&\leq& \frac{\varepsilon^4 \cdot \beta^{-3/2}}{544k}\cdot \sum_{C_h\in \cal C}\cdot  |C_h|\cdot\Delta_h.
\end{eqnarray*}
Therefore, as $|C_j|\leq |C_h|$ for all $C_h\in {\cal C}$,
$$
|C_j|^2 \min_{m\in M}\|\mu_j-m\|^2 \leq \frac{\varepsilon}{k} \sum_{C_h\in \cal C}  |C_h|\cdot\Delta_h \cdot |C_j| \leq \frac{\varepsilon}{k}\cdot |C_h|^2\cdot\Delta_h.
$$
This, however, implies that $C_j$ is $\varepsilon$-covered by $M$, contradicting the lemma's assumption.
\end{proof}

We now consider a cluster $C_j$ that is small compared to the union of the clusters $C_j'$ with $j'>j$. In this case, we show that one of the distance-proportional distributions that we use guarantees that the probability of sampling points from the core of $C_j$ is large.

\begin{lemma}
\label{lem:refinedsample}
 Let $M$ be a set of centers. Let $\beta> \frac{2400}{\varepsilon^{2}}$. Let $j$ be the smallest index 
 such that $C_j$ is not $\varepsilon$-covered by $M$. If $M$ is not an $(\varepsilon,\beta)$-mean seeding
 set for $C_j$, then there exists $i\in \{0,1,\ldots,\eta\log (nk/\varepsilon)\}$ such that 
 $C_j^\beta \in A_{i,M}$ and
$$\mathbb{P}_i[p\in C_j^\beta]\geq \frac{1}{4352\cdot k}\cdot \left(\frac{\varepsilon}{\beta^{5/8}}\right)^4$$
\end{lemma}
\begin{proof}
By Markov's inequality $|C_j|/2 < |C_j^\beta|$.
Let $i$ be the smallest value such that $C_j^{\beta} \subset A_{i,M}$. (Clearly, $C_j^{\beta} \subset A_{0,M}$, so $i$ exists.)
We have due to Lemma \ref{lem:lbdist}
$$\sum_{p\in C_j^\beta} \min_{m\in M}\|p-m\|^2 \geq \frac{1}{272}\left(\frac{\varepsilon}{\beta}\right)^{3/2}\cdot |C_j| \cdot \min_{m\in M}\|\mu_j-m\|^2,$$
Also, for all $p\in C_j^\beta$,
$$ \min_{m\in M}\|p-m\| \leq \min_{m\in M}\|\mu_j-m\| + \|p-\mu_j\| \leq \min_{m\in M}\|\mu_j-m\| + \sqrt{\beta \cdot \Delta_j} < 2\sqrt{\frac{\beta}{\varepsilon}} \min_{m\in M}\|\mu_j-m\|$$
where the last inequality follows from the fact that $M$ does not $\varepsilon$-cover $C_j$, so 
$\min_{m\in M}\|\mu_j-m\|^2 > \varepsilon\cdot \Delta_j$
Note that this implies $\min_{m\in M}\|p - m\|^2 \leq 8\cdot \frac{\beta}{\varepsilon} \cdot \min_{m\in M}\|\mu_j-m\|^2$ for all $p\in A_{i,M}$, as $\Gamma_i < 2\max_{p\in C_j}\min_{m\in M}\|p-m\|$.
Since for any cluster $C_{j'}$ with $j'>j$ we have $|C_{j'}|\leq |C_j|$ and therefore 
\begin{eqnarray}
  \sum_{p\in C_{j'}\cap A_{i,M}} \|p-m\|^2 
&\leq& |C_{j'}\cap A_{i,M}| \cdot 8\cdot \frac{\beta}{\varepsilon} \cdot 
   \min_{m\in M}\|\mu_j-m\|^2\leq |C_j| \cdot 8\cdot \frac{\beta}{\varepsilon} \cdot \min_{m\in M}\|\mu_j-m\|^2\nonumber \\
&\leq& 2176\cdot \left(\frac{\beta}{\varepsilon}\right)^{5/2} \cdot \sum_{p\in C_j^\beta}\min_{m\in M}\|p-m\|^2.
\label{eq: smaller clusters}
\end{eqnarray}

Define ${\cal H} = \cup_{h=1}^{j-1} C_h$ and $\mathcal{L} = \cup_{h=j+1}^k C_h$.
Clearly 
$$\mathbb{P}_i[p\in (\mathcal{H}\cup C_j\cup \mathcal{L})\cap A_{i,M}] = 1.$$
By Lemma~\ref{lem:lbprob}, 
$$
\mathbb{P}_i[p\in C_j^\beta\mid p\in (C_j\cup\mathcal{H})\cap A_{i,M}] \geq \frac{1}{1088\cdot k}\cdot
 \left(\frac{\varepsilon}{\beta^{3/8}}\right)^4.
$$
By Inequality~\eqref{eq: smaller clusters},
$$\mathbb{P}_i[p\in C_j^\beta\mid p\in (C_j\cup\mathcal{L})\cap A_{i,M}] \geq \frac{1}{2176 \cdot k} \cdot \left(\frac{\varepsilon}{\beta}\right)^{5/2}.$$
Now,
$$\max\left\{\mathbb{P}_i[p\in (\mathcal{H}\cup C_j)\cap A_{i,M}], 
\mathbb{P}_i[p\in (C_j\cup\mathcal{L})\cap A_{i,M}]\right\}\geq \frac 1 2,$$ 
so, 
\begin{align*}
 \mathbb{P}_i[p\in C_j^\beta] &= \mathbb{P}_i[p\in C_j^\beta \mid p\in (\mathcal{H}\cup C_j)\cap A_{i,M}]\cdot \mathbb{P}_i[p\in (\mathcal{H}\cup C_j)\cap A_{i,M}] \\  
 &= \mathbb{P}_i[p\in C_j^\beta \mid p\in (C_j\cup\mathcal{L})\cap A_{i,M}]\cdot 
  \mathbb{P}_i[p\in (C_j\cup\mathcal{L})\cap A_{i,M}] \\
 &\geq \frac 1 2 \cdot \min\left\{\mathbb{P}_i[p\in C_j^\beta | p\in (\mathcal{H}\cup C_j)\cap A_{i,M}], \mathbb{P}_i[p\in C_j^\beta | p\in (C_j\cup\mathcal{L})\cap A_{i,M}]\right\} \\
 &\geq \frac{1}{2}\min\left\{\frac{1}{2176 \cdot k} \cdot \left(\frac{\varepsilon}{\beta}\right)^{5/2},
 \frac{1}{1088\cdot k}\cdot
 \left(\frac{\varepsilon}{\beta^{3/8}}\right)^4\right\} \geq \frac{1}{4352\cdot k}\cdot \left(\frac{\varepsilon}{\beta^{5/8}}\right)^4.
\end{align*}
We remark that by Lemma~\ref{lem:preprocess} we may assume that all non-zero squared distances are within a factor $n^{30}$ of each other. Thus, the desired $i < 30\log n$.
\end{proof}

Finally, we show that we can account for the bias in the sampling in order to estimate an approximate mean.
\begin{lemma}
\label{lem:weightedmeans}
    Let $M$ be a set of centers. Let $j$ be the smallest index such that $C_j$ is not $\varepsilon$-covered by $M$. 
    Suppose that $M$ is not an $(\varepsilon/4,\beta)$-mean seeding set for $C_j$.
    Consider a set of points $S'$ sampled iid from $\mathbb{P}_i$, and let $S = S'\cap C_j^\beta$.
    If $\beta \geq 2400\varepsilon^{-2}$ and $S> 17825792 \cdot k\left(\frac{\beta^{7/12}}{\varepsilon}\right)^6 \log(2/\delta)$, then 
    with probability at least $1-\delta$, we have that $S'\cup M$ is an $(\varepsilon/4,\beta)$-mean seeding set of $C_j$.
\end{lemma}
\begin{proof}
    We first apply some preprocessing. Let $q$ be an arbitrary point in $S$. We subtract $q$ from all points. 
    Therefore, we may assume that all points $p\in C_j^\beta$, as well as any point $m\in M$ that has distance at most $\sqrt{\varepsilon^2 \Delta_j/2}$ to some point in $C_j^\beta$ have norm at most $\sqrt{(\beta+\varepsilon^2/2)\Delta_j}$.

    Furthermore, let $\mu_D$ be the mean of $D(C_j,\beta,\varepsilon/16,M)$, and let $\mu_C$ be the mean of $C=C_j^\beta\setminus D(C_j,\beta,\varepsilon/16,M)$.
    Due to Lemma \ref{lem:core}, we have that $\hat{\mu_j} = \frac{1}{|C_j^\beta|}\cdot\left(\mu_C\cdot |C| + \mu_D\cdot |D(C_j,\beta,\varepsilon/16,M)|\right)$ is an $\frac{\varepsilon}{4}$-approximate mean of $\mu_j$, or more specifically
    $$\|\hat{\mu_j}-\mu_j\|\leq \sqrt{\frac{\varepsilon}{4}\cdot \Delta_j}.$$
    Thus, if we can show that $S$ is an $\frac{\varepsilon}{4}$-mean seeding set of $\mu_C$ (yielding an
    $\frac{\varepsilon}{4}$-approximate mean $\widehat{\mu_C}$, then
    \begin{eqnarray*}
    & &\left\|\frac{1}{|C_j^\beta|}\left(\widehat{\mu_C}\cdot |C| + \mu_D\cdot |D(C_j,\beta,\varepsilon/16,M)|\right)-\mu_j\right\|  \\
    &\leq&  \|\widehat{\mu_C}-\mu_C\| + \left\|\frac{1}{|C_j^\beta|} \left(\mu_C\cdot |C| + \mu_D\cdot |D(C_j,\beta,\varepsilon/16,M)|\right)-\mu_j\right\|\\
    &\leq& \sqrt{\varepsilon/4\Delta_j} + \sqrt{\varepsilon/4\Delta_j} \leq \sqrt{\varepsilon\Delta_j},
    \end{eqnarray*}
    where we used Lemma~\ref{lem:magic} in the first inequality.

    Let $i$ to denote the largest index for which $A_{i,M}$ contains $C_j^\beta$.
    Define for every point $p\in C$ a weight $w_p = \frac{1}{|C|\cdot\mathbb{P}_i[p\mid C]}$. To clarify, $\mathbb{P}_i[p\mid C]$ 
    is the conditional probability that a single sample drawn from the probability distribution $\mathbb{P}_i$ is $p$, 
    conditioned on the sampled point being from $C$. We can then write
    $$
    \mu_C = \sum_{p\in C} (w_pp)\cdot\mathbb{P}_i[p\mid C].
    $$
    In other words, $\mu_C$ is the expectation of the scaled vector $w_pp$ under the conditional distribution $\mathbb{P}_i[\cdot\mid C]$
    Let $S_C = S\cap C$. Conditioning on $s = |S\cap C|$, the sample $S_C$ can be generated by taking $s$ independent samples from
    the distribution $\mathbb{P}_i[\cdot\mid C]$. We write $S_C = \{p_1,p_2,\dots,p_s\}$, where the points are random variables. Define
    $$
    \widehat{\mu_C} = \frac{1}{s}\cdot\sum_{p\in S_C} w_pp.
    $$
    Taking expectation over $\mathbb{P}_i[\cdot\mid C\wedge s]$, we have
    \begin{eqnarray*}
        {\mathbb E}\left[\|\widehat{\mu_C} - \mu_C\|^2\right] & = &
        {\mathbb E}\left[\frac{1}{s^2}\cdot\sum_{i=1}^s\sum_{j=1}^s (w_{p_i}p_i - \mu_C)\cdot (w_{q_j}q_j - \mu_C)\right] \\
        & = & \frac{1}{s^2}\cdot\sum_{i=1}^s {\mathbb E}\left[\|w_{p_i}p_i - \mu_C\|^2\right].
    \end{eqnarray*}
The cross terms vanish as the sampled points are independent and the expectation of $w_p p$ is $\mu_C$.

To complete the proof, notice that for $p\in C$, $\|w_{p}p - \mu_C\|^2\le 2w_p\|p\|^2 + 2\|\mu_C\|^2$. We may
assume without loss of generality that the entire point-set is shifted so that $\mu_C = \vec{0}$. Hence, as
$\mu_C\in \hbox{conv}(C_j^\beta)$ and $p\in C_j^\beta$, we have that $\|p\|^2\le 4\beta\Delta_j$. Also, 
$\frac{\varepsilon}{16}\Delta_j\leq \min_{m\in M} \|p-m\|^2 \leq \beta \cdot \Delta_j$, 
where the lower bound holds by definition of $D(C_j,\beta,\varepsilon/16,M)$ and the upper bound holds by 
definition of $C_j^\beta$. Thus, $\frac{\varepsilon}{16\cdot |C|}\leq \mathbb{P}_i[p\mid p\in C] \leq \frac{\beta}{|C|}$.
This implies that $w_p\le \frac{16}{\varepsilon}$. Therefore,
$$
{\mathbb E}\left[\|\widehat{\mu_C} - \mu_C\|^2\right] \le \frac{1}{s}\cdot\frac{64\beta}{\varepsilon}\cdot\Delta_j,
\hbox{ so  } \mathbb{P}_i\left[\|\widehat{\mu_C} - \mu_C\|^2 > \frac{1}{s}\cdot\frac{128\beta}{\varepsilon}\cdot\Delta_j\right] < \frac 1 2.
$$
If $s\ge \frac{512\beta}{\varepsilon^2}$, we get that $\widehat{\mu_C}$ $\frac{\varepsilon}{4}$-covers $\mu_C$ with probability
at least $\frac 1 2$. Thus, if $s\ge \frac{512\beta}{\varepsilon^2}\cdot\log(2/\delta)$ we can apply this $\log\delta^{-1}$
times to boost the success probability to $1-\frac{\delta}{2}$.

We now bound the number of samples that we need to obtain $S_C$.
Due to Lemma \ref{lem:refinedsample}, we have $\mathbb{P}_i[p\in C_j^\beta]\geq \frac{1}{4352\cdot k}\cdot \left(\frac{\varepsilon}{\beta^{5/8}}\right)^4$. 
Therefore, $\mathbb{E}_i[|S_C|] = |S| \cdot \mathbb{P}_i[p\in C_j^\beta] \geq |S|\cdot \frac{1}{4352\cdot k}\cdot \left(\frac{\varepsilon}{\beta^{5/8}}\right)^4$.
Setting $|S|\geq 17825792 \cdot k\left(\frac{\beta^{7/12}}{\varepsilon}\right)^6 \log(2/\delta)$ and applying the Chernoff bound, we have
$$\mathbb{P}\left[|S_C|< \frac{512\beta}{\varepsilon^2}\cdot\log(2/\delta)\right] \leq \exp(-8 \cdot \mathbb{E}[|S_C|]]) \leq \delta/2.$$
Conversely, with probability $1-\delta$, $S\cup M$ contains a $(\varepsilon/4,\beta)$-mean seeding set of $C_j$.
\end{proof}

We are now ready to give a proof of Lemma \ref{lem:approxmeans}.

\begin{proof}[Proof of Lemma \ref{lem:approxmeans}]
Due to Lemma \ref{lem:preprocess}, we know that we have at most $k$ point $A_1,\ldots A_k$ sets such that any cluster of the optimum clustering is fully contained in one the $A_i$.
We guess the correct number of centers from each $A_i$, which takes at most ${2k-1 \choose k-1}$ guesses.
For each $A_i$, we then find a set of centers $M$ that $\varepsilon$ covers all clusters of the optimum in $A_i$.

We simplify the calculation by assuming that $A_i$ contains all $k$ clusters.
We iteratively add centers to $M$, writing $M_j$ after the $j$-th iteration. Our goal is to ensure that $M_j$ covers the clusters $C_1,\ldots C_j$.
In every iteration, we first sample to obtain a suitable mean seeding set and then apply Lemma \ref{lem:constructive} to extract the mean from the set.

We start with $C_1$. We know that $|C_1|\geq \frac{n}{k}$, so we can use Lemma \ref{lem:uniform} to sample a set $S_1$ of $32 k\varepsilon^{-1}\log(k/\delta)$ points uniformly at random and then enumerate over all candidate means induced by uniformly weighted subsets of $S_1$ and the to obtain an $\varepsilon$-approximate mean of $C_1$. This takes time $2^{|S_1|}$  and yields $2^{|S_1|}$ candidate means, of which one is an $\varepsilon$-covers $C_1$ with probability $1-\delta/k$.

For subsequent iterations, Lemma \ref{lem:weightedmeans} guarantees us that there exists a distribution $\mathbb{P}_i$ such that if we sample a set $S_j$ of $17825792 \cdot k\left(\frac{\beta^{7/12}}{\varepsilon}\right)^6 \log(2k/\delta)$ points, then $M_{j-1}\cup S$ is an $(\varepsilon/4,\beta)$ mean seeding set of $C_j$ with probability $1-\delta/k$. Moreover, Lemma \ref{lem:preprocess2} guarantees us that we have to try at most $13\log(nk/\varepsilon)$ distributions to do find the correct $\mathbb{P}_i$.
Extracting all candiate means for each $\mathbb{P}_i$ via Lemma \ref{lem:constructive} takes time $\left(\frac{10\beta\cdot |S_j|}{\varepsilon}+1\right)^{|S_j|}$ and results in $\left(\frac{10\beta\cdot |S_j|}{\varepsilon}+1\right)^{|S_j|}$ candidate means.

Thus, the overall number of candidate centers $M_k$ generated by the procedure, as well as the running time, is
$$2^{|S_1|}\cdot \prod_{j=2}^k 13\log(nk/\varepsilon) \cdot \left(\frac{10\beta\cdot |S_j|}{\varepsilon}+1\right)^{|S_j|} = \log^k n \cdot 2^{\eta\cdot k^2 \cdot \varepsilon^{-12} \log^2(k/(\varepsilon\delta))}$$
for some absolute constant $\eta$.
Moreover by the union bound, one of the $M_k$ must $\varepsilon$ cover all clusters with probability $1-\delta$.
Notice that if $\log n < k^2$, then $\log^k n$ is absorbed by $2^{\eta\cdot k^2 \cdot \varepsilon^{-12} \log^2(k/(\varepsilon\delta))}$ with a suitable rescaling of $\eta$. If $\log n>k^2$, then $\log^k n <2^{\sqrt{\log n}\log\log n}<n^{o(1)}$.

We account for the enumeration over the number of clusters from each $A_i$ via another rescaling of $\eta$.
For a given $M$ and $\mathbb{P}_i$, the probabilities can be computed in time $O(n\cdot d\cdot |M|)$
Thus, the overall running time to obtain a set of centers that $\varepsilon$ covers all clusters of the optimum is
$$n^{1+o(1)} \cdot d \cdot 2^{\eta\cdot k^2 \cdot \varepsilon^{-12} \log^2(k/(\varepsilon\delta))},$$
and this completes the proof.
\end{proof}

\paragraph{Enumerating over Sizes and Obtaining the Parameterized PTAS.}
We complete this section by funneling the mean-seeding procedure into a PTAS.

\begin{theorem}
    There exists an algorithm running in time
    $$O\left(n^{1+o(1)}d\cdot  2^{\eta\cdot k^2 \cdot \varepsilon^{-12} \log^2(k/(\varepsilon\delta))}\right),$$ 
    for some absolute constant $\eta$, that computes a $(1+\varepsilon)$-approximate solution to $\ell_2^2$ $k$-MinSum Clustering with probability $1-\delta$.
\end{theorem}
\begin{proof}
Given a set of candidate centers obtained via Lemma \ref{lem:approxmeans} and an estimate $\widehat{\OPT}$ of the optimal MinSum clustering cost $\OPT$, we wish to find an assignment of points to centers such that the clustering that has cost $(1+\varepsilon)\cdot\widehat{\OPT}$, or verify that no such assignment exists.
Note that given a clustering, we can verify its cost in time $O(ndk)$ by computing the mean of every cluster and then using the first identity of Lemma \ref{lem:magic}.

We first notice that if we are given an $\alpha$-approximation $\widehat{\OPT_{kmeans}}$ to an $k$-means clustering $\OPT_{kmeans}$, we also know $\OPT\in \left[\widehat{\OPT_{kmeans}},n\cdot \widehat{\OPT_{kmeans}}\right]$. A constant, say $20$, approximation to $k$-means can be found in time $\tilde{O}(nd + k^2)$ \cite{DraganovSS24}. We thus can efficiently obtain  $(1+\varepsilon)$ approximate value of $\OPT$ using at most $2\varepsilon^{-1}\log(20n)$ estimates.

Suppose we are given $\widehat{\OPT}$, as well as a candidate set of centers $C=\{c_1,c_2,\ldots c_k\}$.
Now, we discretize the cost of all points to each cluster $c_i$, starting at $\frac{\varepsilon}{n^2}\cdot \widehat{\OPT}$ by powers of $(1+\varepsilon)$, going all the way up to $\widehat{\OPT}$. Define 
$$G_{i,j}=\left\{p~|~ (1+\varepsilon)^{j-1} \cdot \frac{\varepsilon}{n^2}\cdot \widehat{\OPT}\leq \|p-c_i\|^2 \leq (1+\varepsilon)^{j} \cdot \frac{\varepsilon}{n^2}\cdot\widehat{\OPT}\right\}$$ 
with $G_{i,0} = \left\{p~|~ \|p-c_i\|^2 \leq \frac{\varepsilon}{ n}\cdot\widehat{\OPT}\right\}$. Notice that if $\|p-c_i\|^2>(1+\varepsilon)\cdot \widehat{\OPT}$, then $p$ cannot be served by $c_i$ without invalidating $\widehat{OPT}$ as an accurate estimate of $\OPT$. Thus we have at most $2\varepsilon^{-1}\log \frac{n^2}{\varepsilon}$ many sets $G_{i,j}$. Finally, consider the set 
$B_{1_{j'},2_{j''},1_{j'''},\ldots }$ which is the intersection of $G_{1,j}\cap G_{2,j'}\cap G_{2,j''}\ldots$.
Notice that there are $\left(2\varepsilon^{-1}\log \frac{n^2}{\varepsilon}\right)^k$ many sets $B$ and that we can compute the partitioning of the point set $A$ into the sets $B$ in time $ndk\cdot \left(2\varepsilon^{-1}\log \frac{n^2}{\varepsilon}\right)^k$.
We finally discretize the size of subsets of any set $B$ by powers of $(1+\varepsilon)$, for which there are $2\varepsilon^{-1}\log |B| \leq 2k\varepsilon^{-1}\log n$ discretizations. 

We now enumerate over all possible assignments of subsets of sets $B$ to centers $c_i$. Notice that there are at most $\left(2\varepsilon^{-1}\log n\right)^k$ possible sizes, which we multiply by the number $\left(2\varepsilon^{-1}\log \frac{n}{\varepsilon}\right)^k$ of sets $B$.

We claim that if $C$ is the center set of a $(1+\varepsilon)$-approximate solution, then there exists an assignment of the $B$ that is $(1+O(\varepsilon))$ approximate as well. Specifically, consider any assignment $\pi:A\rightarrow C$ cost $\text{cost}_\pi(A,C) = \sum_{p\in A}\|p-\pi(p)\|^2$.
In the following, we use $B_j$ to refer to the intersection of $B$ with $C_i$, i.e. $B_{i,j} =C_i \cap B_{1_{j'},2_{j''},\ldots, }$
Then rewriting the sum, we obtain
\begin{eqnarray*}
\sum_{p \in C_i}(\sum_{j} |B_{i,j}|)\sum_{j>0}|B_{i,j}|\cdot 2^{j-1} \frac{\varepsilon}{n^2}\cdot \widehat{\OPT}
   &\leq &\text{cost}_\pi(A,C) \\
   & \leq & (1+\varepsilon)\cdot \sum_{p \in C_i}(\sum_{j} |B_{i,j}|)\sum_{j>0}|B_{i,j}|\cdot 2^{j} \frac{\varepsilon}{n^2}\cdot \widehat{\OPT} + n^2 \cdot \frac{\varepsilon}{n^2}\cdot \widehat{\OPT}\\
   & =& (1+\varepsilon)\cdot \sum_{p \in C_i}(\sum_{j} |B_{i,j}|)\sum_{j>0}|B_{i,j}|\cdot 2^{j} \frac{\varepsilon}{n^2}\cdot \widehat{\OPT} + \varepsilon\cdot \widehat{\OPT}
\end{eqnarray*}
and moreover
$$(1+\varepsilon)\cdot \sum_{p \in C_i}\left(\sum_{j} |C_{i,j}|\right)\sum_{j>0}|C_{i,j}|\cdot 2^{j} \frac{\varepsilon}{n^2}\cdot \widehat{\OPT}  \leq (1+\varepsilon)\cdot \sum_{p \in C_i}\left(\sum_{j} |C_{i,j}|\right)\sum_{j>0}|C_{i,j}|\cdot 2^{j-1} \frac{\varepsilon}{n^2}\cdot \widehat{\OPT}.$$
In other words, using the discretizations $B$ instead of the correct points in the assignment of $A$ to $C$ preserves the cost up to a multiplicative factor $(1+\varepsilon)$ and an additive $\varepsilon\cdot \widehat{\OPT}$.

Next, observe that if we have an estimate $|B_{i,j}|\leq 
 \hat{B_{i,j}} \leq (1+\varepsilon)\cdot |B_{i,j}|$, then
$\sum_{j}|B_{i,j}|  \leq \sum_{j}\hat{B_{i,j}} \leq (1+\varepsilon)\cdot \sum_{j}|B_{i,j}|$. Therefore, using the discretized estimates of $|B_{i,j}|$, we also have

\begin{align*}
\sum_{p \in C_i}(\sum_{j} \hat{B_{i,j}})\sum_{j>0}\hat{B_{i,j}}|\cdot 2^{j-1} \frac{\varepsilon}{n^2}\cdot \widehat{\OPT} 
   &\leq \text{cost}_\pi(A,C) \\
    &\leq  (1+\varepsilon)^3 \sum_{p \in C_i}\left(\sum_{j} \hat{B_{i,j}}\right)\sum_{j>0}\hat{B_{i,j}}\cdot 2^{j} \frac{\varepsilon}{n^2}\cdot \widehat{\OPT} + \varepsilon\cdot \widehat{\OPT}
\end{align*}

Given a (discretized) assignment of the sets $B$ to $C$, we can now extract a clustering as follows.
In the following the value of $j$ is not necessary so we omit the subscript $j$ from $B_{i,j}$.
We sort $\hat{B_{i}}$ by sizes, breaking ties arbitrarily.
We assign $\hat{B_{i}}$ many arbitrary points of $B$ to cluster $C_i$ with center $c_i$. The final cluster $C_{i'}$ in the ordering is assigned the remaining points. Notice that assigning fewer points to $C_{i'}$ can only decrease the cost of $C_{i'}$. 

The cost of this assignment can only be cheaper than the estimated upper bound
$$(1+\varepsilon)^3 \sum_{p \in C_i}(\sum_{j} \hat{B_{i,j}})\sum_{j>0}\hat{B_{i,j}}\cdot 2^{j} \frac{\varepsilon}{n^2}\cdot \widehat{\OPT} + \varepsilon\cdot \widehat{\OPT},$$ 
as we can only assign fewer points from every group $B$ to a cluster and the cost of the points can only be cheaper than the estimated upper bound. As mentioned above, evaluating the cost of the resulting clustering takes time $O(ndk)$.

Thus, assuming that $\OPT\leq \hat{OPT}\leq (1+\varepsilon)\cdot \OPT$ and that we were working with a suitable $\varepsilon$-approximate candidate set of centers $C$, we can extract a clustering with cost at most $(1+\varepsilon)^5 \cdot \OPT$ in time $O\left(nd \left(2\varepsilon^{-1}\log n\right)^k\cdot \left(2\varepsilon^{-1}\log \frac{n}{\varepsilon}\right)^k\right)$
multiplying this figure by the number of candidate values of $OPT$ and the number of candidate centers obtained via Lemma \ref{lem:approxmeans} yields a running time of
$$O\left(nd  \cdot \left(2\varepsilon^{-1}\log \frac{20n}{\varepsilon}\right)^{3k}\cdot n^{o(1)} \cdot 2^{\eta\cdot k^2 \cdot \varepsilon^{-12} \log^2(k/(\varepsilon\delta))}\right)$$ 
plus the running time for computing the candidate centers. Using $(\log n)^k \leq k^{3k} + 2^{\sqrt{\log n}\log\log n}\leq k^{3k}+n^{o(1)}$, rescaling $\varepsilon$ by a factor of $10$, this yields a $(1+\varepsilon)$ approximation with probability $1-\delta$ in time
$$O\left(n^{1+o(1)}d\cdot 2^{\eta\cdot k^2 \cdot \varepsilon^{-12} \log^2(k/(\varepsilon\delta))}\right)$$ 
for some absolute constant $\eta$.
\end{proof}

\section{Learning-Augmented \texorpdfstring{$\ell_2^2$}{L22} Min-Sum \texorpdfstring{$k$}{k}-Clustering}
\seclab{sec:learn:aug}
In this section, we describe and analyze our learning-augmented algorithm for $\ell_2^2$ min-sum $k$-clustering, corresponding to \thmref{thm:learn:aug}. 


We first recall the following property describing the $1$-means optimizer for a set of points. 
\begin{fact}
\factlab{fact:means:center}
\cite{InabaKI94}
Given a set $X\subset\mathbb{R}^d$ of points, the unique minimizer of the $1$-means objective is 
\[\frac{1}{|X|}\sum_{x\in X}x=\argmin_{c\in\mathbb{R}^d}\sum_{x\in X}\|x-c\|_2^2.\]
\end{fact}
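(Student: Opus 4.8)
The plan is to derive this directly from the first identity of \lemref{lem:magic}. Writing $\mu := \frac{1}{|X|}\sum_{x\in X} x$, that identity (applied with $A = X$) gives, for every $c \in \mathbb{R}^d$,
\[\sum_{x\in X}\|x-c\|_2^2 = \sum_{x\in X}\|x-\mu\|_2^2 + |X|\cdot\|\mu - c\|_2^2.\]
The first summand on the right is a constant independent of $c$, and the second is nonnegative, so the whole expression is at least $\sum_{x\in X}\|x-\mu\|_2^2$, with equality precisely when $|X|\cdot\|\mu-c\|_2^2 = 0$. Since $X$ is nonempty we have $|X| \geq 1$, so this forces $c = \mu$. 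Hence $\mu$ attains the minimum and is the unique minimizer, which is exactly the claim.

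If one prefers a self-contained argument rather than invoking \lemref{lem:magic}, I would instead expand $\|x-c\|_2^2 = \|x-\mu\|_2^2 + 2\langle x-\mu,\ \mu-c\rangle + \|\mu-c\|_2^2$ and sum over $x \in X$; the cross term vanishes because $\sum_{x\in X}(x-\mu) = \sum_{x\in X} x - |X|\mu = 0$ by the definition of $\mu$, leaving exactly the displayed identity, after which the same minimization argument applies.

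There is no real obstacle here; the statement is essentially a restatement of one half of \lemref{lem:magic}. The only point deserving a word of care is that uniqueness of the minimizer uses $|X| \geq 1$ (i.e.\ that $X$ is nonempty), which is implicit in the hypothesis.
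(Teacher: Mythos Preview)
Your proof is correct. The paper does not provide its own proof of this statement---it is stated as a cited fact from \cite{InabaKI94}---so there is nothing to compare against; your derivation via \lemref{lem:magic} (itself attributed to the same reference) is the standard argument.
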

We next recall the following identity, which presents an equivalent formulation of the $\ell_2^2$ min-sum $k$-clustering objective. 
\begin{fact}
\cite{InabaKI94}
\factlab{fact:minsum}
For each cluster $C_i$ of points, let $c_i$ be the geometric mean of the points, i.e., 
\[c_i=\frac{1}{|C_i|}\sum_{x\in C_i} x.\]
Then
\[\frac{1}{2}\sum_{i\in[k]}\sum_{x_u,x_v\in C_i}\|x_u-x_v\|_2^2=\sum_{i\in[k]}|C_i|\cdot\sum_{x\in C_i}\|x-c_i\|_2^2.\]
\end{fact}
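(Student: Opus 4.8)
The plan is to reduce the claimed global identity to a single per-cluster identity and then invoke \lemref{lem:magic}. First I would observe that both sides of the asserted equation are sums over $i\in[k]$, with the $i$-th summand on the left being $\tfrac12\sum_{x_u,x_v\in C_i}\|x_u-x_v\|_2^2$ and on the right being $|C_i|\cdot\sum_{x\in C_i}\|x-c_i\|_2^2$. Hence it suffices to prove, for each fixed $i$, that $\sum_{x_u,x_v\in C_i}\|x_u-x_v\|_2^2 = 2|C_i|\cdot\sum_{x\in C_i}\|x-c_i\|_2^2$; summing this over $i\in[k]$ and dividing by $2$ yields the statement.

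For this per-cluster identity the cleanest route is to apply the second bullet of \lemref{lem:magic} with $A=C_i$: by hypothesis $c_i=\frac{1}{|C_i|}\sum_{x\in C_i}x$ is exactly the arithmetic mean of the points of $C_i$ (and, by \factref{fact:means:center}, the unique $1$-means optimizer, though uniqueness is not needed here), so \lemref{lem:magic} gives $\sum_{x_u,x_v\in C_i}\|x_u-x_v\|_2^2 = 2|C_i|\cdot\sum_{x\in C_i}\|x-c_i\|_2^2$ verbatim. If instead a self-contained derivation is preferred, I would expand $\|x_u-x_v\|_2^2=\|x_u\|_2^2-2\langle x_u,x_v\rangle+\|x_v\|_2^2$ and sum over all ordered pairs $(x_u,x_v)\in C_i\times C_i$ to get $2|C_i|\sum_{x\in C_i}\|x\|_2^2-2\bigl\|\sum_{x\in C_i}x\bigr\|_2^2$, then use $\sum_{x\in C_i}x=|C_i|c_i$ together with the bias–variance identity $\sum_{x\in C_i}\|x-c_i\|_2^2=\sum_{x\in C_i}\|x\|_2^2-|C_i|\,\|c_i\|_2^2$ (which is the first bullet of \lemref{lem:magic} with $c=0$, or a one-line expansion) to rewrite this as $2|C_i|\sum_{x\in C_i}\|x-c_i\|_2^2$.

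There is essentially no obstacle: the statement is an immediate corollary of \lemref{lem:magic}. The only point meriting a line of care is confirming that the $c_i$ in the statement is the arithmetic mean of $C_i$ so that \lemref{lem:magic} is applicable, which holds by the defining hypothesis $c_i=\frac{1}{|C_i|}\sum_{x\in C_i}x$.
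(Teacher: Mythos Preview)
Your proposal is correct. The paper does not actually prove this statement---it is stated as a cited fact from \cite{InabaKI94}, the same source as \lemref{lem:magic}---so your derivation by applying the second bullet of \lemref{lem:magic} to each cluster $C_i$ and summing is exactly the intended justification.
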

Given \factref{fact:minsum}, it is more convenient for us to rescale the $\ell_2^2$ min-sum $k$-clustering objective for an input set $X$ in this section to be defined as:
\[\min_{C_1,\ldots,C_k}\frac{1}{2}\sum_{i\in[k]}\sum_{p,q\in C_i\cap X}\|p-q\|_2^2.\]
We now formally define the precision and recall guarantees of a label predictor. 
\begin{definition}[Label predictor]
\deflab{def:label:oracle}
Suppose that there is an oracle that produces a label $i\in[k]$ for each $x\in X$, so that the labeling partitions $X=P_1\dot\cup\ldots\dot\cup P_k$ into $k$ clusters $P_1,\ldots,P_k$, where all points in $P_i$ have the same label $i\in[k]$. 
We say the oracle is a \emph{label predictor with error rate} $\alpha$ if there exists some fixed optimal min-sum clustering $P^*_1,\ldots,P^*_k$ such that for all $i\in[k]$,
\[|P_i\cap P^*_i|\ge(1-\alpha)\max(|P_i|,|P^*_i|).\]
We say that $P^*=\{P^*_1,\ldots,P^*_k\}$ is the clustering consistent with the label oracle. 
\end{definition}
We also recall the following guarantees of previous work on learning-augmented $k$-means clustering for a label predictor with error rate $\alpha\in\left[0,\frac{1}{2}\right)$. 
\begin{theorem}
\cite{NguyenCN23}
\thmlab{thm:each:cluster}
Given a label predictor with error rate $\alpha<\frac{1}{2}$ consistent with some clustering $P^*=\{P^*_1,\ldots,P^*_k\}$ with centers $\{c^*_1,\ldots,c^*_k\}$, there exists a polynomial-time algorithm $\LearnedCenters$ that outputs a set of centers $\{c_1,\ldots,c_k\}$, so that for each $i\in[k]$,
\[\sum_{x\in P^*_i}\|x-c_i\|_2^2\le(1+\gamma_\alpha\alpha)\sum_{x\in P^*_i}\|x-c^*_i\|_2^2,\]
where $\gamma_\alpha=7.7$ for $\alpha\in\left[0,\frac{1}{7}\right)$ or $\gamma_\alpha=\frac{5\alpha-2\alpha^2}{(1-2\alpha)(1-\alpha)}$ for $\alpha\in\left[0,\frac{1}{2}\right)$. 
\end{theorem}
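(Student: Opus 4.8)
The plan is to exploit two separability properties of the Euclidean $k$-means objective. First, $\sum_{i\in[k]}\sum_{x\in P^*_i}\|x-c_i\|_2^2$ splits over the clusters, so it suffices to construct each $c_i$ from only the promise that the predicted cluster $P_i$ and the hidden true cluster $P^*_i$ overlap in at least $(1-\alpha)\max(|P_i|,|P^*_i|)$ points. Second, within a cluster the squared norm splits over the $d$ coordinates, and since the true center $c^*_i$ minimizes the $1$-means cost of $P^*_i$ by \factref{fact:means:center}, it suffices to prove the bound with $c^*_i$ replaced by the coordinate-wise centroid $\mu_i:=\frac{1}{|P^*_i|}\sum_{x\in P^*_i}x$ (this only makes the right-hand side no larger). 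Hence it is enough to build $c_i$ coordinate by coordinate: for every $i\in[k]$ and $j\in[d]$ we want a scalar $(c_i)_j$ with $\sum_{x\in P^*_i}\big(x_j-(c_i)_j\big)^2\le(1+\gamma_\alpha\alpha)\sum_{x\in P^*_i}\big(x_j-(\mu_i)_j\big)^2$, after which we sum over $j$ and then over $i$. Each such subproblem is a \emph{one-dimensional robust mean} problem: we observe the multiset $A:=\{x_j:x\in P_i\}$, there is an unknown multiset $A^*:=\{x_j:x\in P^*_i\}$ with $|A\cap A^*|\ge(1-\alpha)\max(|A|,|A^*|)$ (the overlap is inherited because matched points agree in every coordinate), and we must return a scalar $\hat c$ with $\sum_{a\in A^*}(a-\hat c)^2\le(1+\gamma_\alpha\alpha)\cdot\mathsf{opt}$, where $\mathsf{opt}:=\sum_{a\in A^*}(a-\mu^*)^2$ and $\mu^*$ is the mean of $A^*$.

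For the one-dimensional problem I would run the following. Write $n:=|A|$; the overlap promise forces $|A^*|\in[(1-\alpha)n,\,n/(1-\alpha)]$ and bounds both $|A\setminus A^*|$ and $|A^*\setminus A|$ by \O{\alpha n}. Sort $A$ and enumerate all contiguous windows $W$ obtained by deleting some number of the smallest and some number of the largest elements, where the total number deleted is at most a $\Theta(\alpha)$-fraction (if $\alpha$ is not supplied to the algorithm, one also iterates over geometric guesses of it); for each window compute its mean $\mu(W)$ and its $1$-means cost $\Phi(W):=\sum_{a\in W}(a-\mu(W))^2$, and output $\hat c:=\mu(\widehat W)$ for the window $\widehat W$ of minimum size-normalized $\Phi$ among windows of an admissible size. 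Since there are \O{n^2} windows per coordinate and \O{kd} subproblems, this is polynomial time.

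Correctness splits into two halves. \textbf{A good certified window exists.} Sort $A^*$, delete its extreme $\Theta(\alpha)$-fraction of elements from each end, and replace the freed slots by the corresponding elements of $A\cap A^*$; because $A$ has at most \O{\alpha n} bad elements and these substitutions only move values inward, the result $W_0$ is a legitimate contiguous window of $A$. By the parallel-axis identity $\sum_{a\in A^*}(a-c)^2=\mathsf{opt}+|A^*|\,(c-\mu^*)^2$, all that remains is to bound the displacement $|\mu(W_0)-\mu^*|$; this is controlled by Cauchy--Schwarz applied to the deleted mass (whose squared deviation is at most $\mathsf{opt}$) and to the inserted mass (which lies in the central band), and optimizing the trim fraction gives $|A^*|\,(\mu(W_0)-\mu^*)^2\le(\gamma_\alpha\alpha)\,\mathsf{opt}$ together with $\Phi(W_0)\le(1+\gamma_\alpha\alpha)\,\mathsf{opt}$ up to the size-normalization. \textbf{The chosen window is good.} The selected $\widehat W$ has $\Phi$ no larger than that of $W_0$, so $\widehat W$ is certifiably concentrated; since $\widehat W$ and $W_0$ are both central windows of $A$ of comparable admissible size, they share $n-\O{\alpha n}$ (hence mostly good) elements, so comparing $\mu(\widehat W)$ with $\mu(W_0)$ — again by the parallel-axis identity and Cauchy--Schwarz, now using the certified small spread of $\widehat W$ — yields $\sum_{a\in A^*}(a-\mu(\widehat W))^2\le(1+\gamma_\alpha\alpha)\,\mathsf{opt}$ with the same constant. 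Summing over the $d$ coordinates gives $\sum_{x\in P^*_i}\|x-c_i\|_2^2\le(1+\gamma_\alpha\alpha)\sum_{x\in P^*_i}\|x-c^*_i\|_2^2$; this algorithm is \LearnedCenters and recovers the guarantee of \cite{NguyenCN23}. The value of $\gamma_\alpha$ falls out of balancing the single free parameter (the trim fraction) against the two displacement estimates, and the regimes $\alpha<\frac{1}{7}$ (constant $7.7$) and $\alpha<\frac{1}{2}$ (constant $\frac{5\alpha-2\alpha^2}{(1-2\alpha)(1-\alpha)}$) are separated by which estimate is tight.

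I expect the technical heart to be the two mean-displacement estimates and their sharpness. One must simultaneously account for (i) the extreme mass of $A^*$ that was trimmed, (ii) the good mass of $A$ that was inserted in its place, (iii) the residual bad mass of $A$ that the trim cannot remove, and (iv) the slack between the certificate $\Phi(\widehat W)$ and the true cost of $\mu(\widehat W)$ on $A^*$; only a careful joint accounting yields a clean multiplicative $(1+\gamma_\alpha\alpha)$ rather than $1+\O{\alpha}$ with an unspecified constant. Pinning down the admissible window sizes — so that the (guessed) size of the true cluster and the trim amount stay mutually consistent with the $\alpha$-overlap promise — is the bookkeeping that produces the boundary $\alpha<\frac{1}{2}$ and the sub-case $\alpha<\frac{1}{7}$.
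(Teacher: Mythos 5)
The paper does not actually prove this statement --- it is imported verbatim from \cite{NguyenCN23}, and the only in-paper content is the description of $\LearnedCenters$ (\algref{alg:aug:kmeans}): decompose per cluster and per coordinate, enumerate the intervals containing $(1-\O{\alpha})|P_i|$ of the predicted points, and take the mean of the cheapest one. Your reconstruction of the algorithm matches this exactly, and your two-part correctness outline (exhibit one certified low-cost window built from $A^*$ by trimming its extremes, then argue the cost-minimizing window the algorithm actually picks must have a nearby mean because the two windows share all but an $\O{\alpha}$-fraction of their, mostly good, elements) is the same strategy as the cited proof. The one substantive gap, which you yourself flag, is that the entire quantitative content of the theorem --- the constant $7.7$ for $\alpha<\frac{1}{7}$ and the expression $\frac{5\alpha-2\alpha^2}{(1-2\alpha)(1-\alpha)}$ whose $(1-2\alpha)$ denominator explains why the guarantee degenerates as $\alpha\to\frac{1}{2}$ --- lives in the two mean-displacement estimates that you only gesture at via ``Cauchy--Schwarz and balancing the trim fraction.'' As submitted, your argument establishes a $(1+\O{\alpha})$ bound with an unspecified constant rather than the stated $\gamma_\alpha$; to claim the theorem as written you would either have to carry out that accounting or, as the paper does, simply cite \cite{NguyenCN23}.
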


\paragraph{Description of $\LearnedCenters$.}
For the sake of completeness, we briefly describe the algorithm $\LearnedCenters$ underlying \thmref{thm:each:cluster}. 
The algorithm decomposes the $k$-means clustering objective by considering the subset $P_i$ of the input dataset $X$ that are assigned each label $i\in[k]$ by the oracle. 
The algorithm further decomposes the $k$-means clustering objective along the $d$ dimensions, by considering the $j$-th coordinate of each subset $P_i$, for each $j\in[d]$. 
Now, although an $\alpha$ fraction of the points in $P_i$ can be incorrectly labeled, there are two main cases: 1) $P_i$ includes a number of mislabeled points that are far from the true mean and hence easy to prune away, or 2) $P_i$ includes a number of mislabeled points that are difficult to identify due to their proximity to the true mean. 
However, in the latter case, these mislabeled points only has a small effect on the overall $k$-means clustering objective. 
Hence, it suffices for the algorithm to handle the first case, which it does by selecting the interval of $(1-\O{\alpha})$ points of $P_i$ in dimension $j$ that has the best clustering cost. 
The mean of the points of $P_i$ in dimension $j$ that lie in that interval then forms the $j$-th coordinate of the $i$-th centroid output by algorithm. 
The algorithm repeats across $j\in[d]$ and $i\in[k]$ to form $k$ centers that are well-defined in all $d$ dimensions. 
We give the algorithm formally in \algref{alg:aug:kmeans}. 

\begin{algorithm}[!htb]
\caption{$\LearnedCenters$: learning-augmented $k$-means clustering \cite{NguyenCN23}}
\alglab{alg:aug:kmeans}
\begin{algorithmic}[1]
\Require{Dataset $X$ with partition $P_1,\ldots,P_k$ induced by label predictor with error rate $\alpha$}
\Ensure{Centers $c_1,\ldots,c_k$ for $(1+\O{\alpha})$-optimal $k$-means clustering}
\For{$i\in[k]$}
\For{$j\in[d]$}
\State{Let $\omega_{i,j}$ be the collection of all intervals that contain $(1-\O{\alpha})|P_i|$ points of $P_{i,j}$}
\State{Let $c_{i,j}$ be the center with the lowest $k$-means clustering cost of any interval in $\omega_{i,j}$ }
\EndFor
\State{$c_i\gets\{c_{i,j}\}_{j\in[d]}$ for all $i\in[d]$}
\State{\Return $\{c_1,\ldots,c_k\}$}
\EndFor
\end{algorithmic}
\end{algorithm}

By \factref{fact:means:center} and \factref{fact:minsum}, it follows that these centers are also good centers for the clustering induced by a near-optimal $\ell_2^2$ min-cost $k$-clustering.  
Specifically, the optimal center of a cluster of points for $\ell_2^2$ min-cost $k$-clustering is the centroid of the cluster and similarly, the optimal center of a cluster of points for $k$-means clustering is the centroid of the cluster. 
See \lemref{lem:flow:to:cost:two} for the formal details. 

Unfortunately, although the centers $\{c_1,\ldots,c_k\}$ returned by $\LearnedCenters$ are good centers for the clustering induced by a near-optimal $\ell_2^2$ min-cost $k$-clustering, it is not clear what the resulting assignment should be. 
In fact, we emphasize that unlike $k$-means clustering, the optimal $\ell_2^2$ min-cost $k$-clustering may not assign each point to its closest center. 

\paragraph{Constrained min-cost flow.}
To that end, we now create a constrained min-cost flow problem as follows. 
We first create a source node $s$ and a sink node $t$ and require that $n=|X|$ flow must be pushed from $s$ to $t$. 
We create a node $u_x$ for each point $x\in X$ and create a directed edge from $s$ to each node $u_x$ with capacity $1$ and cost $0$. 
There are no more outgoing edges from $s$ or incoming edges to each $u_x$. 
This ensures that to achieve $n$ flow from $s$ to $t$, a unit of flow must be pushed across each node $u_x$. 

For each center $c_i$ output by our learning-augmented algorithm, we create a node $v_i$. 
For each $x\in X$, $i\in[k]$, create a directed edge from $u_x$ to $v_i$ with capacity $1$ and cost $\frac{1}{1-\alpha}\cdot|P_i|\cdot\|x-c_i\|_2^2$. 
There are no other outgoing edges from $u_x$, thus ensuring that a unit of flow must exit each node $u_x$ to the nodes $v_i$ representing the clusters, and with approximately the corresponding cost if $x$ were assigned to center $c_i$. 
We then create a directed edge from each node $v_i$ to $t$ with capacity $\frac{1}{1-\alpha}\cdot|P_i|$ and cost $0$. 
Finally, we require that at least $(1-\alpha)\cdot|P_i|$ flow goes through node $v_i$, so that the number of points assigned to each center $c_i$ is consistent with the oracle.  
The construction in its entirety appears in \figref{fig:flow}. 

\begin{figure*}[!htb]
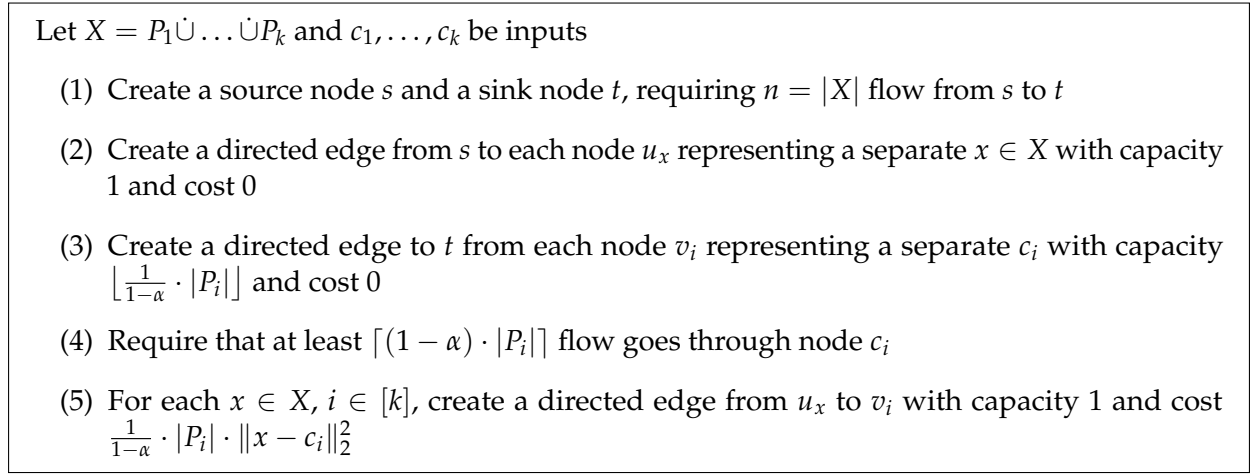

\begin{mdframed}
Let $X=P_1\dot\cup\ldots\dot\cup P_k$ and $c_1,\ldots,c_k$ be inputs
\begin{enumerate}
\item 
Create a source node $s$ and a sink node $t$, requiring $n=|X|$ flow from $s$ to $t$
\item 
Create a directed edge from $s$ to each node $u_x$ representing a separate $x\in X$ with capacity $1$ and cost $0$
\item 
Create a directed edge to $t$ from each node $v_i$ representing a separate $c_i$ with capacity $\flr{\frac{1}{1-\alpha}\cdot|P_i|}$ and cost $0$
\item 
Require that at least $\ceil{(1-\alpha)\cdot|P_i|}$ flow goes through node $c_i$
\item
For each $x\in X$, $i\in[k]$, create a directed edge from $u_x$ to $v_i$ with capacity $1$ and cost $\frac{1}{1-\alpha}\cdot|P_i|\cdot\|x-c_i\|_2^2$
\end{enumerate}
\end{mdframed}
\caption{Constrained min-cost flow problem}
\figlab{fig:flow}
\end{figure*}

\begin{algorithm}[!htb]
\caption{Learning-augmented min-sum $k$-clustering}
\alglab{alg:aug:minsum}
\begin{algorithmic}[1]
\Require{Dataset $X$ with partition $P_1,\ldots,P_k$ induced by label predictor with error rate $\alpha$}
\Ensure{Labels for all points consistent with a $(1+\O{\alpha})$-optimal min-sum $k$-clustering}
\State{Let $c_1,\ldots,c_k$ be the output centers of $\LearnedCenters$ on $P_1,\ldots,P_k$}
\State{Create a min-cost flow problem $\calF$ with required flow $n$ as in \figref{fig:flow}}
\State{Solve the flow problem $\calF$}
\State{For each $x\in X$, let the flow from $u_x$ be sent to the node $v_{\ell_x}$, so that $\ell_x\in[k]$}
\State{Label $x$ with $\ell_x$}
\end{algorithmic}
\end{algorithm}

We first show that the $\ell_2^2$ min-sum $k$-clustering cost induced by \algref{alg:aug:minsum} has objective value at most the cost of the optimal flow in the problem $\calF$ created by \algref{alg:aug:minsum}.  
\begin{lemma}
\lemlab{lem:flow:to:cost:one}
Let $F$ be the cost of the flow output by \algref{alg:aug:minsum}. 
Then for the corresponding clustering $Q_1,\ldots,Q_k$ output by \algref{alg:aug:minsum}, we have
\[\frac{1}{2}\sum_{i\in[k]}\sum_{x_u,x_v\in Q_i}\|x_u-x_v\|_2^2\le F.\]
\end{lemma}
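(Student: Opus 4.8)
The plan is to read off the clustering $Q_1,\dots,Q_k$ from an integral optimal flow, identify the flow cost $F$ with a weighted $1$-means cost, and then invoke the two identities of~\cite{InabaKI94} recalled above (\factref{fact:means:center} and \factref{fact:minsum}).

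First I would note that every capacity in the instance $\calF$ of \figref{fig:flow} is an integer (the capacities involve only $\flr{\cdot}$ and $\ceil{\cdot}$ of $\frac{1}{1-\alpha}|P_i|$), so the min-cost flow LP relaxation is integral and the solver used in \algref{alg:aug:minsum} returns an integral flow. Consequently each node $u_x$ receives exactly one unit from $s$ and forwards it along a single edge $u_x\to v_{\ell_x}$, the labels $\ell_x$ are well defined, and $Q_i=\{x\in X:\ell_x=i\}$ is a partition of $X$. Since the edges $s\to u_x$ and $v_i\to t$ have cost $0$, the total flow cost is exactly
\[
F=\sum_{i\in[k]}\sum_{x\in Q_i}\frac{|P_i|}{1-\alpha}\,\|x-c_i\|_2^2 .
\]
Moreover, by flow conservation at $v_i$ together with the capacity $\flr{\frac{1}{1-\alpha}|P_i|}$ of the edge $v_i\to t$, and using that each unit entering $v_i$ arrives along a distinct capacity-$1$ edge from a point of $Q_i$, we get $|Q_i|\le\flr{\frac{1}{1-\alpha}|P_i|}\le\frac{|P_i|}{1-\alpha}$.

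Next I would apply \factref{fact:minsum} to the partition $\{Q_1,\dots,Q_k\}$ to write $\frac12\sum_{i\in[k]}\sum_{x_u,x_v\in Q_i}\|x_u-x_v\|_2^2=\sum_{i\in[k]}|Q_i|\sum_{x\in Q_i}\|x-q_i\|_2^2$, where $q_i$ is the centroid of $Q_i$. By \factref{fact:means:center} the centroid is the $1$-means minimizer of $Q_i$, so $\sum_{x\in Q_i}\|x-q_i\|_2^2\le\sum_{x\in Q_i}\|x-c_i\|_2^2$ for each $i$. Since every term is nonnegative, multiplying by $|Q_i|\le\frac{|P_i|}{1-\alpha}$ and summing over $i$ yields
\[
\frac12\sum_{i\in[k]}\sum_{x_u,x_v\in Q_i}\|x_u-x_v\|_2^2=\sum_{i\in[k]}|Q_i|\sum_{x\in Q_i}\|x-q_i\|_2^2\le\sum_{i\in[k]}\frac{|P_i|}{1-\alpha}\sum_{x\in Q_i}\|x-c_i\|_2^2=F,
\]
which is the claim.

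I do not expect a genuine obstacle here: the only points requiring care are justifying that the flow used is integral (needed both to define the $Q_i$ and to bound $|Q_i|$ by the \emph{integer} capacity on the $v_i\to t$ edge) and checking that the zero-cost edges contribute nothing to $F$. The harder and more interesting direction — that $F$ is itself comparable to the optimal min-sum cost, via feasibility of the analogous flow built from the consistent clustering $P^*$ and the guarantee of \thmref{thm:each:cluster} — is handled separately (cf.\ \lemref{lem:flow:to:cost:two} and \thmref{thm:learn:aug}).
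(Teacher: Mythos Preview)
Your proof is correct and follows essentially the same route as the paper: apply \factref{fact:minsum} to rewrite the min-sum cost via centroids $q_i$, use \factref{fact:means:center} to replace $q_i$ by $c_i$, and then bound $|Q_i|\le\frac{|P_i|}{1-\alpha}$ via the capacity on the $v_i\to t$ edge to recognize the result as $F$. The only difference is that you fold the integrality justification into this lemma, whereas the paper defers it to the surrounding theorem via total unimodularity; this is a matter of organization, not substance.
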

\begin{proof}
Let $\calS$ be any flow output by \algref{alg:aug:minsum} and let $Q_1,\ldots,Q_k$ be the corresponding clustering of $X$. 
Note that $Q_1,\ldots,Q_k$ are well-defined, since each point of $x$ receives exactly one label by \algref{alg:aug:minsum}. 
Let $q_1,\ldots,q_k$ be the geometric mean of the points in $Q_1,\ldots,Q_k$, respectively, so that $q_i=\frac{1}{|Q_i|}\sum_{x \in Q_i}x$ for all $i\in[k]$. 

By \factref{fact:means:center} and \factref{fact:minsum}, we have that
\begin{align*}
\frac{1}{2}\sum_{i\in[k]}\sum_{x_u,x_v\in Q_i}\|x_u-x_v\|_2^2&=\sum_{i\in[k]}|Q_i|\cdot\sum_{x\in Q_i}\|x-q_i\|_2^2\\
&\le\sum_{i\in[k]}|Q_i|\cdot\sum_{x\in Q_i}\|x-c_i\|_2^2.
\end{align*}
Since each node $v_i$ has capacity $\frac{1}{1-\alpha}\cdot|P_i|$, then we have $|Q_i|\le\frac{1}{1-\alpha}\cdot|P_i|$. 
Therefore,
\[\frac{1}{2}\sum_{i\in[k]}\sum_{x_u,x_v\in Q_i}\|x_u-x_v\|_2^2\le\sum_{i\in[k]}\frac{1}{1-\alpha}\cdot|P_i|\cdot\sum_{x\in Q_i}\|x-c_i\|_2^2.\]
Because each $x\in Q_i$ is mapped to $c_i$, then the cost induced by the mapping in the flow $\calS$ is exactly $\frac{1}{1-\alpha}\cdot|P_i|\cdot\|x-c_i\|_2^2$. 
Therefore, the right-hand side is exactly the cost $F$ of the flow $\calS$. 
Hence, we have
\[\frac{1}{2}\sum_{i\in[k]}\sum_{x_u,x_v\in Q_i}\|x_u-x_v\|_2^2\le F,\]
as desired.
\end{proof}

We next show that the cost of the optimal $\ell_2^2$ min-sum $k$-clustering has objective value at least the cost of the optimal in the problem $\calF$ created by \algref{alg:aug:minsum}, up to a $(1+\O{\alpha})$ factor.   
\begin{lemma}
\lemlab{lem:flow:to:cost:two}
Let $F$ be the cost of the optimal solution to the min-cost flow problem $\calF$ in \algref{alg:aug:minsum} and let $\OPT$ be cost of the optimal min-sum $k$-clustering on $X$. 
Let $\gamma_\alpha$ be the fixed constant from \thmref{thm:each:cluster}. 
Then  
\[\OPT\ge(1-\alpha)^2\cdot\frac{1}{1+\gamma_\alpha\alpha}\cdot F.\]
\end{lemma}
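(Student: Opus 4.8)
The plan is to establish the equivalent inequality $F \le \frac{1+\gamma_\alpha\alpha}{(1-\alpha)^2}\cdot\OPT$; since $F$ is the \emph{minimum} cost over feasible flows in $\calF$, it suffices to exhibit one feasible flow of cost at most $\frac{1+\gamma_\alpha\alpha}{(1-\alpha)^2}\cdot\OPT$. Let $P^*=\{P^*_1,\ldots,P^*_k\}$ be the optimal min-sum clustering consistent with the label predictor, with centroids $c^*_1,\ldots,c^*_k$, and let $c_1,\ldots,c_k$ be the centers output by $\LearnedCenters$. The candidate flow $\calS^*$ sends one unit from $s$ through $u_x$ to $v_i$, and then on to $t$, for every $x\in P^*_i$; it has value $\sum_i|P^*_i|=n$ as required.

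I would first check that $\calS^*$ is feasible in $\calF$. Definition~\defref{def:label:oracle} gives $|P_i\cap P^*_i|\ge(1-\alpha)\max(|P_i|,|P^*_i|)$, which together with $|P_i\cap P^*_i|\le\min(|P_i|,|P^*_i|)$ yields both $|P^*_i|\le\frac{1}{1-\alpha}|P_i|$ and $|P^*_i|\ge(1-\alpha)|P_i|$. Since $|P^*_i|$ is an integer, these become $|P^*_i|\le\flr{\frac{1}{1-\alpha}|P_i|}$ and $|P^*_i|\ge\ceil{(1-\alpha)|P_i|}$, which are exactly the upper capacity on $v_i\to t$ and the lower bound on the flow through $v_i$ in \figref{fig:flow}; the unit capacities on the $s\to u_x$ and $u_x\to v_i$ edges are trivially met. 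Hence $\calS^*$ is feasible.

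Next I would bound $\Cost(\calS^*)$. The edge $u_x\to v_i$ is assigned cost $\frac{1}{1-\alpha}|P_i|\cdot\|x-c_i\|_2^2$, so, using $|P_i|\le\frac{1}{1-\alpha}|P^*_i|$ (the same two inequalities rearranged),
\begin{align*}
\Cost(\calS^*) &= \sum_{i\in[k]}\frac{1}{1-\alpha}|P_i|\sum_{x\in P^*_i}\|x-c_i\|_2^2\\
&\le \frac{1}{(1-\alpha)^2}\sum_{i\in[k]}|P^*_i|\sum_{x\in P^*_i}\|x-c_i\|_2^2.
\end{align*}
Applying \thmref{thm:each:cluster} to each term, $\sum_{x\in P^*_i}\|x-c_i\|_2^2\le(1+\gamma_\alpha\alpha)\sum_{x\in P^*_i}\|x-c^*_i\|_2^2$, and then \factref{fact:minsum} to identify $\sum_{i\in[k]}|P^*_i|\sum_{x\in P^*_i}\|x-c^*_i\|_2^2$ with $\OPT$, gives $\Cost(\calS^*)\le\frac{1+\gamma_\alpha\alpha}{(1-\alpha)^2}\cdot\OPT$. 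Therefore $F\le\Cost(\calS^*)\le\frac{1+\gamma_\alpha\alpha}{(1-\alpha)^2}\cdot\OPT$, and rearranging yields $\OPT\ge(1-\alpha)^2\cdot\frac{1}{1+\gamma_\alpha\alpha}\cdot F$.

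The argument is largely bookkeeping, and the only subtleties to watch are: (i) the two factors of $\frac{1}{1-\alpha}$ come from genuinely distinct sources — one is hard-wired into the edge costs and capacities of $\calF$, the other is the discrepancy between the true size $|P^*_i|$ and the predicted size $|P_i|$ — so both must be accounted for; (ii) \thmref{thm:each:cluster} controls the cost of $\LearnedCenters$'s centers on the \emph{true} clusters $P^*_i$ (against $c^*_i$), not on the predicted clusters $P_i$, which is why the witness flow must route according to $P^*$ rather than according to the oracle's labels; and (iii) integrality of $|P^*_i|$ is what lets the floor and ceiling in \figref{fig:flow} be absorbed without loss. I expect (ii) to be the main conceptual point and the chained $\tfrac{1}{1-\alpha}$ factors in (i) to be the place where a careless calculation would go wrong.
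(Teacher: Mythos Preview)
Your proposal is correct and follows essentially the same approach as the paper: exhibit the flow that routes each $x\in P^*_i$ to $v_i$, verify its feasibility using the label-predictor inequalities, and bound its cost by chaining $|P_i|\le\frac{1}{1-\alpha}|P^*_i|$, \thmref{thm:each:cluster}, and \factref{fact:minsum}. If anything, you are slightly more careful than the paper in explicitly checking the lower-bound constraint $|P^*_i|\ge\ceil{(1-\alpha)|P_i|}$ and in noting how integrality of $|P^*_i|$ absorbs the floor and ceiling.
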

\begin{proof}
Let $P^*_1,\ldots,P^*_k$ be an optimal clustering consistent with the label oracle. 
Let $c^*_1,\ldots,c^*_k$ be the optimal centers for $P^*_1,\ldots,P^*_k$ respectively and let $c_1,\ldots,c_k$ be the $k$ centers output by \algref{alg:aug:minsum}. 

By the definition of the label oracle, we have 
\[|P_i\cap P^*_i|\ge(1-\alpha)\max(|P_i|,|P^*_i|),\]
so that
\[|P^*_i|\ge|P_i\cap P^*_i|\ge(1-\alpha)\max(|P_i|,|P^*_i|)\ge(1-\alpha)\cdot|P_i|.\]
Thus, by \factref{fact:minsum}, 
\begin{align*}
\frac{1}{2}\sum_{i\in[k]}\sum_{x_u,x_v\in P^*_i}\|x_u-x_v\|_2^2&=\sum_{i\in[k]}|P^*_i|\cdot\sum_{x\in P^*_i}\|x-c^*_i\|_2^2\\
&\ge\sum_{i\in[k]}(1-\alpha)\cdot|P_i|\cdot\sum_{x\in P^*_i}\|x-c^*_i\|_2^2\\
&=(1-\alpha)^2\sum_{i\in[k]}\frac{1}{1-\alpha}\cdot|P_i|\cdot\sum_{x\in P^*_i}\|x-c^*_i\|_2^2.
\end{align*}
Let $\gamma_\alpha$ be the fixed constant from \thmref{thm:each:cluster}. 
Then by \thmref{thm:each:cluster}, we have that
\[\sum_{x\in P^*_i}\|x-c^*_i\|_2^2\ge\frac{1}{1+\gamma_\alpha\alpha}\cdot\sum_{x\in P^*_i}\|x-c_i\|_2^2.\]
Therefore,
\[\frac{1}{2}\sum_{i\in[k]}\sum_{x_u,x_v\in P^*_i}\|x_u-x_v\|_2^2\ge(1-\alpha)^2\cdot\frac{1}{1+\gamma_\alpha\alpha}\cdot\sum_{i\in[k]}\frac{1}{1-\alpha}\cdot|P_i|\cdot\sum_{x\in P^*_i}\|x-c_i\|_2^2.\]
Note that since $|P_i|\ge|P_i\cap P^*_i|\ge(1-\alpha)\max(|P_i|,|P^*_i|)\ge(1-\alpha)\cdot|P^*_i|$, then we have $|P^*_i|\le\frac{1}{1-\alpha}\cdot|P^*_i|$. 
Thus a valid flow for $\calF$ would be to send $|P^*_i|$ units of flow across each $x\in P^*_i$. 
In other words, $\sum_{i\in[k]}\frac{1}{1-\alpha}\cdot|P_i|\cdot\sum_{x\in P^*_i}\|x-c_i\|_2^2$ is the cost of a valid flow for $\calF$. 

Therefore, by the optimality of the optimal min-cost flow, we have 
\[\sum_{i\in[k]}\frac{1}{1-\alpha}\cdot|P_i|\cdot\sum_{x\in P^*_i}\|x-c_i\|_2^2\ge F,\]
and so
\[\frac{1}{2}\sum_{i\in[k]}\sum_{x_u,x_v\in P^*_i}\|x_u-x_v\|_2^2\ge(1-\alpha)^2\cdot\frac{1}{1+\gamma_\alpha\alpha}\cdot F,\]
as desired.
\end{proof}
Putting together \lemref{lem:flow:to:cost:one} and \lemref{lem:flow:to:cost:two}, it follows that the cost of the clustering induced by \algref{alg:aug:minsum} is a good approximation to the optimal $\ell_2^2$ min-sum $k$-clustering. 
\begin{corollary}
\corlab{cor:aug:correctness}
Let $\gamma_\alpha$ be the fixed constant from \thmref{thm:each:cluster}. 
\algref{alg:aug:minsum} outputs a clustering $Q_1,\ldots,Q_k$ of $X$ such that
\[\frac{1}{2}\sum_{i\in[k]}\sum_{x_u,x_v\in Q_i}\|x_u-x_v\|_2^2\le\frac{1+\gamma_\alpha\alpha}{(1-\alpha)^2}\cdot\OPT,\]
where $\OPT$ is cost of an optimal min-sum $k$-clustering on $X$. 
\end{corollary}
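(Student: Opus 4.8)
The plan is to simply chain the two lemmas just established, both of which are phrased in terms of $F$, the cost of an optimal solution to the constrained min-cost flow instance $\calF$ constructed in \algref{alg:aug:minsum}. First I would observe that $\calF$ is feasible and that \algref{alg:aug:minsum} indeed returns an optimal flow: the proof of \lemref{lem:flow:to:cost:two} exhibits an explicit feasible flow (routing $|P^*_i|$ units through each $x\in P^*_i$ into $v_i$), so an optimum exists, and since the min-cost flow polytope is integral, solving the associated linear program returns an integral optimal flow of cost $F$, which in turn yields a genuine partition $Q_1,\dots,Q_k$ of $X$ (each $u_x$ sends its one unit of flow to exactly one $v_i$).

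Next I would apply \lemref{lem:flow:to:cost:one} to the flow output by the algorithm, which gives
\[\frac{1}{2}\sum_{i\in[k]}\sum_{x_u,x_v\in Q_i}\|x_u-x_v\|_2^2\le F.\]
Then I would apply \lemref{lem:flow:to:cost:two}, which states $\OPT\ge(1-\alpha)^2\cdot\frac{1}{1+\gamma_\alpha\alpha}\cdot F$, equivalently $F\le\frac{1+\gamma_\alpha\alpha}{(1-\alpha)^2}\cdot\OPT$. Substituting the second estimate into the first yields
\[\frac{1}{2}\sum_{i\in[k]}\sum_{x_u,x_v\in Q_i}\|x_u-x_v\|_2^2\le\frac{1+\gamma_\alpha\alpha}{(1-\alpha)^2}\cdot\OPT,\]
which is exactly the claimed bound.

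There is no genuine obstacle here: the corollary is the composition of the two lemmas. The only point worth making explicit is that both lemmas are invoked for the \emph{same} quantity $F$, the optimum of $\calF$ --- \lemref{lem:flow:to:cost:one} is stated for the flow the algorithm outputs, which is optimal by construction, while \lemref{lem:flow:to:cost:two} is stated for the optimal flow, so the two bounds line up. Finally, I would remark that \corref{cor:aug:correctness} is precisely the approximation guarantee of \thmref{thm:learn:aug}: \algref{alg:aug:minsum} runs in polynomial time, since \LearnedCenters is polynomial-time by \thmref{thm:each:cluster} and the min-cost flow instance $\calF$ has size polynomial in $n$ and $k$, and the constant $\gamma$ appearing in \thmref{thm:learn:aug} is exactly the $\gamma_\alpha$ of \thmref{thm:each:cluster}.
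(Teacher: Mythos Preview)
Your proposal is correct and follows essentially the same approach as the paper: both proofs simply combine \lemref{lem:flow:to:cost:one} and \lemref{lem:flow:to:cost:two}, noting that $Q_1,\dots,Q_k$ is a valid partition. You are a bit more explicit than the paper in pointing out that the two lemmas refer to the same quantity $F$ because the algorithm outputs an optimal (integral) flow, which is a helpful clarification but not a different argument.
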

\begin{proof}
Let $\calS$ be the flow output by \algref{alg:aug:minsum} and let $Q_1,\ldots,Q_k$ be the corresponding clustering of $X$. 
We again remark that $Q_1,\ldots,Q_k$ is a valid clustering of $X$, since each point of $x$ receives exactly one label by \algref{alg:aug:minsum}. 
The claim then follows from \lemref{lem:flow:to:cost:one} and \lemref{lem:flow:to:cost:two}. 
\end{proof}
We recall the following folklore integrality theorem for uncapacitated min-cost flow. 
\begin{theorem}
\thmlab{thm:integral:sol}
Any minimum cost network flow problem with integral demands has an optimal solution with integral flow on each edge. 
\end{theorem}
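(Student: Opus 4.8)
The plan is to realize the min-cost flow problem as a linear program whose constraint matrix is totally unimodular and then appeal to the Hoffman--Kruskal theorem. Introduce a variable $f_e$ for each edge $e$ and write the problem as $\min\sum_e c_e f_e$ subject to the conservation constraints $\sum_{e\in\delta^+(v)}f_e-\sum_{e\in\delta^-(v)}f_e=b_v$ for every node $v$ (with $b_s=n$, $b_t=-n$, and $b_v=0$ otherwise), together with the box constraints $\ell_e\le f_e\le u_e$; the lower-bound requirement ``at least $t$ flow through node $v_i$'' in \figref{fig:flow} is first turned into an ordinary edge lower bound by splitting $v_i$ into an in-copy and an out-copy joined by an edge with lower bound $\lceil(1-\alpha)|P_i|\rceil$. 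Collecting the conservation constraints gives $Nf=b$, where $N$ is the node--arc incidence matrix of the directed graph.

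The first key step is the classical fact that $N$ is totally unimodular, i.e., every square submatrix has determinant in $\{-1,0,1\}$. This is proved by induction on the order of the submatrix: each column of $N$ has at most one entry $+1$ and at most one entry $-1$, so a square submatrix $M$ either has an all-zero column (determinant $0$), or a column with a single nonzero entry $\pm1$ (expand the determinant along that column and apply induction), or every column of $M$ has exactly one $+1$ and one $-1$, in which case the rows of $M$ sum to the zero vector and $\det M=0$. Appending signed rows of the identity to encode the box constraints preserves total unimodularity.

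The second step applies the substitution $f'_e:=f_e-\ell_e$, which transforms the problem into one with $f'\ge 0$ and a new right-hand side $b'=b-N\ell$ that is integral because both $b$ and $\ell$ are integral. By the Hoffman--Kruskal theorem, total unimodularity of the constraint matrix together with integrality of $b'$ (and of the integral capacity bounds) implies that the feasible polyhedron $\{f':Nf'=b',\ 0\le f'\le u'\}$ is integral: all of its vertices have integer coordinates. Since this polyhedron is a nonempty bounded polytope---nonempty because an optimal solution exists, bounded because the capacities $\lfloor\frac{1}{1-\alpha}|P_i|\rfloor$ are finite---the linear objective attains its minimum at a vertex, which is an integral flow; translating back via $f_e=f'_e+\ell_e$ preserves integrality. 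There is no real obstacle here, as the statement is classical; the only care needed is the bookkeeping in the node-splitting reduction and in checking that this reduction preserves feasibility and the integrality of all data.
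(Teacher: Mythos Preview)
Your argument is correct, but it takes a genuinely different route from the paper. The paper proves \thmref{thm:integral:sol} by a direct induction on the number of nodes: it writes the LP with $n-1$ conservation constraints, uses the fact that a basic optimal solution has at most $n-1$ edges with positive flow, deduces by averaging that some vertex has at most one incident edge in the support, and observes that this edge must then carry that vertex's entire (integral) demand; removing the vertex completes the induction. Capacities are handled afterwards, in a separate corollary, via an edge-splitting gadget.

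Your approach instead goes through total unimodularity of the node--arc incidence matrix and Hoffman--Kruskal. This is equally classical and has the advantage of treating capacities and lower bounds uniformly (via the $f'=f-\ell$ shift and appending identity rows), so you are in effect proving the theorem and the subsequent capacitated corollary in one stroke; it also immediately yields that \emph{every} vertex of the feasible polytope is integral, which the paper obtains separately by citing Schrijver. The paper's inductive argument, on the other hand, is more elementary and self-contained since it avoids invoking Hoffman--Kruskal as a black box. One minor remark: the theorem is stated abstractly, so the references you make to the specific construction in \figref{fig:flow} (the node-splitting for the $v_i$ lower bounds, the particular capacity values) are not needed for the proof of the theorem itself; they belong to the application rather than to the integrality statement.
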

\begin{proof}
Though the proof is well-known, e.g., ~\cite{Conitzer12}, we repeat it here for the sake of completeness. 
Consider induction on $n$, the number of nodes in the flow graph. 
The statement is vacuously true for $n=0$ and $n=1$, which serve as our base cases. 
Observe that we can write the linear program with $n-1$ constraints and thus there exists an optimal solution where at most $n-1$ edges have positive flow. 
By a simple averaging argument, there exists a vertex $v$ that has at most one incident edge $e$ with positive flow. 
Let $u$ be the other endpoint of the the edge $e=(u,v)$. 
Since it is the only edge incident to $v$, it must satisfy the entire demand of $v$. 
Because $v$ has integer demand, then $e$ has integer flow. 
However, the remainder of the graph has $n-1$ vertices and thus by induction, the remaining of the vertex demands are satisfied by a flow with integer demands. 
\end{proof}
We now adjust the integrality theorem to handle capacitated edges, thereby showing that the resulting solution for the min-cost flow problem in \figref{fig:flow} is integral.

\begin{corollary}
\corlab{cor:integral:sol}
Any minimum cost network flow problem with integral demands and capacities has an optimal solution with integral flow on each edge. 
\end{corollary}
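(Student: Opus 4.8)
The plan is to eliminate the arc capacities by a standard gadget and then invoke \thmref{thm:integral:sol}, which already handles the uncapacitated case with integral demands. Concretely, given an arc $e=(u,v)$ with integral capacity $c_e$ and cost $w_e$, I would delete $e$, introduce a fresh node $r_e$, and add two uncapacitated arcs $(u,r_e)$ of cost $w_e$ and $(v,r_e)$ of cost $0$, then adjust the supply/demand values of $r_e$ and of $v$ by $\pm c_e$ so that exactly $c_e$ units are forced into $r_e$ and the instance remains balanced. In any feasible flow of the new network, conservation at $r_e$ forces the flow $y_e$ on $(v,r_e)$ to equal $c_e-x_e$, where $x_e$ is the flow on $(u,r_e)$; hence nonnegativity of $x_e$ and $y_e$ is precisely the two-sided bound $0\le x_e\le c_e$, and reading $x_e$ as the flow on the original arc $e$ leaves the conservation constraints at $u$ and $v$, as well as the total cost, unchanged. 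Applying this transformation once for each capacitated arc produces an uncapacitated instance whose demands are still integral, since each step only changes existing demand values by the integer $c_e$.

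Next I would observe that this yields a cost-preserving correspondence between the feasible flows of the original capacitated instance and those of the new uncapacitated instance, under which integral flows map to integral flows; in particular the two instances have the same optimal value. Applying \thmref{thm:integral:sol} to the uncapacitated instance gives an optimal flow that is integral on every arc, and mapping it back through the correspondence produces a flow that is feasible for the original instance, integral on every arc, and of cost equal to the common optimal value --- hence an integral optimal solution, as claimed.

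The only place that needs care is the bookkeeping inside the gadget: one has to fix the signs of the demand adjustments so that total supply equals total demand and so that nonnegativity of the two auxiliary arcs encodes exactly the capacity constraint. This is routine, so I do not expect a genuine obstacle.
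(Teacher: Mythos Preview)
Your proposal is correct and follows essentially the same approach as the paper: both proofs replace each capacitated arc $e=(u,v)$ by a gadget with a fresh node, two uncapacitated arcs $(u,r_e)$ and $(v,r_e)$, and demand adjustments of $\pm c_e$, then invoke \thmref{thm:integral:sol} on the resulting uncapacitated instance. Your write-up is in fact more careful than the paper's, since you explicitly spell out the cost-preserving bijection between feasible flows and verify that integrality is preserved under the correspondence.
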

\begin{proof}
The proof follows from a simple gadget to transform a min-cost flow problem with integer-capacitated edges into an uncapacitated min-cost flow problem. 
Suppose there exists a directed edge $e$ from $u$ to $v$ with capacity $c$ and cost $p$. 
Suppose furthermore that $u$ has demand $d_1$ and $v$ has demand $d_2$. 
Then we create an additional vertex $w$ and we replace $e$ with directed edges $e_1$ going from $u$ to $w$ and $e_2$ going from $v$ to $w$. 
We change the demand of $v$ to $d_2-c$, noting this can be negative. 
We also require vertex $w$ to have demand $c$. 
We then have cost $p$ on edge $e_2$ and cost $0$ on edge $e_2$. 
See \figref{fig:min:cost:transform} for an illustration of the transformation. 
Since the resulting graph after the reduction does not have any capacities on the edges, it follows from \thmref{thm:integral:sol} that there exists an integral solution to the original input problem. 
\end{proof}

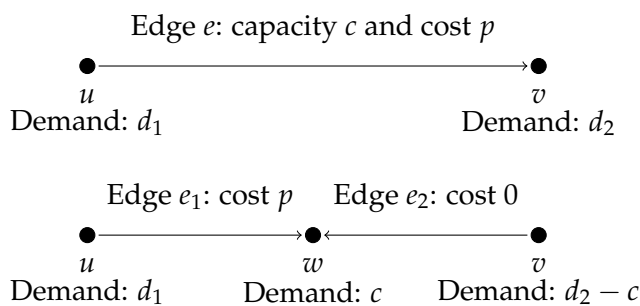
\begin{figure*}[!htb]
\centering
\begin{tikzpicture}[scale=0.5]
\filldraw (-6,4.5) circle (0.2);
\filldraw (6,4.5) circle (0.2);

\node at (-6,3.7){$u$};
\node at (-6,3){Demand: $d_1$};
\node at (6,3.7){$v$};
\node at (6,3){Demand: $d_2$};

\draw[->] (-5.7,4.5) -- (5.7,4.5);
\node at (0,5.5){Edge $e$: capacity $c$ and cost $p$};

\filldraw (-6,0) circle (0.2);
\filldraw (0,0) circle (0.2);
\filldraw (6,0) circle (0.2);

\draw[->] (-5.7,0) -- (-0.3,0);
\draw[<-] (0.3,0) -- (5.7,0);
\node at (-6,-0.8){$u$};
\node at (-6,-1.5){Demand: $d_1$};
\node at (-3,1){Edge $e_1$: cost $p$};
\node at (0,-0.8){$w$};
\node at (0,-1.5){Demand: $c$};
\node at (3,1){Edge $e_2$: cost $0$};
\node at (6,-0.8){$v$};
\node at (6,-1.5){Demand: $d_2-c$};
\end{tikzpicture}
\caption{Example of transformation of capacitated min-cost flow problem into uncapacitated min-cost flow problem.}
\figlab{fig:min:cost:transform}
\end{figure*}

Hence, the min-cost flow solution defines a valid clustering that approximately optimal with respect to the $\ell_2^2$ min-sum $k$-clustering objective. 
However, we further want to show the property holds for the solution returned by a linear program solver. 
In fact, it is well-known the constraint matrix is totally unimodular, i.e., all submatrices have determinant $-1$, $0$, or $1$. 
\begin{theorem}
[Theorem 19.1 in \cite{schrijver1998theory}]
\thmlab{thm:unimodular:corner}
Let $A$ be a totally unimodular matrix and let $b$ be an integer vector. 
Then all vertices of the polyhedron $P=\{x\,\mid\,Ax\le b\}$ are integral. 
\end{theorem}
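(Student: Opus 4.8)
The plan is to combine the standard extreme-point characterization of polyhedra with Cramer's rule. If $P$ has no vertices the statement holds vacuously, so fix a vertex $v \in P \subseteq \mathbb{R}^n$.

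First I would recall that $v$ being a vertex (equivalently, an extreme point, equivalently, a basic feasible point) of $P=\{x : Ax\le b\}$ is equivalent to the following: if $I$ indexes the rows of $A$ for which $A_iv=b_i$ (the constraints active at $v$), then the submatrix $A_I$ has rank $n$. Indeed, if $\mathrm{rank}(A_I)<n$ there is a nonzero $w$ with $A_Iw=0$, and since the inactive constraints are strictly satisfied at $v$, we get $v\pm\delta w\in P$ for all sufficiently small $\delta>0$, contradicting extremality; conversely, if $\mathrm{rank}(A_I)=n$ then $v$ is the unique point satisfying the active constraints with equality and hence cannot be a proper convex combination of other points of $P$. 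Granting this, I would select $n$ linearly independent rows among those in $I$: they form a nonsingular $n\times n$ submatrix $A'$ of $A$ with corresponding right-hand side subvector $b'$ (a subvector of the integer vector $b$), and $v$ is the unique solution of the system $A'x=b'$, i.e.\ $v=(A')^{-1}b'$.

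Next I would invoke total unimodularity. Since $A'$ is a square nonsingular submatrix of $A$, its determinant lies in $\{-1,0,1\}$, and nonsingularity forces $\det(A')=\pm1$. Then by Cramer's rule the $j$-th coordinate of $v$ equals $\det(A'_{(j)})/\det(A')$, where $A'_{(j)}$ is obtained from $A'$ by replacing its $j$-th column with $b'$. As every entry of $A'$ and of $b'$ is an integer, $\det(A'_{(j)})\in\mathbb{Z}$, and dividing by $\pm1$ keeps it an integer; hence $v\in\mathbb{Z}^n$, as claimed.

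The only mildly delicate point is the vertex-characterization step: one must be sure that a vertex is pinned down by exactly $n$ linearly independent \emph{rows of $A$} active at $v$, so that the matrix $A'$ extracted is a genuine square submatrix of $A$ and is genuinely invertible; everything afterwards is an immediate consequence of total unimodularity and Cramer's rule. Alternatively, one can simply quote the basic-feasible-solution theory of linear programming to supply $A'$, $b'$ with $v=(A')^{-1}b'$, after which the proof collapses to the two-line determinant computation above.
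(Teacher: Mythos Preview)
Your argument is correct and is essentially the standard proof (indeed, the one in Schrijver's Theorem~19.1): identify $n$ linearly independent active constraints at the vertex, giving a nonsingular square submatrix $A'$ of $A$ with $\det(A')=\pm 1$ by total unimodularity, and conclude integrality via Cramer's rule.

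Note, however, that the paper does not actually prove this statement; it is quoted as a black-box result from Schrijver and used directly. So there is no ``paper's own proof'' to compare against here---your write-up simply supplies the classical justification that the paper omits by citation.
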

Since the solution of a linear program must lie at a vertex of the feasible polytope, then \thmref{thm:unimodular:corner} implies any solution to the linear program will also be integral. 
Thus a valid clustering can be recovered by using the output of a linear program solver. 
We recall the following various implementations of solvers for linear programs.
\begin{theorem}
\thmlab{thm:lp:solver}
\cite{Karmarkar84,Vaidya89a,Vaidya90,LeeS15,LeeSZ19,CohenLS21,JiangSWZ21}
There exists an algorithm that solves a linear program with $n$ variables that can be encoded in $L$ bits, using $\poly(n,L)$ time.
\end{theorem}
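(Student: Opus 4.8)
The plan is to appeal to the classical polynomial-time solvability of linear programming, which is precisely what the cited works establish; if a self-contained argument were wanted, I would reconstruct the ellipsoid method. First I would reduce optimization to feasibility: solving $\min\{c^\top x : Ax\le b\}$ reduces to binary search on the optimal value $\lambda$, each step testing feasibility of $\{Ax\le b,\ c^\top x\le \lambda\}$. Since all data fits in $L$ bits, the optimal value is a rational whose numerator and denominator are bounded by $2^{\poly(L)}$, so $\poly(L)$ rounds suffice, and at the end the approximate solution can be snapped to a vertex of the feasible polyhedron, invoking the fact (also used via \thmref{thm:unimodular:corner} in the flow application) that vertices have $\poly(n,L)$-bit coordinates.

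Next I would solve each feasibility instance with the ellipsoid method. Initialize with a ball of radius $2^{\poly(n,L)}$ that must contain a feasible point if one exists; if the system is feasible, a tiny perturbation of $b$ makes the feasible set contain a ball of radius $2^{-\poly(n,L)}$. At each iteration, query the current ellipsoid's center: if it satisfies every constraint, output it; otherwise a violated row of $A$ yields a separating halfspace, and we replace the ellipsoid by the minimum-volume ellipsoid containing the relevant half-ellipsoid, whose volume shrinks by a factor $e^{-\Omega(1/n)}$. After $\poly(n,L)$ iterations the volume drops below that of the perturbed feasible ball, certifying infeasibility; each iteration costs $\poly(n,L)$ arithmetic operations.

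The main obstacle is the bit-complexity bookkeeping: one must show the ellipsoid centers and shape matrices stay representable with $\poly(n,L)$ bits throughout (requiring a careful rounding of the square-root updates and a reanalysis of the volume decrease under rounding), and that the perturbation trick faithfully distinguishes feasible from infeasible systems. Interior-point methods (Karmarkar and the subsequent refinements in the citation) give a faster alternative, but the same rational-arithmetic analysis is the crux. For the purposes of this paper no new argument is needed: the constrained min-cost flow program of \figref{fig:flow} has $\poly(n)$ variables with $\poly(n)$-bit data, so invoking any of the cited solvers gives the claimed $\poly(n,L)$ running time directly.
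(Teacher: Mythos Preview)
Your proposal is correct and matches the paper's treatment: the paper gives no proof of this theorem whatsoever, simply citing it as a known black-box result from the listed references. Your additional sketch of the ellipsoid method is sound but goes beyond what the paper does (or needs), since the theorem is invoked purely as a citation.
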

Putting things together, we have the following guarantees for our learning-augmented algorithm.
\begin{theorem}
There exists a polynomial-time algorithm that uses a label predictor with error rate $\alpha$ and outputs a $\frac{1+\gamma_\alpha\alpha}{(1-\alpha)^2}$-approximation to min-sum $k$-clustering, where $\gamma_\alpha$ is the fixed constant from \thmref{thm:each:cluster}. 
\end{theorem}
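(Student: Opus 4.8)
The plan is to assemble the components already established in this section. The algorithm is \algref{alg:aug:minsum}: first invoke $\LearnedCenters$ on the partition $P_1,\ldots,P_k$ supplied by the label predictor, then build the constrained min-cost flow instance $\calF$ of \figref{fig:flow}, solve it, and label each point $x$ by the index $\ell_x$ of the node $v_{\ell_x}$ that receives its unit of flow. I would first argue that every step runs in polynomial time: $\LearnedCenters$ is polynomial by \thmref{thm:each:cluster}; the network $\calF$ has $\O{n+k}$ nodes and $\O{nk}$ edges, and all capacities and demands are integers of polynomial bit-length; and the flow problem, being expressible as a linear program with $\poly(n)$ variables and $\poly(n)$-bit data, is solvable in polynomial time by \thmref{thm:lp:solver}.

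The second ingredient is that the returned solution is an actual clustering, i.e.\ integral. For this I would observe that $\calF$ --- including the lower-bound constraints ``at least $\ceil{(1-\alpha)\cdot|P_i|}$ units pass through $v_i$'' --- still has a totally unimodular constraint matrix after the standard lower-bound-to-capacity reduction of \corref{cor:integral:sol}, so by \thmref{thm:unimodular:corner} every vertex of the feasible polytope is integral. Since a linear-program solver returns a vertex of the feasible polytope, the computed flow is integral, hence each $u_x$ sends its single unit to exactly one $v_i$, the labels $\ell_x$ are well defined, and we obtain a genuine partition $Q_1,\ldots,Q_k$ of $X$.

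Finally, the approximation guarantee is exactly \corref{cor:aug:correctness}: combining \lemref{lem:flow:to:cost:one}, which shows the $\ell_2^2$ min-sum cost of $Q_1,\ldots,Q_k$ is at most the optimal flow cost $F$ (using \factref{fact:minsum} together with the capacity bound $|Q_i|\le\frac{1}{1-\alpha}\cdot|P_i|$), with \lemref{lem:flow:to:cost:two}, which shows $F\le\frac{1+\gamma_\alpha\alpha}{(1-\alpha)^2}\cdot\OPT$ (using the precision/recall guarantee $|P^*_i|\ge(1-\alpha)\cdot|P_i|$ and \thmref{thm:each:cluster} to control $\sum_{x\in P^*_i}\|x-c_i\|_2^2$), yields the claimed ratio $\frac{1+\gamma_\alpha\alpha}{(1-\alpha)^2}$. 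I would also note that the true error rate $\alpha$ need not be known: one runs the algorithm with geometrically spaced guesses for the error rate and keeps the cheapest output, incurring only a polynomial overhead.

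I do not expect a real obstacle here --- every nontrivial step has already been carried out earlier in the section --- so the only point requiring a little care is checking that the lower-bound constraints in $\calF$ do not destroy total unimodularity of the flow LP. This is standard (lower bounds are eliminated by shifting demands and substituting residual capacities, precisely as in the proof of \corref{cor:integral:sol}), but it is the one place where a careless formulation could admit a fractional optimum, so I would make that reduction explicit before invoking \thmref{thm:unimodular:corner}.
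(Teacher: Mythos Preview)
Your proposal is correct and follows essentially the same approach as the paper: correctness via \corref{cor:aug:correctness}, polynomial runtime via \thmref{thm:each:cluster} for the centers and \thmref{thm:lp:solver} for the LP, and integrality via \thmref{thm:unimodular:corner}. Your treatment is somewhat more explicit than the paper's (particularly regarding the lower-bound constraints and total unimodularity, and the remark about guessing $\alpha$), but the skeleton and the invoked lemmas are identical.
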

\begin{proof}
Correctness follows from \corref{cor:aug:correctness}. 

For the runtime analysis, first observe that the centers $c_1,\ldots,c_k$ can be computed in polynomial time by \thmref{thm:each:cluster}. 
Subsequently, $\calF$ can be written as a linear programming problem with at most $\poly(n)$ constraints and variables. 
Therefore, the desired claim follows by running any polynomial-time linear programming solver, i.e., \thmref{thm:lp:solver} and observing that the output solution induces a valid clustering, by \thmref{thm:unimodular:corner}. 
\end{proof}

\section*{Acknowledgements}
The work was conceptualized while all the authors were visiting the Institute for Emerging CORE Methods in Data Science (EnCORE) supported by the NSF grant 2217058. 
Karthik C.\ S.\ was supported by the National Science Foundation under Grants CCF-2313372 and CCF-2443697, a grant from the Simons Foundation, Grant Number 825876, Awardee Thu D. Nguyen, and partially funded by the Ministry of Education and Science of Bulgaria's support for INSAIT, Sofia University ``St. Kliment Ohridski'' as part of the Bulgarian National Roadmap for Research Infrastructure. Euiwoong Lee was supported in part by NSF grant CCF-2236669 and Google. Yuval Rabani was supported in part by ISF grants 3565-21 and 389-22, and by BSF grant 2023607. Chris Schwiegelshohn was partially supported by the Independent Research Fund
Denmark (DFF) under a Sapere Aude Research Leader grant No 1051-00106B.
Samson Zhou is supported in part by NSF CCF-2335411. 
The work was conducted in part while Samson Zhou was visiting the Simons Institute for the Theory of Computing as part of the Sublinear Algorithms program.

\def\shortbib{0}
\bibliographystyle{alpha}
\bibliography{references}

\end{document}